\newtheorem{definition}{Definition}
\newtheorem{lemma}{Lemma}
\newtheorem{proposition}{Proposition}
\newtheorem{theorem}{Theorem}
\DeclareMathOperator{\myinf}{\mathit{inf}}
\DeclareMathOperator{\mysup}{\mathit{sup}}
\DeclareMathOperator{\body}{\mathit{Body}}
\DeclareMathOperator{\head}{\mathit{Head}}
\DeclareMathOperator{\mynot}{\mathit{not}}
\DeclareMathOperator{\valop}{\mathit{val}}
\newcommand{\val}[2]{\valop_{#1}(#2)}
\DeclareMathOperator{\isint}{\mathit{is\_int}}
\newcommand{\I}{\mathcal{I}}
\newcommand{\J}{\mathcal{J}}
\newcommand{\Iu}{\mathcal{I}^{\uparrow}}
\newcommand{\Ju}{\mathcal{J}^{\uparrow}}
\newcommand{\T}{\mathcal{T}}
\newcommand{\Num}{\mathcal{Z}}
\newcommand{\numeral}[1]{\overline{#1}}
\newcommand{\op}{\circ}
\newcommand{\anthem}{\textsc{anthem}}
\newcommand{\gringo}{\textsc{gringo}}
\newcommand{\smodels}{\textsc{smodels}}
\newcommand{\vampire}{\textsc{vampire}}
\newcommand{\boldh}{\mathbf{h}}
\newcommand{\boldH}{\mathbf{H}}
\newcommand{\boldi}{\mathbf{i}}
\newcommand{\boldp}{\mathbf{p}}
\newcommand{\boldP}{\mathbf{P}}
\newcommand{\boldq}{\mathbf{q}}
\newcommand{\boldt}{\mathbf{t}}
\newcommand{\boldu}{\mathbf{u}}
\newcommand{\boldU}{\mathbf{U}}
\newcommand{\boldv}{\mathbf{v}}
\newcommand{\twodots}{\mathinner {\ldotp \ldotp}}
\newcommand{\eqdef}%
{%
	\mathrel{\vbox{\offinterlineskip\ialign%
	{%
		\hfil##\hfil\cr%
		$\scriptscriptstyle\mathrm{def}$\cr%
		\noalign{\kern1pt}%
		$=$\cr%
		\noalign{\kern-0.1pt}%
	}}}%
}
\newcommand{\Cdf}[1]{\ensuremath{\mathit{Cdef}(#1)}}
\newcommand{\Clark}[1]{\ensuremath{\mathit{Clark}(#1)}}
\newcommand{\SM}{{\rm SM}}
\newcommand{\COMP}{{\rm COMP}}
\newcommand{\grp}[3]{\mathit{gr}^{#1}_{#2}(#3)}
\newcommand{\gr}[2]{\grp{\boldp}{#1}{#2}}
\newcommand{\prop}[1]{{#1}^{\mathtt{prop}}}
\newcommand{\SMstable}{\mbox{\ensuremath{\boldp}-stable}}
\newcommand{\INFstable}{\mbox{{\rm INF}-$\boldp$-stable}}
\newcommand{\Choice}[1]{\mathit{Choice}(#1)}
\newcommand{\At}[1]{#1^{\boldp}}
\newcommand{\unsorted}[1]{{#1}^{\mathtt{us}}}
\newcommand{\equivs}{\equiv_{\mathit{s}}}
\newcommand{\covered}{\mathit{covered}}
\newcommand{\Covered}{\mathit{Covered}}
\newcommand{\incover}{\mathit{in\_cover}}
\newcommand{\PH}{\mathit{PH}}
\newcommand{\In}{\mathit{In}}
\newcommand{\Out}{\mathit{Out}}
\def\I{\mathcal{I}}
\def\ar{\leftarrow}
\def\rar{\rightarrow}
\def\lrar{\leftrightarrow}
\lstdefinelanguage{anthem}
{
	keywords={and, exists, forall, integer, not, or},
	otherkeywords={assume, axiom, backward, forward, input, lemma, output, spec},
	morekeywords={},
	morecomment=[l]{\#},
	sensitive,
}
\lstdefinelanguage{clingo}
{
	keywords={not},
	otherkeywords={},
	morekeywords={},
	morecomment=[l]{\%},
	sensitive,
}
\definecolor{gray}{rgb}{0.33,0.33,0.33}
\theoremstyle{plain}
\newtheorem*{mainlemma}{Main Lemma}
\newtheorem*{mainlemmaio}{Main Lemma for IO-Programs}
\begin{document}

\title[Verifying Tight Logic Programs with \anthem{} and \vampire{}]{Verifying Tight Logic Programs\\with \anthem{} and \vampire{}}
\author[Jorge Fandinno, Vladimir Lifschitz, Patrick Lühne, and Torsten Schaub]%
{%
	JORGE FANDINNO\\
	University of Potsdam, Germany
	\and
	VLADIMIR LIFSCHITZ\\
	University of Texas at Austin, USA
	\and
	PATRICK LÜHNE
	and
	TORSTEN SCHAUB\\
	University of Potsdam, Germany
}

\maketitle

\begin{abstract}
This paper continues the line of research aimed at investigating the relationship between logic programs and first-order theories.
We extend the definition of program completion to programs with input and output in a subset of the input language of the ASP grounder \gringo, study the relationship between stable models and completion in this context, and describe preliminary experiments with the use of two software tools, \anthem{} and \vampire, for verifying the correctness of programs with input and output.
Proofs of theorems are based on a lemma that relates the semantics of programs studied in this paper to stable models of first-order formulas.
This paper is under consideration for acceptance in \emph{Theory and Practice of Logic Programming}.
\end{abstract}

\section{Introduction}
\label{sec:introduction}

This paper continues the line of research aimed at investigating the relationship between logic programs and first-order theories, which goes back to Keith Clark's classical paper on program completion \cite{clark78a}.
We are interested in extending that idea to input languages of answer set programming (ASP) \cite{breitr11a} and in using it for the verification of ASP programs.

These programs are different from logic programs studied by Clark in two ways.
First, their semantics is based on the concept of a stable model \cite{gellif88b}, which is not equivalent to the completion semantics, unless some assumptions about the program are made, such as tightness \cite{fages94a,erdlif03a}.
Second, ASP programs can include symbols for integers and arithmetic operations.
Ways to extend completion to these constructs are proposed in two recent publications \cite{halira17a,lilusc20a}.

It is essential for our purposes that, in ASP, we often deal with programs that accept an input and employ auxiliary predicates.
Consider, for instance, the program
\begin{equation}
	\label{ex1}
	\begin{array}{l}
		\{\incover(1 \twodots n)\},\\
		\ar I \neq J \land \incover(I) \land \incover(J) \land s(X, I) \land s(X, J),\\
		\covered(X) \ar \incover(I) \land s(X, I),\\
		\ar s(X, I) \land \mynot \covered(X),\\
	\end{array}
\end{equation}
which encodes exact covers%
\footnote
{
	\label{ft1}
	An \emph{exact cover} of a collection~$S$ of sets is a subcollection~$S'$ of~$S$ such that each element of the union of all sets in~$S$ belongs to exactly one set in~$S'$.
}
of a finite family of finite sets.
It accepts an input, represented by a placeholder~$n$ and a predicate symbol~$s/2$; $n$ is the number of sets, and the formula $s(X, I)$ expresses that~$X$ belongs to the $I$-th set.
The predicate symbol $\incover/1$ represents the output, and $\covered/1$ is auxiliary.

In this paper, we extend the definition of program completion to programs with input and output in a subset of the input language of the ASP grounder \gringo{}, state a theorem describing the relationship between stable models and completion in this context, and describe preliminary experiments with the use of two software tools, \anthem{} \cite{lilusc18a,lilusc19a} and \vampire{} \cite{kovvor13a}, for verifying the correctness of programs with input and output.
A lemma that may be of independent interest relates the semantics of simple \gringo{} programs to stable models of first-order formulas \cite{feleli11a}.
It shows, in other words, that the approach to the semantics of logic programs advocated by \citeN{lelipa08b} is applicable to the subset of \gringo{} studied here.

\section{Review: Programs}
\label{sec:programs}

The class of programs studied in this paper is the subset of the input language of \gringo{} \cite{gehakalisc15a} discussed in earlier publications by Lifschitz~et~al.~\citeyear{lilusc19a,lilusc20a}.

We assume that three countably infinite sets of symbols are selected: \emph{numerals}, \emph{symbolic constants}, and \emph{program
variables}.
We assume a 1-to-1 correspondence between numerals and integers; the numeral corresponding to an integer~$n$ is denoted by~$\numeral{n}$.

\emph{Program terms} are defined recursively:
\begin{itemize}
	\item Numerals, symbolic constants, program variables, and the symbols $\myinf$ and $\mysup$ are program terms;
	\item if $t_1$, $t_2$ are program terms and $\op$ is one of the \emph{operation names}
		\[
			+ \quad - \quad \times \quad / \quad \backslash \quad \twodots
		\]
		then $(t_1 \op t_2)$ is a program term.
\end{itemize}
If $t$~is a term, then $-t$~is shorthand for $\numeral{0} - t$.

A program term, or another syntactic expression, is \emph{ground} if it does not contain variables.
A ground expression is \emph{precomputed} if it does not contain operation names (in other words, if it is a numeral, a symbolic constant, or one of the symbols $\myinf$ or $\mysup$).

We assume that a total order on precomputed program terms is chosen such that
\begin{itemize}
	\item \strut $\myinf$ is its least element and $\mysup$ its greatest element,
	\item for any integers~$m$ and $n$, $\numeral{m} < \numeral{n}$ iff $m < n$, and
	\item for any integer~$n$ and any symbolic constant~$c$, $\numeral{n} < c$.
\end{itemize}

An \emph{atom} is an expression of the form~$p(\boldt)$, where $p$~is a symbolic constant and~$\boldt$ is a tuple of program terms.
A \emph{literal} is an atom possibly preceded by one or two occurrences of \emph{not}.
A \emph{comparison} is an expression of the form $(t_1 \prec t_2)$, where $t_1$, $t_2$ are program terms and $\prec$ is one of the \emph{comparison symbols}
\begin{align}
	\label{rel}
	= \quad \neq \quad < \quad > \quad \leq \quad \geq
\end{align}

A \emph{rule} is an expression of the form
\begin{align}
	\label{rule}
	\head \ar \body,
\end{align}
where
\begin{itemize}
	\item $\body$ is a conjunction (possibly empty) of literals and comparisons and
	\item $\head$ is either an atom (then, we say that \eqref{rule}~is a \emph{basic rule}) or an atom in braces (then, \eqref{rule}~is a \emph{choice rule}), or empty (then, \eqref{rule}~is a \emph{constraint}).
\end{itemize}

A \emph{program} is a finite set of rules.

For example, \eqref{ex1}~is a program in the sense of this definition, except that the expression $\{\incover(1 \twodots n)\}$, strictly speaking, should be written as $\{\incover(\numeral{1} \twodots n)\} \ar $.

The semantics of programs uses a translation~$\tau$, which turns a program into a set of propositional combinations of precomputed atoms
\cite[Section~3]{lilusc19a}.
For example, the result of applying~$\tau$ to the program
\begin{align}
	\label{ex2}
	\begin{split}
		&\{p(\numeral{1} \twodots \numeral{3})\},\\
		&q(X + \numeral{1}) \ar p(X)
	\end{split}
\end{align}
consists of the ground formulas
\begin{align*}
	\top &\rar (p(\numeral{1}) \lor \neg p(\numeral{1}))
		\land (p(\numeral{2}) \lor \neg p(\numeral{2}))
		\land (p(\numeral{3}) \lor \neg p(\numeral{3})),\\
	p(\numeral{n}) &\rar q(\numeral{n + 1}) \quad \textrm{for all integers $n$, and}\\
	p(r) &\rar \top \quad \textrm{for all precomputed terms~$r$ other than numerals.}
\end{align*}
Using the translation~$\tau$, we define the semantics of programs as follows:
A set of precomputed atoms is a \emph{stable model} of a program~$\Pi$ if it is a stable model (answer set) of $\tau \Pi$ in the sense of Ferraris \citeyear{ferraris05a}.

\section{Review: Representing Programs by First-Order Theories}
\label{sec:formulas}

The translation~$\tau^*$ \cite{lilusc19a} converts a program into a finite set of first-order sentences with variables of two sorts.
Besides program variables, which are meant to range over precomputed program terms, these sentences contain \emph{integer variables}, which are meant to range over numerals.
The second sort is a subsort of the first.

A~\emph{predicate symbol} is a pair~$p/n$, where $p$~is a symbolic constant and $n$~is a nonnegative integer.
About a program or another syntactic expression, we say that a predicate symbol~$p/n$ \emph{occurs} in it if it contains an atom of the form $p(t_1, \dots, t_n)$.

For any program~$\Pi$,~$\tau^* \Pi$ consists of formulas over the two-sorted signature~$\sigma_\Pi$ including
\begin{itemize}
	\item all precomputed terms as object constants; a precomputed constant is assigned the sort \emph{integer} iff it is a numeral;
	\item the symbols $+$, $-$, and $\times$ as binary function constants; their arguments and values have the sort \emph{integer};
	\item all predicate symbols occurring in~$\Pi$ and the comparison symbols other than equality as predicate constants.
\end{itemize}
An atomic formula $(p/n)(t_1, \dots, t_n)$ can be abbreviated as $p(t_1, \dots, t_n)$.
An atomic formula $\prec(t_1, t_2)$, where~$\prec$ is a comparison symbol, can be written as $t_1 \prec t_2$.
\emph{Formulas} are formed from atomic formulas using the propositional connectives
\[
	\bot \textrm{ (``false'')} \quad \land \quad \lor \quad \rar
\]
and the quantifiers $\forall$ and $\exists$ as usual in first-order languages with equality.
We use $\top$~as shorthand for $\bot \rar \bot$, $\neg F$~as shorthand for $F \rar \bot$, and $F \lrar G$ as shorthand for $(F \rar G) \land (G \rar F)$.

Prior to defining~$\tau^*$, we define, for every program term $t$, a formula $\val{t}{Z}$, where $Z$ is a program variable that does not
occur in~$t$.
That formula expresses, informally speaking, that $Z$ is one of the values of~$t$.
The definition is recursive:
\begin{itemize}
	\item if $t$ is a numeral, a symbolic constant, a program variable, $\myinf$, or $\mysup$, then $\val{t}{Z}$ is $Z = t$;
	\item if $t$ is $(t_1 \op t_2)$, where $\op$ is $+$, $-$, or $\times$, then $\val{t}{Z}$ is
		\[
			\exists I J \, (Z = I \op J \land \val{t_1}{I} \land \val{t_2}{J})
		\]
		where $I$, $J$ are fresh integer variables;
	\item if $t$ is $(t_1 \, / \, t_2)$, then $\val{t}{Z}$ is
		\begin{align*}
			\exists I J Q R \, (I = J \times Q + R &\land \val{t_1}{I} \land \val{t_2}{J}\\
				&\land J \neq 0 \land R \geq 0 \land R < J \land Z = Q),
		\end{align*}
		where $I$, $J$, $Q$, $R$ are fresh integer variables;
	\item if $t$ is $(t_1 \backslash t_2)$, then $\val{t}{Z}$ is
		\begin{align*}
			\exists I J Q R \, (I = J \times Q + R &\land \val{t_1}{I} \land \val{t_2}{J}\\
				& \land J \neq 0 \land R \geq 0 \land R < J \land Z = R),
		\end{align*}
		where $I$, $J$, $Q$, $R$ are fresh integer variables;
	\item if $t$ is $(t_1 \twodots t_2)$, then $\val{t}{Z}$ is
		\[
			\exists I J K \, (\val{t_1}{I} \land \val{t_2}{J} \land I \leq K \land K \leq J \land Z = K),
		\]
		where $I$, $J$, $K$ are fresh integer variables.
\end{itemize}
For example, $\val{\numeral{1} \twodots \numeral{3}}{Z}$ is
\[
	\exists I J K \, (I = \numeral{1} \land J = \numeral{3} \land I \leq K \land K \leq J \land Z = K),
\]
and $\val{X + \numeral{1}}{Z}$ is
\[
	\exists I J \, (Z = I + J \land I = X \land J = \numeral{1}).
\]

The other thing to do in preparation for defining $\tau^*$ is to define the translation~$\tau^B$ that will be applied to expressions in the body of the rule:
\begin{itemize}
	\item $\tau^B(p(t_1, \dots, t_k))$ is
		\[
			\exists Z_1 \dots Z_k \, (\val{t_1}{Z_1} \land \cdots \land \val{t_k}{Z_k} \land p(Z_1, \dots, Z_k));
		\]
	\item $\tau^B(\mynot p(t_1, \dots, t_k))$ is
		\[
			\exists Z_1 \dots Z_k \, (\val{t_1}{Z_1} \land \cdots \land \val{t_k}{Z_k} \land \neg p(Z_1, \dots, Z_k));
		\]
	\item $\tau^B(\mynot \mynot p(t_1, \dots, t_k))$ is
		\[
			\exists Z_1 \dots Z_k \, (\val{t_1}{Z_1} \land \cdots \land \val{t_k}{Z_k} \land \neg \neg p(Z_1, \dots, Z_k));
		\]
	\item $\tau^B(t_1 \prec t_2)$ is
		\[
			\exists Z_1 Z_2 \, (\val{t_1}{Z_1} \land \val{t_2}{Z_2} \land Z_1 \prec Z_2);
		\]
\end{itemize}
where each $Z_i$ is a fresh program variable.

Now, we define
\[
	\tau^*(\head \ar B_1 \land \cdots \land B_n)
\]
as the universal closure of the formula
\[
	\tau^B(B_1) \land \cdots \land \tau^B(B_n) \rar H,
\]
where~$H$ is
\begin{itemize}
	\item $\forall Z_1 \dots Z_k \, (\val{t_1}{Z_1} \land \cdots \land \val{t_k}{Z_k}
		\rar p(Z_1, \dots, Z_k))$

		if $\head$ is $p(t_1, \dots, t_k)$;
	\item $\forall Z_1 \dots Z_k \, (\val{t_1}{Z_1} \land \cdots \land \val{t_k}{Z_k}
		\rar p(Z_1, \dots, Z_k) \lor \neg p(Z_1, \dots, Z_k))$

		if $\head$ is $\{p(t_1, \dots, t_k)\}$;
	\item $\bot$ if $\head$ is empty;
\end{itemize}
where $Z_1, \dots, Z_k$ are fresh program variables.

For any program~$\Pi$, $\tau^* \Pi$ stands for the set of formulas~$\tau^* R$ for all rules~$R$ of~$\Pi$.
For example, the result of applying $\tau^*$ to program~\eqref{ex2} is the pair of sentences
\begin{gather*}
	\top \rar \forall Z \, (\val{\numeral{1} \twodots \numeral{3}}{Z} \rar p(Z) \lor \neg p(Z)),\\
	\forall X \, (\exists Z \, (Z = X \land p(Z)) \rar \forall Z \, (\val{X + \numeral{1}}{Z} \rar q(Z))).
\end{gather*}

\section{Main Lemma}
\label{sec:main.lemma}

The lemma stated in this section is used in the proofs of Theorems~\ref{th1} and \ref{th2}, which are stated later in Section~\ref{sec:program.completion}.
It relates the stable models of a program~$\Pi$ to the stable models of the first-order theory~$\tau^* \Pi$.

Recall that, for any first-order sentence~$F$ and any list~$\boldp$ of predicate symbols (other than equality), $\SM_\boldp[F]$ stands for the second-order sentence
\[
	F \wedge \neg \exists \boldu((\boldu < \boldp) \wedge F^*(\boldu))
\]
\cite[Section~2.3]{feleli11a}.
The definition of~$\SM_\boldp$ in that paper is given for formulas with one sort of variables, but its extension to many-sorted formulas is straightforward, and we will apply it here to the class of formulas defined in Section~\ref{sec:formulas}.

The models of $\SM_\boldp[F]$ are called the \emph{{$\boldp$-stable} models} of~$F$.
For a set~$\Gamma$ of first-order sentences, the \emph{{$\boldp$-stable} models} of~$\Gamma$ are the $\boldp$-stable models of the conjunction of all formulas in~$\Gamma$.

For any program~$\Pi$ and any set~$\I$ of precomputed atoms, $\Iu$~stands for the interpretation of the signature~$\sigma_\Pi$ such that
\begin{itemize}
	\item the universe of the first sort of~$\Iu$ is the set of all precomputed terms;
	\item the universe of the second sort of~$\Iu$ is the set of all numerals;
	\item $\Iu$ interprets every precomputed term~$t$ as~$t$;
	\item $\Iu$ interprets $\numeral{m} + \numeral{n}$ as $\numeral{m + n}$, and similarly for subtraction and multiplication;
	\item $\Iu$ interprets every precomputed atom ${p(t_1, \dotsc, t_n)}$ as true iff it belongs to~$\I$;
	\item $\Iu$ interprets every atomic sentence~${t_1 \prec t_2}$, where~$t_1$ and~$t_2$ are precomputed terms, as true iff the relation~$\prec$ holds for the pair~${(t_1, t_2)}$.
\end{itemize}

\begin{mainlemma}
Let $\Pi$~be a program and let $\boldp$~be the list of all predicate symbols occurring in~$\Pi$ other than the comparison symbols.
A set~$\I$ of precomputed atoms is a stable model of~$\Pi$ iff $\Iu$~is a \SMstable{} model of~$\tau^* \Pi$.
\end{mainlemma}

Proofs are relegated to \ref{appendix:proofs}.

\section{Programs with Input and Output}
\label{sec:io-programs}

\subsection{Syntax}
\label{ssec:syntax.spec}

A \emph{program with input and output}, or an~\emph{io-program}, is a quadruple
\begin{align}
	\label{iop}
	(\Pi, \PH, \In, \Out),
\end{align}
where
\begin{itemize}
	\item $\Pi$ is a finite set of rules;
	\item $\PH$~is a finite set of symbolic constants, called the \emph{placeholders} of~\eqref{iop};
	\item $\In$~is a finite set of predicate symbols that do not occur in the heads of rules of~$\Pi$; they are called the \emph{input symbols} of~\eqref{iop};
	\item $\Out$~is a finite set of predicate symbols that is disjoint from~$\In$; they are called the \emph{output symbols} of~\eqref{iop}.
\end{itemize}
The input and output symbols of an io-program are collectively called its \emph{public symbols}; the predicate symbols occurring in the rules that are not public are called \emph{private}.
We also distinguish between public and private atoms, depending on whether their predicate symbols are public or private.
Every set~$\I$ of public atoms is the union of two disjoint sets---the set~$\I^{in}$ of atoms in~$\I$ that begin with an input symbol and the set~$\I^{out}$ of atoms that begin with an output symbol.

As an example, consider the io-program $\Omega_1$ with rules~\eqref{ex1}, the placeholder~$n$, the input symbol~$s/2$, and the output symbol~$\incover/1$.
The only private symbol of~$\Omega_1$ is $\covered/1$.

By~$\sigma_\Omega$, we denote the part of the signature $\sigma_\Pi$ (Section~\ref{sec:formulas}) obtained from it by removing all private symbols.
Formulas of that signature will be used in writing specifications.

Programs can be viewed as a special case of io-programs if we agree to identify a program~$\Pi$ with the io-program~\eqref{iop} in which $\PH$ and $\In$ are empty and $\Out$~is the set of all predicate symbols that occur in the rules of~$\Pi$.

\subsection{Semantics}

A~\emph{valuation} for an io-program~$\Omega$ is a function defined on the set of placeholders of~$\Omega$ such that its values are precomputed terms different from placeholders.
An~\emph{input} for~$\Omega$ is a pair $(\boldv, \boldi)$, where~$\boldv$ is a valuation and~$\boldi$ is a set of precomputed atoms that do not contain placeholders, such that the predicate symbol of each atom in~$\boldi$ is an input symbol of~$\Omega$.

For an io-program~$\Omega$ and an input~$(\boldv, \boldi)$, we denote by $\Omega(\boldv, \boldi)$ the program consisting of
\begin{itemize}
	\item the rules obtained from the rules of~$\Omega$ by replacing every occurrence of every placeholder~$t$ by the term~$\boldv(t)$ and
	\item the facts $\boldi$.
\end{itemize}
About a set of precomputed atoms, we say that it is an \emph{io-model} of an io-program~$\Omega$ for an input~$(\boldv, \boldi)$ if it is the set of all public atoms of some stable model of~$\Omega(\boldv, \boldi)$.

For example, we can define an input $(\boldv_1, \boldi_1)$ for~$\Omega_1$ by the conditions
\[
	{\boldv_1}(n) = \numeral{3}, \quad
	\boldi_1 =
		\{
			s(a, \numeral{1}),
			s(b, \numeral{1}),
			s(b, \numeral{2}),
			s(c, \numeral{2}),
			s(c, \numeral{3})
		\}.
\]
With this input, $\Omega_1$~encodes the exact covers of the collection $\{\{a, b\}, \{b, c\}, \{c\}\}$.
The program $\Omega_1(\boldv_1, \boldi_1)$ consists of the rules
\begin{align*}
	&\{\incover(\numeral{1} \twodots \numeral{3})\},\\
	&\ar I \neq J \land \incover(I) \land \incover(J) \land s(X, I) \land s(X, J),\\
	&\covered(X) \ar \incover(I) \land s(X, I),\\
	&\ar s(X, I)\land \mynot \covered(X),\\
	&s(a, \numeral{1}),\ \dots,\ s(c, \numeral{3}).
\end{align*}
It has one stable model,
\[
	\{
		s(a, \numeral{1}), \dots, s(c, \numeral{3}),
		\incover(\numeral{1}), \incover(\numeral{3}),
		\covered(a), \covered(b), \covered(c)
	\},
\]
so that the only io-model of~$\Omega$ is
$\{s(a, \numeral{1}), \dots, s(c, \numeral{3}), \incover(\numeral{1}), \incover(\numeral{3})\}$.

It is clear that atoms in an io-model do not contain private symbols.
Furthermore, if $\I$~is an io-model of~$\Omega$ for an input~$(\boldv, \boldi)$, then $\I^{in} = \boldi$.
This is clear from the fact that the only rules of $\Omega(\boldv, \boldi)$ containing input symbols in the head are the facts~$\boldi$.

IO-models of io-programs also have a characterization in terms of first-order formulas similar to the characterization of stable models in the Main Lemma above (see Appendix~\ref{sec:main-lemma.io-programs}).
The proof of this characterization is based on the aforementioned Main Lemma and results from~\cite{felelipa09a} and~\cite{cafali20a}.
This is later used in the proofs of Theorems~\ref{th1} and~\ref{th2} below.

\subsection{Specifications}
\label{ssec:specifications}

In this paper, we are interested in io-programs written in response to formal specifications.
A specification, as we understand this term, includes a list~$\PH$ of placeholders, a list~$\In$ of input symbols, and a list~$\Out$ of output symbols that a future io-program~$\Omega$ is required to have.
(The choice of private symbols of~$\Omega$ is left to the discretion of the programmer.)
That part of the specification determines the set of valuations and inputs of~$\Omega$ as well as the signature~$\sigma_\Omega$.
A specification also includes two lists of sentences in first-order logic over that signature: a list~$A$ of \emph{assumptions} and a list~$S$ of \emph{specs}.
The sentences in~$A$ do not contain output symbols.
They are intended to describe the class of inputs that the author of the specification is interested in.
The sentences in~$S$ are meant to describe the io-models that the future io-program is required to have for these inputs.

For example, a specification for the exact cover problem (see Footnote~\ref{ft1}) may include the assumptions
\begin{align}
	\label{as.cover}
	\begin{split}
		&\exists N \, (n = N \land N \geq \numeral{0}),\\
		&\forall X Y \, (s(X, Y) \rar \exists I \, (Y = I \land I \geq \numeral{1} \land I \leq n)),
	\end{split}
\end{align}
where $N$ and $I$ are integer variables.
They tell us that a ``good'' input describes the length and composition of a list of sets of precomputed atoms.
Its specs may look like this:
\begin{align}
	\label{eq:example-cover-specs}
	\begin{split}
		&\forall Y \, (\incover(Y) \rar
			\exists I \, (Y = I \land I \geq \numeral{1} \land I \leq n)),\\
		&\forall X \, (\exists Y \, s(X, Y) \rar \exists Y (s(X, Y) \land \incover(Y))),\\
		&\forall Y Z \, (\exists X \, (s(X, Y) \land s(X, Z))
			\land \incover(Y) \land \incover(Z) \rar Y = Z).
	\end{split}
\end{align}
These formulas describe the concept of an exact cover.

To give a precise definition of the semantics of specifications, we need to generalize the process of transforming a set~$\I$ of precomputed atoms into the interpretation~$\Iu$ defined in Section~\ref{sec:main.lemma}.
This generalization takes into account the fact that placeholders may be ``reinterpreted'' by a valuation.
For any set~$\I$ of precomputed public atoms that do not contain placeholders and any valuation~$\boldv$, we denote by $\I^\boldv$ the interpretation of the signature~$\sigma_\Omega$ such that
\begin{itemize}
	\item the universe of the first sort of~$\I^\boldv$ is the set of all precomputed terms;
	\item the universe of the second sort of~$\I^\boldv$ is the set of all numerals;
	\item $\I^\boldv$ interprets every placeholder~$t$ as~$\boldv(t)$ and every other precomputed term~$t$ as~$t$;
	\item $\I^\boldv$ interprets $\numeral{m} + \numeral{n}$ as $\numeral{m + n}$, and similarly for subtraction and multiplication;
	\item $\I^\boldv$ interprets every precomputed atom ${p(t_1, \dotsc, t_n)}$ that does not contain placeholders as true iff it belongs to~$\I$;
	\item $\I^\boldv$ interprets every atomic sentence~${t_1 \prec t_2}$, where~$t_1$ and~$t_2$ are precomputed terms different from placeholders, as true iff the relation~$\prec$ holds for the pair~${(t_1, t_2)}$.
\end{itemize}
In the exact cover example, $\I^\boldv$~interprets the placeholder~$n$ as~$\boldv(n)$.
This interpretation satisfies the first of assumptions~\eqref{as.cover} iff~$\boldv(n)$ is a nonnegative numeral.

We say that~$\Omega$ \emph{implements} the specification with assumptions~$A$ and specs~$S$ if, for every valuation~$\boldv$ and every set~$\I$ of precomputed public atoms without placeholders such that $\I^\boldv$ satisfies~$A$,
\begin{center}
	$\I$ is an io-model of~$\Omega$ for the input $(\boldv, \I^{in})$ iff $\I^\boldv$~satisfies~$S$.
\end{center}

In the next section, we show that this condition can be sometimes established by proving a sentence in a first-order theory.
This is how \anthem{} works.

\section{Completion and Tightness}
\label{sec:program.completion}

\subsection{Completed Definitions}
\label{ssec:compdef}

The \emph{definition} of a predicate symbol~$p/n$ in an io-program~$\Omega$ is the set of all rules of~$\Omega$ that have the forms
\begin{align}
	\label{def1}
	p(t_1, \dots, t_n) \leftarrow \body
\end{align}
and
\begin{align}
	\label{def2}
	\{p(t_1, \dots, t_n)\} \leftarrow \body.
\end{align}
If the definition of~$p/n$ in~$\Omega$ consists of rules $R_1, \dots, R_k$, then the \emph{formula representations} $F_1, \dots, F_k$ of these rules are constructed as follows.
Take fresh program variables $V_1, \dots, V_n$.
If~$R_i$~is~\eqref{def1}, then $F_i$~is the formula
\[
	\tau^B(\body) \land \val{t_1}{V_1} \land \cdots \land \val{t_n}{V_n}
\]
(see Section~\ref{sec:formulas} for the definitions of $\tau^B$ and $\valop$).
If~$R_i$~is~\eqref{def2}, then $F_i$~is the formula
\begin{align}
	\label{fc}
	\tau^B(\body) \land p(V_1, \dots, V_n) \land \val{t_1}{V_1} \land \cdots \land \val{t_n}{V_n}.
\end{align}

We are ready to define ``completed definitions'' of predicate symbols.
Completed definitions are generally second-order formulas---the private symbols of~$\Omega$ are represented in them by predicate variables.

Let~$\Omega$ be an io-program with private symbols $p_1/n_1, \dots, p_l/n_l$.
Choose pairwise distinct predicate variables $P_1, \dots, P_l$ and take a predicate symbol~$p/n$ that occurs in~$\Omega$ and is different from its input symbols.
The \emph{completed definition} of~$p/n$ in~$\Omega$ is obtained from the formula
\begin{align}
	\label{eq:completed.definition}
	\forall V_1 \ldots V_n \, \left(p(V_1, \dots, V_n) \lrar \bigvee_{i = 1}^k \exists \boldU_i \, F_i\right),
\end{align}
where $\boldU_i$~is the list of all variables occurring in rule~$R_i$ after substituting $P_1, \dots, P_l$ for $p_1/n_1, \dots, p_l/n_l$.

For example, the completed definition of $\incover/1$ in~$\Omega_1$ is
\begin{align}
	\label{def.incover}
	\begin{split}
		\forall V \, (&\incover(V)\\
			&\lrar \top \land \incover(V) \land
				\exists I J K \, (I = \numeral{1} \land J = n \land I \leq K \land K \leq J \land V = K)).
	\end{split}
\end{align}
It can be equivalently rewritten as
\[
	\forall V \, (\incover(V)
		\rar \exists I J K \, (I = \numeral{1} \land J = n \land I \leq K \land K \leq J \land V = K)).
\]

\subsection{Completion}
\label{ssec:completion}

For a constraint $\ar \body$, its \emph{formula representation} is defined as the formula obtained from the universal closure of $\neg \tau^B(\body)$ by substituting $P_1, \dots, P_l$ for $p_1/n_1, \dots, p_l/n_l$.
For example, the formula representation of the second constraint of~$\Omega_1$ is obtained from the formula
\[
	\forall X I \, \neg \tau^B(s(X, I) \land \mynot \covered(X))
\]
by substituting a predicate variable $\Covered$ for the predicate constant $\covered$.
It can be equivalently rewritten as
\begin{align}
	\label{def.constraint}
	\forall X I \, \neg (s(X, I) \land \neg \Covered(X)).
\end{align}

The \emph{completion} $\COMP(\Omega)$ of an io-program~$\Omega$ is the sentence $\exists P_1 \ldots P_l \, F$, where $P_1, \dots, P_l$ are the predicate variables corresponding to the private symbols of~$\Omega$, as described above, and~$F$ is the conjunction of
\begin{itemize}
	\item the completed definitions of all predicate symbols occurring in~$\Omega$ other than its input symbols and
	\item the formula representations of all constraints in~$\Omega$.
\end{itemize}
Note that this sentence does not contain any of the private symbols of~$\Omega$.
In other words, it is a second-order sentence over the signature~$\sigma_\Omega$ (Section~\ref{ssec:syntax.spec}).

For example, the predicate completion of~$\Omega_1$ is
\[
	\exists \Covered \, (F_1 \land F_2 \land F_3 \land F_4),
\]
where~$F_1$ is the completed definition~\eqref{def.incover} of $\incover$, $F_2$ is the completed definition of $\covered$, and $F_3$, $F_4$ are the formula representations of the two constraints.

\begin{theorem}
	\label{th1}
	If~$\I$ is an io-model of an io-program~$\Omega$ for an input~$(\boldv, \boldi)$, then the interpretation~$\I^\boldv$ satisfies $\COMP(\Omega)$.
\end{theorem}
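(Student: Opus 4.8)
The plan is to derive the conclusion from the Main Lemma together with the tightness-free direction of the stable-model/completion correspondence, namely that every $\boldp$-stable model of a formula is a model of its completion. Since $\I$ is an io-model, fix a stable model $\J$ of $\Omega(\boldv,\boldi)$ whose set of public atoms is $\I$. Let $\boldp$ be the list of all predicate symbols occurring in $\Omega(\boldv,\boldi)$. By the Main Lemma, $\Ju$ is a $\boldp$-stable model of $\tau^*(\Omega(\boldv,\boldi))$. Satisfying $\tau^*(\Omega(\boldv,\boldi))$ already yields the \emph{if} directions, i.e.\ the implications $\bigvee_i \exists \boldU_i F_i \rightarrow p(V_1,\dots,V_n)$, of all completed definitions, as well as all constraint formulas $\forall\,\neg\tau^B(\body)$ of that program; and the minimality clause $\neg\exists\boldu(\boldu<\boldp \land F^*(\boldu))$ in $\SM_\boldp$ forces each true atom to be supported by a rule, which is exactly the \emph{only if} directions $p(V_1,\dots,V_n)\rightarrow\bigvee_i \exists \boldU_i F_i$. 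Hence $\Ju$ satisfies the full completed definition $p(V_1,\dots,V_n)\lrar\bigvee_i\exists\boldU_i F_i$ of every predicate $p/n$ occurring in $\Omega(\boldv,\boldi)$ together with the formula representation of every constraint of $\Omega(\boldv,\boldi)$.

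It remains to carry this information over to $\I^\boldv$ and $\COMP(\Omega)$, bridging the two differences between $\Omega(\boldv,\boldi)$ and $\Omega$: placeholders are replaced by their $\boldv$-values, and the private symbols $p_1/n_1,\dots,p_l/n_l$ of $\Omega$ become the predicate variables $P_1,\dots,P_l$ of $\COMP(\Omega)$. For the latter I would take, as witness for each $P_j$, the extension of $p_j$ in $\J$. The bridge is a substitution lemma, proved by a straightforward induction on formulas: for every sentence $G$ over $\sigma_\Omega$ in which $P_1,\dots,P_l$ replace $p_1,\dots,p_l$, the interpretation $\I^\boldv$, with each $P_j$ assigned the extension of $p_j$ in $\J$, satisfies $G$ iff $\Ju$ satisfies the sentence obtained from $G$ by replacing each $P_j$ with $p_j$ and every placeholder $t$ with $\boldv(t)$. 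This is immediate from the definitions of $\I^\boldv$ and $\Ju$: the two interpretations share both universes and the readings of $+$, $-$, $\times$ and of the comparison symbols; $\I$ is by construction the public part of $\J$; $\I^\boldv$ interprets each placeholder $t$ as $\boldv(t)$; and $\Ju$ interprets each $p_j$ precisely as the relation assigned to $P_j$.

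Finally I would match conjuncts. For every output or private symbol of $\Omega$, its completed definition is assembled from the rules of $\Omega$ by the same $\tau^B$ and $\valop$ recipe used for $\Omega(\boldv,\boldi)$, so replacing placeholders by $\boldv$ turns it into the completed definition of the same symbol in $\Omega(\boldv,\boldi)$; likewise for the formula representation of each constraint. The completed definitions of $\Omega(\boldv,\boldi)$ that have no analogue in $\COMP(\Omega)$ are those of the input symbols, which stem only from the facts $\boldi$; omitting them does no harm, since we only need $\I^\boldv$ to satisfy a conjunction of the surviving formulas. Applying the substitution lemma to each conjunct, $\I^\boldv$ together with the witnesses $P_j$ satisfies the matrix $F$ of $\COMP(\Omega)$, whence $\I^\boldv\models\exists P_1\dots P_l\,F$, that is, $\I^\boldv\models\COMP(\Omega)$.

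The step I expect to be the main obstacle is the bookkeeping underlying the last two paragraphs: checking that the formula representations $F_i$ of the rules of $\Omega$ match, after the $\boldv$- and $P_j/p_j$-substitutions, exactly what $\tau^*$ produces for the corresponding rules of $\Omega(\boldv,\boldi)$; and confirming that enlarging the intensional list to $\boldp$—which, unlike $\COMP(\Omega)$, also treats the input symbols as intensional—does not alter the completed definitions of the non-input symbols, which holds because the definition of a symbol collects only the rules with that symbol in the head. The appeal to the fact that stable models are models of the completion is classical, but here it must be read in the two-sorted, second-order setting with private symbols existentially quantified; isolating it as the statement that a $\boldp$-stable model of $\tau^*(\Omega(\boldv,\boldi))$ validates each equivalence $p(V_1,\dots,V_n)\lrar\bigvee_i\exists\boldU_i F_i$ keeps the argument clean.
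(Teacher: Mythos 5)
Your proof is correct and rests on the same two pillars as the paper's argument: the Main Lemma and the classical fact that $\boldp$-stable models satisfy the completion (Theorem~10 of Ferraris, Lee, and Lifschitz, lifted to two sorts in the paper as Proposition~\ref{prop:completion.if}). The decomposition, however, is genuinely different. The paper first establishes a Main Lemma for IO-Programs, characterizing io-models of~$\Omega$ as models of $\exists\boldH\,\SM_\boldp[\tau^*\Pi]^\boldh_\boldH$; proving that lemma requires the splitting theorem (Proposition~\ref{prop:splitting}) to peel the input facts~$\boldi$ off the intensional list, and it does the placeholder and private-symbol bookkeeping once and for all. It then rewrites $\tau^*\Pi$ into Clark normal form (Lemmas~\ref{lem:df.equivalence} and~\ref{lem:clark.equivalence}) before applying the completion direction. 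You instead apply the plain Main Lemma to the instantiated program $\Omega(\boldv,\boldi)$ with \emph{all} of its predicate symbols intensional, and carry the result over to $\I^\boldv$ and to the existentially quantified $P_j$ by a hand-rolled substitution lemma, simply discarding the completed definitions of the input symbols at the end. Since Theorem~\ref{th1} is only an implication, discarding conjuncts is harmless and the splitting theorem is never needed---a real economy. What the paper's heavier route buys is reuse: the Main Lemma for IO-Programs and the Clark-form lemmas are stated as equivalences precisely because they are needed again, verbatim, for the biconditional Theorem~\ref{th2}, where your one-directional shortcuts would not suffice. The one step you should make explicit rather than gestural is supportedness: ``the minimality clause forces each true atom to be supported'' is exactly the cited classical theorem and presupposes that the theory has first been regrouped into Clark normal form (one implication per predicate symbol, with fresh variables in the head); your conjunct-matching paragraph performs this regrouping implicitly, so the argument closes, but the appeal deserves to be named as such.
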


\subsection{Tight Programs}

The \emph{predicate dependency graph} of an io-program $\Omega$ is the directed graph that
\begin{itemize}
	\item has the predicate symbols occurring in rules of~$\Omega$ as its vertices and
	\item has an edge from $p/n$ to~$q/m$ if there is a rule~$R$ in~$\Omega$ such
that~$p/n$ occurs in the head of~$R$ and~$q/m$ occurs in the body of~$R$.
\end{itemize}
The edge from~$p/n$ to~$q/m$ in the predicate dependency graph is \emph{positive} if there is a rule~$R$ in~$\Omega$ such that $p/n$~occurs in the head of~$R$ and~$q/m$ occurs in an atom in the body of~$R$ that is not preceded by $\mynot$.

An io-program is \emph{tight} if its predicate dependency graph has no cycles consisting of positive edges.

\begin{theorem}
	\label{th2}
	For any tight io-program~$\Omega$ and any set~$\I$ of precomputed public atoms without placeholders,~$\I$ is an io-model of~$\Omega$ for an input~$(\boldv, \boldi)$ iff $\I^\boldv$ satisfies $\COMP(\Omega)$ and $\I^{in} = \boldi$.
\end{theorem}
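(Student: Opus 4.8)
The forward direction is immediate from results already at hand. If $\I$ is an io-model of~$\Omega$ for $(\boldv, \boldi)$, then Theorem~\ref{th1} gives that $\I^\boldv$ satisfies $\COMP(\Omega)$, and the observation recorded just after the definition of io-models gives $\I^{in} = \boldi$. Tightness plays no role here; it is needed only for the converse. So the plan is to concentrate on the backward direction: assuming that $\I^\boldv$ satisfies $\COMP(\Omega)$ and that $\I^{in} = \boldi$, I would show that $\I$ is an io-model of~$\Omega$ for $(\boldv, \boldi)$.

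The engine for the converse is the characterization of io-models in terms of first-order stable models announced in Section~\ref{sec:io-programs} (the Main Lemma for IO-Programs in the appendix). That lemma expresses ``$\I$ is an io-model of~$\Omega$ for $(\boldv, \boldi)$'' as the conjunction of $\I^{in} = \boldi$ with a condition of the form ``$\I^\boldv$ is a $\boldq$-stable model of $\tau^* \Pi$'', where $\boldq$ is the list of predicate symbols of~$\Omega$ other than the input symbols, and where the private symbols among~$\boldq$ are handled by existential second-order quantification---exactly the quantification $\exists P_1 \ldots P_l$ appearing in $\COMP(\Omega)$. Since the input symbols occur in $\Omega(\boldv, \boldi)$ only in the facts~$\boldi$, stabilizing over the non-input predicates~$\boldq$ while fixing the inputs to~$\boldi$ is the correct translation of the io-semantics. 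Thus it suffices to prove that, under the hypothesis $\I^{in} = \boldi$, satisfaction of $\COMP(\Omega)$ by~$\I^\boldv$ is equivalent to the $\boldq$-stable-model condition supplied by that lemma.

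This last equivalence is where tightness enters, through the first-order analogue of Fages' theorem: for a theory that is tight with respect to a list of intensional predicates, the operator $\SM$ relative to those predicates is equivalent to first-order completion. Concretely, I would verify that the predicate dependency graph of~$\Omega$ having no positive cycles implies that $\tau^* \Pi$ is tight with respect to~$\boldq$, so that $\SM_\boldq[\tau^* \Pi]$ is equivalent to $\COMP_\boldq[\tau^* \Pi]$, the conjunction of the completed definitions and constraint representations that make up~$\COMP(\Omega)$ before the private predicates are projected out. Existentially quantifying the private predicates on both sides then turns the $\boldq$-stable condition into $\COMP(\Omega)$, completing the chain; the witness for $\exists P_1 \ldots P_l$ supplies precisely the private atoms needed to extend~$\I$ to a full stable model of $\Omega(\boldv, \boldi)$. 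The direction I actually need---from a model of completion to a $\boldq$-stable model---uses tightness essentially: without it, a model of completion need only be supported, not stable, and the argument breaks.

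The main obstacle is matching the \emph{syntactic} tightness of~$\Omega$ (no positive cycle in the predicate dependency graph) with the \emph{semantic} tightness condition demanded by the first-order Fages theorem as applied to $\tau^* \Pi$. This requires checking that the translations $\tau^B$ and $\valop$ introduce no spurious positive dependencies: the auxiliary integer and value subformulas they generate involve only comparison symbols and arithmetic, which do not belong to~$\boldq$, so a positive dependency path through $\tau^* \Pi$ corresponds to a positive edge of the predicate dependency graph and conversely. A secondary technical point, mediated by the existential second-order quantifier, is the passage between~$\I^\boldv$ and the interpretation $\Iu$ of the grounded program $\Omega(\boldv, \boldi)$ used in the Main Lemma; here the two-sorted setting and the reinterpretation of placeholders by~$\boldv$ require care, but are routine once the Main Lemma for IO-Programs is in place.
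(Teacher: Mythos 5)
Your proposal is correct and follows essentially the same route as the paper: reduce via the Main Lemma for IO-Programs to a stable-model condition on $\tau^*\Pi$ with the private symbols existentially quantified, then use tightness to invoke the first-order Fages-style equivalence between $\SM$ and completion (after regrouping $\tau^*\Pi$ into Clark normal form), checking that the dependency graph of the translated theory matches the predicate dependency graph of~$\Omega$. The only cosmetic difference is that you dispatch the forward direction by citing Theorem~\ref{th1}, whereas the paper proves both directions in a single chain of equivalences; the underlying lemmas are the same.
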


This theorem shows that, in application to tight io-programs, the semantics of specifications (Section~\ref{ssec:specifications}) can be characterized in terms of completion:
A tight io-program~$\Omega$ implements the specification with assumptions~$A$ and specs~$S$ iff, for every valuation~$\boldv$ and every set~$\I$ of precomputed public atoms without placeholders, the interpretation~$\I^\boldv$ satisfies the sentence
\begin{align}
	\label{imp1}
	A \rar (\COMP(\Omega) \lrar S).
\end{align}
(We identify a list of sentences with the conjunction of its members.)

An interpretation of the signature~$\sigma_\Omega$ is \emph{standard} if it can be represented in the form~$\I^\boldv$ for some set~$\I$ of precomputed public atoms without placeholders and some valuation~$\boldv$.
We can then say that a tight io-program~$\Omega$ implements the specification with assumptions~$A$ and specs~$S$ iff sentence~\eqref{imp1} is satisfied by all standard interpretations.

The automation of reasoning about formulas of form~\eqref{imp1} is difficult because $\COMP(\Omega)$ has, generally, bound predicate
variables.
We will now discuss a special case when reasoning about such formulas can be reduced to first-order reasoning.

\subsection{Private Recursion}
\label{ssec:private-recursion}

Recall that the completion of an io-program~$\Omega$ is a sentence of the form $\exists \boldP \, F$, where $\boldP$ is a list $P_1, \dots, P_l$ of predicate variables (Section~\ref{ssec:completion}).
That list has the same length~$l$ as the list~$\boldp$ of the private symbols $p_1/n_1, \dots, p_l/n_l$ of~$\Omega$.
Formula~$F$ is a conjunction that includes, among its conjunctive terms, the completed definitions $F_1(\boldP), \dots, F_l(\boldP)$ of $p_1/n_1, \dots, p_l/n_l$.
Denote the conjunction of all other conjunctive terms of~$F$ by~$F'(\boldP)$.
Then, the completion of~$\Omega$ can be written as
\begin{gather}
	\exists \boldP \, (F_1(\boldP) \land \cdots \land F_l(\boldP) \land F'(\boldP)).
	\label{eq:private.recursion.existencial}
\end{gather}

About an io-program~$\Omega$, we say that it \emph{uses private recursion} if
\begin{itemize}
	\item its predicate dependency graph has a cycle such that every vertex in it is a private symbol or
	\item it includes a choice rule with a private symbol in the head.
\end{itemize}

\begin{theorem}
	\label{th3}
	If an io-program~$\Omega$ does not use private recursion, then its completion is equivalent to
	\begin{gather}
		\forall \boldP \, (F_1(\boldP) \land \cdots \land F_l(\boldP) \rar F'(\boldP)).
		\label{eq:private.recursion.universal}
	\end{gather}
\end{theorem}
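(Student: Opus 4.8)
The plan is to reduce the claimed equivalence to a single model-theoretic fact: for every interpretation $\mathcal{M}$ of the signature $\sigma_\Omega$, the system of completed definitions $F_1(\boldP) \land \cdots \land F_l(\boldP)$ of the private symbols has exactly one solution for the predicate variables $\boldP = P_1, \dots, P_l$. Write $\boldP^\ast = \boldP^\ast(\mathcal{M})$ for this solution. Granting both its existence and its uniqueness, the argument is immediate: the existential sentence \eqref{eq:private.recursion.existencial} holds in $\mathcal{M}$ iff the unique witness $\boldP^\ast$ satisfies $F'$, i.e.\ iff $\mathcal{M} \models F'(\boldP^\ast)$; and the universal sentence \eqref{eq:private.recursion.universal} holds in $\mathcal{M}$ iff the only tuple satisfying $F_1 \land \cdots \land F_l$, namely $\boldP^\ast$, satisfies $F'$, i.e.\ again iff $\mathcal{M} \models F'(\boldP^\ast)$. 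Thus both sentences are equivalent to $F'(\boldP^\ast)$ in every interpretation, hence to each other. It is the existence of $\boldP^\ast$ that prevents the universal form from being vacuously true while the existential form is false, and uniqueness is what pins both down to the same condition.

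The structural input comes entirely from the hypothesis that $\Omega$ does not use private recursion, which I would unpack into two facts. First, since no choice rule has a private symbol in its head, every rule defining a private symbol $p_i$ is a basic rule, so the formula representation \eqref{fc}---with its extra conjunct $p(V_1,\dots,V_n)$---never occurs; hence the completed definition of $p_i$ has the form $\forall V_1 \cdots V_{n_i}\,(P_i(V_1,\dots,V_{n_i}) \lrar G_i(\boldP))$ with $G_i$ a disjunction built only from bodies of basic rules. Second, since the predicate dependency graph has no cycle all of whose vertices are private---in particular no self-loop at a private symbol---the subgraph induced on the private symbols is acyclic, and $G_i$ does not contain $P_i$. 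Topologically sorting that acyclic subgraph, I may assume the private symbols are enumerated so that $G_i$, among the private predicate variables, mentions only $P_1, \dots, P_{i-1}$; the input and output symbols occurring in $G_i$ are interpreted by $\mathcal{M}$ and play the role of fixed parameters.

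With this stratification, the unique-model fact follows by induction along the topological order. Assuming $P_1^\ast, \dots, P_{i-1}^\ast$ have been fixed, the formula obtained from $G_i$ by substituting these relations for $P_1, \dots, P_{i-1}$ contains no remaining predicate variables, so it has a determinate truth value in $\mathcal{M}$ for each tuple of arguments; setting $P_i^\ast$ to be the relation $\{(a_1,\dots,a_{n_i}) : (\mathcal{M}, P_1^\ast, \dots, P_{i-1}^\ast) \models G_i(a_1,\dots,a_{n_i})\}$ then yields the one and only relation validating the $i$-th completed definition given the earlier ones. Running the induction through $i = l$ produces $\boldP^\ast$ and establishes both its existence and its uniqueness, for an arbitrary $\mathcal{M}$. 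The completed definitions of the output symbols and the representations of constraints, which together constitute $F'$, are never solved during this construction---they are only evaluated at $\boldP^\ast$ afterward---so any dependency between private and public symbols is harmless.

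I expect the main obstacle to be the second paragraph: making precise that ``does not use private recursion'' delivers exactly the two properties needed---the absence of the self-referential conjunct in each private completed definition (from ruling out private choice rules and private self-loops) and the existence of a topological order on the private symbols (from acyclicity of the induced subgraph)---and then verifying that the recursive explicit definition is well defined over \emph{every} interpretation of $\sigma_\Omega$, not merely the standard ones. Once the unique-model fact is stated cleanly, the reduction in the first paragraph is routine.
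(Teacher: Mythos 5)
Your proposal is correct and takes essentially the same route as the paper: the hypothesis of no private recursion is unpacked into the absence of the self-referential conjunct (no private choice rules, no private self-loops) plus a topological order on the private symbols, so that each completed definition $F_i$ is an explicit definition of $P_i$ in terms of $P_1,\dots,P_{i-1}$ and the public symbols, whence the existential and universal forms coincide. The only difference is presentational --- the paper packages the final step as a syntactic second-order induction (Lemma~\ref{lem:induction:thm3}, eliminating one quantifier at a time via a predicate expression witnessing the unique solution), whereas you argue semantically that the system $F_1\land\cdots\land F_l$ has exactly one solution $\boldP^\ast$ in every interpretation and that both sentences reduce to $F'(\boldP^\ast)$.
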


From this theorem, we can conclude that, if~$\Omega$ does not use private recursion, sentence~\eqref{imp1} is equivalent to
\begin{align*}
	&(A \rar (\exists \boldP \, (F_1(\boldP) \land \cdots \land F_l(\boldP) \land F'(\boldP)) \rar S))\\
	\land \, &(A \rar (S \rar \forall \boldP \, (F_1(\boldP) \land \cdots \land F_l(\boldP) \rar F'(\boldP))))
\end{align*}
and consequently to
\begin{align}
	\label{imp2}
	\forall \boldP \, (A \land F_1(\boldP) \land \cdots \land F_l(\boldP) \rar (F'(\boldP) \lrar S)).
\end{align}

The procedure used by \anthem{} for program verification is applicable to an io-program if it is tight and does not use private recursion.
The procedure is based on the following idea.
We choose a set of \emph{axioms}---sentences over the signature~$\Omega$ that are satisfied by all standard interpretations.
This set may include, for instance, the formulas $\forall N \, (N + \numeral{0} = N)$ and $\numeral{2} \times \numeral{2} = \numeral{4}$.
It may include also the formula $a \neq b$, where $a$~and $b$ are distinct symbolic constants, unless at least one of them is a placeholder.
To establish that $\Omega$~implements the given specification, \anthem{} verifies that the axioms entail sentence~\eqref{imp2}.
This can be accomplished by first-order reasoning---by verifying that the axioms entail the sentence obtained from~\eqref{imp2} by replacing the bound predicate variables with distinct predicate constants that do not occur in the axioms.
The predicate constants used by \anthem{} are actually the private predicate symbols from the rules of~$\Omega$, so that the formula derived by \anthem{} from the axioms is
\[
	A \land F_1(\boldp) \land \cdots \land F_l(\boldp) \rar (F'(\boldp) \lrar S).
\]
Proof search is performed by the resolution theorem prover \vampire.
The task is divided into two parts: deriving the specs~$S$ from the axioms, the assumptions $A$, and the formulas $F_1(\boldp), \dots, F_l(\boldp), F'$ (``forward direction'') and deriving~$F'$ from the axioms, the assumptions $A$, and the formulas $F_1(\boldp), \dots, F_l(\boldp), S$ (``backward direction'').

\section{Verifying Tight Logic Programs with \anthem{}}
\label{sec:anthem}

Let us return to the exact cover problem, which we encoded in rules~\eqref{ft1} and represented as io-program~$\Omega_1$ earlier.
We would like to show that this program implements specification~\eqref{eq:example-cover-specs} under the assumptions stated in~\eqref{as.cover}.
This verification can be automated with \anthem{}~1.0.%
\footnote{\url{https://github.com/potassco/anthem/releases}}
To this effect, \anthem{} reads a logic program and a specification, checks that the program is tight and does not contain private recursion, generates the program’s completion, and then performs automated proof searches trying to establish that the program actually implements the given specification.

As the input program, we can simply pass rules~\eqref{ft1} to \anthem{} in the input language of \gringo.
For the specification, \anthem{} accepts files in a custom \emph{specification file} format.
Specification files may contain multiple types of statements.
\texttt{input} and \texttt{output} statements specify input predicates and placeholders as opposed to output predicates.
Axioms, assumptions and specs are expressed by \texttt{axiom}, \texttt{assume}, and \texttt{spec} statements, respectively.
\anthem’s specification file format has a simple syntax for such formulas.
This format is a concise form of the segment of the TPTP language \cite{sutcliffe17a} that is required for expressing specifications.

Similar to the input language of \gringo, identifiers starting with a lowercase letter are interpreted as predicate and object constants, while variables are denoted by identifiers starting with an uppercase letter.
An important distinction is that variables starting with letters \texttt{I}, \texttt{J}, \texttt{K}, \texttt{L}, \texttt{M}, or \texttt{N} are implicitly assumed to be integer variables---as opposed to program variables, which must start with \texttt{U}, \texttt{V}, \texttt{W}, \texttt{X}, \texttt{Y}, or \texttt{Z}.
\anthem{} shows an error message if variables are used that start with letters not listed above.
Terms may contain integer arithmetics with a syntax close to \gringo.
In addition to that, the keywords \texttt{not}, \texttt{and}, \texttt{or}, \texttt{->}, \texttt{<->}, \texttt{exists}, and \texttt{forall} denote the common operators of first-order logic.

We can encode specification~\eqref{eq:example-cover-specs} and assumptions~\eqref{as.cover} in this format, as shown in Figure~\ref{fig:exact-cover-specification}.
\begin{figure}[tp]
	\begin{lstlisting}[language = anthem]
input: n -> integer, s/2.
output: in_cover/1.
assume: n >= 0.
assume: forall Y (exists X s(X, Y) -> exists I (Y = I and I >= 1 and I <= n)).
spec: forall Y (in_cover(Y) -> exists I (Y = I and I >= 1 and I <= n)).
spec: forall X (exists Y s(X, Y) -> exists Y (s(X, Y) and in_cover(Y))).
spec: forall Y, Z (exists X (s(X, Y) and s(X, Z)) and in_cover(Y) and in_cover(Z)
                   -> Y = Z).
	\end{lstlisting}
	\caption
	{
		A specification file to verify that a program encoding the exact cover problem fulfills specification~\eqref{eq:example-cover-specs} under assumptions~\eqref{as.cover}.
		First, $n$ is declared a placeholder using an \texttt{input} statement.
		The \texttt{-> integer} notation instructs \anthem{} to represent $n$ by an integer variable internally.
		Next, the input and output predicates of $\Omega_1$ ($s/2$ and $in/1$, respectively) are declared.
		The \texttt{assume} and \texttt{spec} statements encode assumptions~\eqref{as.cover} and specs~\eqref{eq:example-cover-specs}.
	}
	\label{fig:exact-cover-specification}
\end{figure}
If we give this specification to \anthem{} along with the exact cover program encoded in~\eqref{ft1}, \anthem{} first generates the program’s completion and shows it to the user (see Figure~\ref{fig:exact-cover-completion-by-anthem}).
\begin{figure}[tp]
	\begin{lstlisting}[language = anthem]
forall X (covered(X) <-> exists N (in_cover(N) and s(X, N)))
forall X (in_cover(X) <-> in_cover(X) and exists N (1 <= N and N <= n and X = N))
forall N1, N2, X not (N1 != N2 and in_cover(N1) and in_cover(N2)
                      and s(X, N1) and s(X, N2))
forall X, N not (s(X, N) and not covered(X))
	\end{lstlisting}
	\caption
	{
		\anthem{} generates the completion of the input program---here, $\Omega_1$, the exact cover problem encoded by rules~\eqref{ft1}.
		The first two lines contain the completed definitions of $\covered/1$ and $\incover/1$.
		They are followed by the two translated constraints.
		Note that \anthem{} applies basic equivalent transformations to simplify the completion formulas.
		For example, the completed definition of $\incover$ shown here corresponds to formula~\eqref{def.incover}, but the integer variables $I$ and $J$ were eliminated by \anthem{}.
		The last line matches formula~\eqref{def.constraint}, the formula representation of the second constraint of $\Omega_1$.
	}
	\label{fig:exact-cover-completion-by-anthem}
\end{figure}
This is done as described in Section~\ref{ssec:completion}.
Afterward, \anthem{} verifies very quickly that the program indeed implements the specification in Figure~\ref{fig:exact-cover-specification} using the theorem prover \vampire{} and informs the user accordingly.%
\footnote
{
	When running \vampire~4.4 with the options \texttt{-{-}mode casc} and \texttt{-{-}cores 8} on a computer with an Intel Core i7-7700K CPU and 16~GB RAM, the equivalence proof completes within a total of 0.37 seconds.
}

To accomplish this, \anthem{} splits the equivalence verification task into two parts, as mentioned at the end of Section~\ref{ssec:private-recursion}.
In either direction, \anthem{} starts with a set of presupposed formulas and performs one proof step for each formula to verify.
For each proof step, \anthem{} devises a program in the TFF (“typed first-order formula”) segment of the TPTP language.
In this TPTP program, all presupposed formulas are represented as TPTP axioms, and the formula to verify in this proof step is used as the conjecture.
\anthem{} then passes this TPTP program to \vampire{}.
If a proof step is successful within a predefined time limit, the formula is added to the list of presupposed formulas and the next proof step is conducted.
The fact that \anthem{} uses the TPTP format to communicate with \vampire{} is completely hidden from the user.

In some cases, \vampire{} is incapable of verifying the equivalence of a specification to a program directly and within a reasonable time frame, for which there are two possible explanations.
First, the set of axioms known to \vampire{} might not be sufficient to perform the proof, and second, it might be necessary to further guide \vampire’s proof search by providing helper lemmas.
To that end, \anthem{} allows users to supply additional axioms and helper lemmas with the \texttt{axiom} and \texttt{lemma} directives, respectively.
\anthem{} verifies helper lemmas in the given order before the specs or the completion of the program.
The usage of axioms and helper lemmas can be seen in a second example, a specification for programs that compute the floor of the square root of an integer~$n$ shown in Figure~\ref{fig:example-square-root-specification} and the implementation in Figure~\ref{fig:example-square-root-program}.
\begin{figure}[tp]
	\begin{lstlisting}[language = anthem]
input: n -> integer.
assume: n >= 0.
output: q/1.
spec: exists N (forall X (q(X) <-> X = N) and N >= 0
                and N * N <= n and (N + 1) * (N + 1) > n).
	\end{lstlisting}
	\caption
	{
		A specification for programs calculating the floor of the square root of an integer~$n$.
	}
	\label{fig:example-square-root-specification}
\end{figure}
\begin{figure}[tp]
	\begin{lstlisting}[language = clingo]
p(X) :- X = 0..n, X * X <= n.
q(X) :- p(X), not p(X + 1).
	\end{lstlisting}
	\caption
	{
		A program meant to implement the specification in Figure~\ref{fig:example-square-root-specification}.
	}
	\label{fig:example-square-root-program}
\end{figure}
\anthem{} and \vampire{} verify the equivalence between this specification and program only when supplying an axiom unknown to \vampire{} and a list of helper lemmas (shown in Figure~\ref{fig:example-square-root-lemmas}).
\begin{figure}[tp]
	\begin{lstlisting}[language = anthem]
axiom: p(0) and forall N (N >= 0 and p(N) -> p(N + 1))
       -> forall N (N >= 0 -> p(N)).

lemma(forward): forall X (p(X) <-> exists N (X = N and N >= 0 and N * N <= n)).
lemma(forward): forall X (q(X) <-> exists N2 (X = N2 and N2 >= 0 and N2 * N2 <= n
                                              and (N2 + 1) * (N2 + 1) > n)).
lemma(forward): forall N1, N2 (N1 >= 0 and N2 >= 0 and N1 < N2
                               -> N1 * N1 < N2 * N2).
lemma(forward): forall N (N >= 0 and p(N + 1) -> p(N)).
lemma(forward): not p(n + 1).
lemma(forward): forall N1, N2 (q(N1) and N2 > N1 -> not q(N2)).
lemma(forward): forall N (N >= 0 and not p(N + 1) -> (N + 1) * (N + 1) > n).

lemma(backward): forall N1, N2 (N1 >= 0 and N2 >= 0 and N1 * N1 <= N2 * N2
                                -> N1 <= N2).
lemma(backward): forall N (p(N) <-> 0 <= N and N <= n and N * N <= n).
lemma(backward): forall N (not p(N) and N >= 0 -> N * N > n).
lemma(backward): forall N (N >= 0 -> N * N < (N + 1) * (N + 1)).
lemma(backward): forall N1, N2 (p(N1) and not p(N2) and N2 >= 0 -> N1 < N2).
lemma(backward): forall N1, N2 (p(N1) and not p(N1 + 1) and p(N2)
                                and not p(N2 + 1) -> N1 = N2).
	\end{lstlisting}
	\caption
	{
		\vampire{} is unable to verify the equivalence of the specification in Figure~\ref{fig:example-square-root-specification} to the program in Figure~\ref{fig:example-square-root-program} in a reasonable time frame.
		This can be addressed by supplying helper lemmas using the \texttt{lemma} keyword, where \texttt{forward} and \texttt{backward} instruct \anthem{} only to apply lemmas in the respective part of the equivalence proof.
		With these helper lemmas, \vampire{} can perform the backward direction of the equivalence proof task but is still unable to verify the forward direction.
		To address this, the set of axioms is manually extended by a fact that is not known to \vampire{}, an instance of induction.
		Such axioms can be added with \texttt{axiom} statements.
	}
	\label{fig:example-square-root-lemmas}
\end{figure}

While \anthem~1.0 introduces the mode of operation described in the present paper, earlier versions of \anthem{} allowed users to try to prove that the stable models of a (not necessarily tight) program satisfy a specification~\cite{lilusc20a}.
This functionality is retained as a second mode of operation in \anthem~1.0.

\section{Conclusions}
\label{sec:conclusions}

This paper has four main contributions.
First, we presented an extended definition of completion for programs with input and output in a subset of the input language of \gringo{}.
Such programs are similar to lp-functions, defined by Gelfond and Przymusinska~\citeyear{gelprz96a}, except that we allow an input to include the values of placeholders.
Also, they are closely related to \smodels{} program modules \cite{oikjan09a}.
Second, we stated a result describing the relationship between stable models and completion in this context.
It uses the theory of stable models of infinitary formulas by \cite{truszczynski12a} and \cite{halipeva17a}.
Third, we proposed a new theory of formal specifications for programs in the input language of \gringo{}.
Fourth, we presented preliminary experiments with the new version of \anthem{}, a software tool used in combination with a theorem prover to verify that programs with input and output implement a specification.
Only few theorem provers support both the typed first-order form of TPTP and integer arithmetics, and \vampire{} was chosen because its performance outclassed other provers in our early experiments.
In later experiments, \textsc{cvc4} emerged as a possible alternative with similar performance to \vampire, and we consider using \textsc{cvc4} in the future.
As another direction of future work, we would like to extend the subset of the input language of \gringo{} supported by \anthem{}, in particular, by simple forms of aggregates.

\section*{Acknowledgments}

We thank Michael Gelfond and the anonymous referees for their valuable comments.
Jorge Fandinno is funded by the Alexander von Humboldt Foundation.

\bibliographystyle{include/latex-class-tlp/acmtrans}

\begin{thebibliography}{}

\bibitem[\protect\citeauthoryear{Brewka, Eiter, and Truszczy{\'n}ski}{Brewka et~al\mbox{.}}{2011}]{breitr11a}
{\sc Brewka, G.}, {\sc Eiter, T.}, {\sc and} {\sc Truszczy{\'n}ski, M.} 2011.
\newblock Answer set programming at a glance.
\newblock {\em Communications of the {ACM}\/}~{\em 54,\/}~12, 92--103.

\bibitem[\protect\citeauthoryear{Cabalar, Fandinno, and Lierler}{Cabalar et~al\mbox{.}}{2020}]{cafali20a}
{\sc Cabalar, P.}, {\sc Fandinno, J.}, {\sc and} {\sc Lierler, Y.} 2020.
\newblock Modular answer set programming as a formal specification language.
\newblock {\em Theory and Practice of Logic Programming,\/}~{this volume}.

\bibitem[\protect\citeauthoryear{Clark}{Clark}{1978}]{clark78a}
{\sc Clark, K.} 1978.
\newblock Negation as failure.
\newblock In {\em Logic and Data Bases},
Plenum Press, 293--322.

\bibitem[\protect\citeauthoryear{Erdem and Lifschitz}{Erdem and Lifschitz}{2003}]{erdlif03a}
{\sc Erdem, E.} {\sc and} {\sc Lifschitz, V.} 2003.
\newblock Tight logic programs.
\newblock {\em Theory and Practice of Logic Programming\/}~{\em 3,\/}~4-5, 499--518.

\bibitem[\protect\citeauthoryear{Fages}{Fages}{1994}]{fages94a}
{\sc Fages, F.} 1994.
\newblock Consistency of {C}lark's completion and the existence of stable models.
\newblock {\em Journal of Methods of Logic in Computer Science\/}~{\em 1}, 51--60.

\bibitem[\protect\citeauthoryear{Ferraris}{Ferraris}{2005}]{ferraris05a}
{\sc Ferraris, P.} 2005.
\newblock Answer sets for propositional theories.
\newblock In {\em Proceedings of the 8th International Conference on Logic Programming and Nonmonotonic Reasoning},
Springer, 119--131.

\bibitem[\protect\citeauthoryear{Ferraris, Lee, and Lifschitz}{Ferraris et~al\mbox{.}}{2011}]{feleli11a}
{\sc Ferraris, P.}, {\sc Lee, J.}, {\sc and} {\sc Lifschitz, V.} 2011.
\newblock Stable models and circumscription.
\newblock {\em Artificial Intelligence\/}~{\em 175,\/}~1, 236--263.

\bibitem[\protect\citeauthoryear{Ferraris, Lee, Lifschitz, and Palla}{Ferraris et~al\mbox{.}}{2009}]{felelipa09a}
{\sc Ferraris, P.}, {\sc Lee, J.}, {\sc Lifschitz, V.}, {\sc and} {\sc Palla, R.} 2009.
\newblock Symmetric splitting in the general theory of stable models.
\newblock In {\em Proceedings of the 21st International Joint Conference on Artificial Intelligence},
{AAAI/MIT} Press, 797--803.

\bibitem[\protect\citeauthoryear{Gebser, Harrison, Kaminski, Lifschitz, and Schaub}{Gebser et~al\mbox{.}}{2015}]{gehakalisc15a}
{\sc Gebser, M.}, {\sc Harrison, A.}, {\sc Kaminski, R.}, {\sc Lifschitz, V.}, {\sc and} {\sc Schaub, T.} 2015.
\newblock Abstract {G}ringo.
\newblock {\em Theory and Practice of Logic Programming\/}~{\em 15,\/}~4-5, 449--463.

\bibitem[\protect\citeauthoryear{Gelfond and Lifschitz}{Gelfond and Lifschitz}{1988}]{gellif88b}
{\sc Gelfond, M.} {\sc and} {\sc Lifschitz, V.} 1988.
\newblock The stable model semantics for logic programming.
\newblock In {\em Proceedings of the 5th International Conference on Logic Programming},
MIT Press, 1070--1080.

\bibitem[\protect\citeauthoryear{Gelfond and Przymusinska}{Gelfond and Przymusinska}{1996}]{gelprz96a}
{\sc Gelfond, M.} {\sc and} {\sc Przymusinska, H.} 1996.
\newblock Towards a theory of elaboration tolerance: Logic programming approach.
\newblock {\em International Journal of Software Engineering and Knowledge Engineering\/}~{\em 6,\/}~1, 89--112.

\bibitem[\protect\citeauthoryear{Harrison, Lifschitz, Pearce, and Valverde}{Harrison et~al\mbox{.}}{2017}]{halipeva17a}
{\sc Harrison, A.}, {\sc Lifschitz, V.}, {\sc Pearce, D.}, {\sc and} {\sc Valverde, A.} 2017.
\newblock Infinitary equilibrium logic and strongly equivalent logic programs.
\newblock {\em Artificial Intelligence\/}~{\em 246}, 22--33.

\bibitem[\protect\citeauthoryear{Harrison, Lifschitz, and Raju}{Harrison et~al\mbox{.}}{2017}]{halira17a}
{\sc Harrison, A.}, {\sc Lifschitz, V.}, {\sc and} {\sc Raju, D.} 2017.
\newblock Program completion in the input language of {GRINGO}.
\newblock {\em Theory and Practice of Logic Programming\/}~{\em 17,\/}~5-6, 855--871.

\bibitem[\protect\citeauthoryear{Kov{\'a}cs and Voronkov}{Kov{\'a}cs and Voronkov}{2013}]{kovvor13a}
{\sc Kov{\'a}cs, L.} {\sc and} {\sc Voronkov, A.} 2013.
\newblock First-order theorem proving and {V}ampire.
\newblock In {\em Proceedings of the 25th International Conference on Computer Aided Verification},
Springer, 1--35.

\bibitem[\protect\citeauthoryear{Lee, Lifschitz, and Palla}{Lee et~al\mbox{.}}{2008}]{lelipa08b}
{\sc Lee, J.}, {\sc Lifschitz, V.}, {\sc and} {\sc Palla, R.} 2008.
\newblock A reductive semantics for counting and choice in answer set programming.
\newblock In {\em Proceedings of the 23rd National Conference on Artificial Intelligence},
AAAI Press, 472--479.

\bibitem[\protect\citeauthoryear{Lifschitz, L{\"u}hne, and Schaub}{Lifschitz et~al\mbox{.}}{2018}]{lilusc18a}
{\sc Lifschitz, V.}, {\sc L{\"u}hne, P.}, {\sc and} {\sc Schaub, T.} 2018.
\newblock anthem: Transforming gringo programs into first-order theories (preliminary report).
\newblock In {\em Proceedings of the 11th Workshop on Answer Set Programming and Other Computing Paradigms},

\bibitem[\protect\citeauthoryear{Lifschitz, Lühne, and Schaub}{Lifschitz et~al\mbox{.}}{2019}]{lilusc19a}
{\sc Lifschitz, V.}, {\sc Lühne, P.}, {\sc and} {\sc Schaub, T.} 2019.
\newblock Verifying strong equivalence of programs in the input language of {GRINGO}.
\newblock In {\em Proceedings of the 15th International Conference and Symposium of Logic Programming and Nonmonotonic Reasoning},
Springer, 270--283.

\bibitem[\protect\citeauthoryear{Lifschitz, Lühne, and Schaub}{Lifschitz et~al\mbox{.}}{2020}]{lilusc20a}
{\sc Lifschitz, V.}, {\sc Lühne, P.}, {\sc and} {\sc Schaub, T.} 2020.
\newblock Towards verifying logic programs in the input language of {CLINGO}.
\newblock In {\em Essays Dedicated to Yuri Gurevich on the Occasion of His 80th Birthday},
Springer, 190--209.

\bibitem[\protect\citeauthoryear{Oikarinen and Janhunen}{Oikarinen and Janhunen}{2009}]{oikjan09a}
{\sc Oikarinen, E.} {\sc and} {\sc Janhunen, T.} 2009.
\newblock A translation-based approach to the verification of modular equivalence.
\newblock {\em Journal of Logic and Computation\/}~{\em 19,\/}~4, 591--613.

\bibitem[\protect\citeauthoryear{Sutcliffe}{Sutcliffe}{2017}]{sutcliffe17a}
{\sc Sutcliffe, G.} 2017.
\newblock The {TPTP} problem library and associated infrastructure.
\newblock {\em Journal of Automated Reasoning\/}~{\em 59,\/}~4, 483--502.

\bibitem[\protect\citeauthoryear{Truszczy{\'n}ski}{Truszczy{\'n}ski}{2012}]{truszczynski12a}
{\sc Truszczy{\'n}ski, M.} 2012.
\newblock Connecting first-order {ASP} and the logic {FO(ID)} through reducts.
\newblock In {\em Correct Reasoning: Essays on Logic-Based {AI} in Honour of {V}ladimir {L}ifschitz},
Springer, 543--559.

\end{thebibliography}

\clearpage
\appendix

\section{Proofs}
\label{appendix:proofs}

\subsection{Main Lemma}

\label{sec:proof:thm:firt.order.representation.correspondence}

Let us now extend the correspondence between stable models defined in terms of the SM operator and infinitary logic \cite{truszczynski12a} to the above two-sorted case.
We also allow formulas to contain extensional predicate symbols, which are not considered in \cite{truszczynski12a}.
%

We use the following notation.
By~$\Num$, we denote the set of all numerals, and by~$\T$, we denote the set of all precomputed terms.
For an interpretation~$I$ and a list~$\boldp$ of predicate symbols,
by~$\At{I}$, we denote the set of precomputed atoms $p(t_1,\dotsc,t_k)$ satisfied by~$I$
where ${p \in \boldp}$.
%

Let~$\boldp,\boldq$ be a partition of the predicate symbols in the signature.
Then,
the \emph{grounding of a sentence~$F$ with respect to an interpretation~$I$ and a set of intensional predicate symbols~$\boldp$} (and extensional predicate symbols~$\boldq$) is defined as follows:
\begin{itemize}
\item $\gr{I}{\bot} = \bot$;
\item for~$p \in \boldp$, $\gr{I}{p(t_1,\dotsc,t_k)} = p((t_1^I)^*,\dotsc,(t_k^I)^*)$;
\item for~$p \in \boldq$, $\gr{I}{p(t_1,\dotsc,t_k)} = \top$ if ${ p((t_1^I)^*,\dotsc,(t_k^I)^*) \in I^\boldq}$

and ${\gr{I}{p(t_1,\dotsc,t_k)} = \bot}$ otherwise;

\item $\gr{I}{t_1 = t_2} = \top$ if $t_1^I = t_2^I$ and $\bot$ otherwise;
\item $\gr{I}{F \otimes G} = \gr{I}{F} \otimes \gr{I}{G}$ if $\otimes$ is $\wedge$, $\vee$, or $\to$;
\item $\gr{I}{\exists X \, F(X)} = \{ \gr{I}{F(u)} \mid u \in \T \}^{\vee}$ if $X$ is a program variable;
\item $\gr{I}{\forall X \, F(X)} = \{ \gr{I}{F(u)} \mid u \in \T \}^{\wedge}$ if $X$ is a program variable;
\item $\gr{I}{\exists X \, F(X)} = \{ \gr{I}{F(u)} \mid u \in \Num \}^{\vee}$ if $X$ is an integer variable;
\item $\gr{I}{\forall X \, F(X)} = \{ \gr{I}{F(u)} \mid u \in \Num \}^{\wedge}$ if $X$ is an integer variable.
\end{itemize}
For a theory~$\Gamma$,
we define~$\gr{I}{\Gamma} = \{ \gr{I}{F} \mid F \in \Gamma \}$.

\begin{definition}
Let $\Gamma$~be a theory and $\boldp$~be a list of predicate symbols.
Then,
an interpretation~$I$ is called an \emph{\INFstable{} model} of~$\Gamma$ if
$\At{I}$~is a stable model of~$\gr{I}{\Gamma}$ in the sense of Definition~1 in \cite{truszczynski12a}.
\end{definition}

Any term, formula, or theory over the two-sorted signature~$\sigma_\Pi$ can be seen as one-sorted if we do not assign sorts to variables. On the other hand, some one-sorted terms and formulas cannot be viewed as terms or formulas over~$\sigma_\Pi$; for instance, the one-sorted term $X + Y$, where $X$ and~$Y$ are program variables, is not a term over~$\sigma_\Pi$.  We will refer to terms, formulas, and theories over~$\sigma_\Pi$ as two-sorted.
\emph{One-sorted interpretations} are defined as usual in first-order logic,
with integer and program variables ranging over the same domain.
\emph{One-sorted \SMstable{} models} and \emph{one-sorted \INFstable{} models}
are defined as \SMstable{} models and \INFstable{} models (see Section~\ref{sec:main.lemma}), respectively, but using one-sorted interpretations rather than two-sorted ones.
We also say that two theories are \emph{one-sorted-equivalent} if both theories have exactly the same one-sorted models.
The following is a special case of Theorem~5 in \cite{truszczynski12a}
restricted to our one-sorted language.

\begin{proposition}\label{prop:SM-INF.one-sorted}
Let $\Gamma$~be a finite theory and $\boldp$~be
the list of all predicate symbols in some signature~$\sigma_\Pi$.
Then, the one-sorted \SMstable{} models of~$\Gamma$ and its one-sorted \INFstable{} models coincide.
\end{proposition}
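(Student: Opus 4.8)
The plan is to deduce the statement directly from Theorem~5 of \cite{truszczynski12a}, by verifying that, under the hypotheses here, both notions of one-sorted stable model reduce to exactly the two notions that theorem compares. The only genuine mathematical content is a definition-unwinding; there is no separate fixpoint or reduct computation to carry out.

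First I would exploit the hypothesis that $\boldp$ lists \emph{every} predicate symbol of $\sigma_\Pi$. This forces the complementary list $\boldq$ of extensional symbols to be empty, so the clause of the grounding $\gr{I}{\cdot}$ that evaluates extensional atoms to $\top$ or $\bot$ according to membership in $I^{\boldq}$ never applies, and $I^{\boldq}$ drops out of the definition entirely. What remains---the clauses for $\bot$, for intensional atoms via the naming operator $(\cdot)^{*}$, for equality, for the connectives, and for the quantifiers---is precisely the grounding of a first-order sentence with respect to an interpretation into an infinitary propositional formula that underlies \cite{truszczynski12a}. I would also record that passing to the one-sorted reading erases the program/integer distinction on variables, so the two separate quantifier clauses of $\gr{I}{\cdot}$ (ranging over $\T$ and over $\Num$) merge into the single one-sorted quantifier clause ranging over the common domain named by precomputed terms; this is exactly what a one-sorted \INFstable{} model is meant to denote.

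Second I would match the two model notions. By definition, $I$ is a one-sorted \INFstable{} model of $\Gamma$ iff $\At{I}$ is an infinitary stable model of $\gr{I}{\Gamma}$ in the sense of \cite{truszczynski12a}, while $I$ is a one-sorted \SMstable{} model iff $I$ satisfies $\SM_{\boldp}[\bigwedge\Gamma]$. I would check that our one-sorted interpretations, the atom set $\At{I}$, the naming operator $(\cdot)^{*}$, and the operator $\SM_{\boldp}$ of Section~\ref{sec:main.lemma} are literally the objects compared in Theorem~5 of that paper for the finite theory $\bigwedge\Gamma$. Granting these identifications, Theorem~5 states exactly that the models of $\SM_{\boldp}[\bigwedge\Gamma]$ are the interpretations $I$ for which $\At{I}$ is an infinitary stable model of $\gr{I}{\Gamma}$, which is the claimed coincidence.

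The step I expect to be delicate is this matching of groundings: confirming that $\gr{I}{\cdot}$ agrees clause for clause with the grounding of \cite{truszczynski12a}, including its treatment of the object constants and of the function symbols $+$, $-$, $\times$ through the naming operator. Because $\sigma_\Pi$ carries nontrivial function symbols and infinitely many object constants, I would make sure to invoke the version of Theorem~5 stated for an arbitrary first-order signature with function symbols, rather than a purely relational or Herbrand formulation; with that in hand, the reductions above leave nothing further to prove.
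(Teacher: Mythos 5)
Your proposal matches the paper exactly: the paper offers no proof of this proposition at all, simply presenting it as ``a special case of Theorem~5 in Truszczy\'nski (2012) restricted to our one-sorted language,'' which is precisely the reduction you carry out. Your extra care in checking that the extensional-predicate clause vanishes when $\boldq$ is empty, that the two quantifier clauses merge in the one-sorted reading, and that the cited theorem must be taken in its general (non-Herbrand, function-symbol) form only makes explicit the identifications the paper leaves implicit.
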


In the following, we extend this result to two-sorted stable models with extensional predicate symbols (see Proposition~\ref{prop:SM-INF} below).
This requires the following notation and auxiliary results.
The expression $\isint(t)$, where $t$~is a one-sorted term, stands for the one-sorted formula ${t+\overline{1} \neq t+\overline{2}}$.
Given a two-sorted sentence~$F$,
we write~$\unsorted{F}$ to denote the one-sorted sentence resulting from restricting all quantifiers that bind integer variables in~$F$ to~$\isint(t)$.
Formally,
formula~$\unsorted{F}$
is recursively defined as follows:
\begin{itemize}
\item $\unsorted{F} = F$ for any atomic formula~$F$;

\item $\unsorted{(F \otimes G)} = \unsorted{F} \otimes \unsorted{G}$ with~$\otimes \in \{\wedge,\vee,\to\}$;

\item $\unsorted{(\forall X \, F(X))} = \forall X \, \unsorted{F(X)}$;

\item $\unsorted{(\exists X \, F(X))} = \exists X \, \unsorted{F(X)}$;

\item $\unsorted{(\forall N \, F(N))} = \forall N \, (\isint(N) \to \unsorted{F(N)})$;

\item $\unsorted{(\exists N \, F(N))} = \exists N \, (\isint(N) \wedge \unsorted{F(N)})$;
\end{itemize}
where $X$~and $N$~are program variables and integer variables, respectively.
We also define~$\unsorted{\Gamma} = \{ \unsorted{F} \mid F \in \Gamma \}$.

Intuitively, the one-sorted models of~$\unsorted{\Gamma}$
are in a one-to-one correspondence with the two-sorted models of~$\Gamma$.
This one-to-one correspondence is formalized as follows.
The \emph{generalized value} of a ground term is its value if it exists and a fixed (arbitrarily chosen) symbolic constant~$u$ otherwise.
Given a two-sorted interpretation~$I$,
by~$\unsorted{I}$, we denote the one-sorted interpretation such that
\begin{itemize}
\item the universe of~$\unsorted{I}$ is the set of all precomputed terms;
\item $\unsorted{I}$~interprets each ground term as its generalized value;
\item $\unsorted{I}$~interprets every predicate symbol in the same way as~$I$.
\end{itemize}

\begin{lemma}\label{lem:restricting.quantifiers.fo}
Let~$F$ be a two-sorted sentence
and~$I$ be a two-sorted interpretation.
Then, $I \models F$ iff~$\unsorted{I} \models \unsorted{F}$.
\end{lemma}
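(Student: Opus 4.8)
The plan is to prove, by structural induction, a statement slightly stronger than the lemma so that the induction goes through for open formulas: for every two-sorted formula $F$ and every assignment $\nu$ sending program variables to precomputed terms and integer variables to numerals, $I, \nu \models F$ iff $\unsorted{I}, \nu \models \unsorted{F}$. Such a $\nu$ is simultaneously a one-sorted assignment over the universe $\T$ of $\unsorted{I}$, so both sides are meaningful, and the lemma is the case of a sentence (empty $\nu$). I would first record two facts on which the induction depends.

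First, two-sorted terms evaluate identically under $I$ and $\unsorted{I}$: for every term $t$, $t^{I,\nu} = t^{\unsorted{I},\nu}$. This is an easy induction on terms. The only function symbols of $\sigma_\Pi$ are $+$, $-$, $\times$, whose arguments are integer-sorted; under any $\nu$ such arguments evaluate to numerals, and the numerals are closed under the corresponding integer operations, so every term has a genuine value. The ``generalized value'' used by $\unsorted{I}$ departs from the genuine value only when a non-numeral is supplied to $+$, $-$, or $\times$, which well-sortedness forbids; hence the two values coincide.

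Second---and this is the heart of the argument---for a precomputed term $a$, the one-sorted interpretation $\unsorted{I}$ satisfies $\isint(a)$, i.e. $a + \numeral{1} \neq a + \numeral{2}$, iff $a$ is a numeral. Indeed, if $a = \numeral{n}$ then the two sides are $\numeral{n+1}$ and $\numeral{n+2}$, which differ; and if $a$ is not a numeral then $a + \numeral{1}$ and $a + \numeral{2}$ have no genuine value, so both collapse to the fixed constant $u$ in $\unsorted{I}$, making the two sides equal. Thus the relativization $\isint$ recovers exactly the integer sort inside the one-sorted universe.

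Given these, the induction is routine. For atomic $F$ we have $\unsorted{F} = F$, and equivalence follows from the term-value fact together with $\unsorted{I}$ interpreting every predicate symbol---the comparison predicates and equality included---just as $I$ does. Propositional connectives and $\bot$ are immediate, as $\unsorted{(\cdot)}$ commutes with them and preserves $\bot$. For a program-variable quantifier, the first-sort universe of $I$ and the universe of $\unsorted{I}$ are both $\T$, so the quantifier ranges agree and the induction hypothesis applies verbatim. For an integer-variable quantifier, say $\forall N \, F(N)$, the two-sorted range of $N$ is $\Num$, whereas in $\unsorted{(\forall N \, F(N))} = \forall N\,(\isint(N) \to \unsorted{F(N)})$ the guard $\isint(N)$ restricts $N$, by the second fact, to precisely the numerals; instantiating $N$ by the same numeral on both sides and applying the induction hypothesis yields the equivalence, and dually for $\exists N$. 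The one delicate point throughout is the bookkeeping around generalized values: one must check that the partial arithmetic produces the junk element $u$ nowhere except inside $\isint(a)$ for non-numeral $a$, which is exactly what makes the relativization faithful; the integer-quantifier step is therefore the only real obstacle.
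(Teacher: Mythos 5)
Your proof is correct and follows essentially the same route as the paper's: a structural induction in which the only nontrivial cases are the integer-variable quantifiers, resolved by observing that $\unsorted{I}$ satisfies $\isint(a)$ exactly when $a$ is a numeral (non-numerals collapsing both sides of $a+\numeral{1}\neq a+\numeral{2}$ to the junk constant), together with the observation that generalized values coincide with genuine values on well-sorted terms. You are merely more explicit than the paper about these two supporting facts and about carrying assignments through the induction.
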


\begin{proof}
By structural induction.
In case that $F$ is an atomic formula of the form~$p(t_1,\dotsc,t_n)$,
it follows that
$\unsorted{I} \models \unsorted{F}$
iff $(s_1^*,\dotsc,s_n^*) \in p^I$,
where each~$s_i$ is the generalized value of~$t_i$,
iff
$((t_1^I)^*,\dotsc,(t_n^I)^*) \in p^I$
iff
${I \models F}$.
Note that, since~$F$ is a two-sorted sentence,
the generalized value of~$t_i$ coincides with its value.

The only remaining relevant cases are quantifiers over integer variables.
We show here the case of a universal quantifier.
Let~$N$ be an integer variable.
Then,
\begin{align*}
&{\unsorted{I} \models \unsorted{(\forall N\, G(N))}}\\
\textrm{iff } &{\unsorted{I} \models \forall N\, (\isint(N) \to \unsorted{G(N)})}\\
\textrm{iff } &{\unsorted{I} \models \isint(u) \to \unsorted{G(u)}} \textrm{ for all } u \in \T\\
\textrm{iff } &{\unsorted{I} \models \unsorted{G(u)}} \textrm{ for all } u \in \Num\\
\textrm{iff } &{I\models G(u)} \textrm{ for all } u \in \Num \tag*{(induction hypothesis)}\\
\textrm{iff } &{I\models \forall N\, G(N)}.
\end{align*}
The case for the existential quantifier is analogous.
\end{proof}

We extend this result also to the stable models of a first-order formula.
The following auxiliary result is useful for that purpose.

\begin{lemma}[Lemma~5 in \citeNP{feleli11a}]\label{lem:star.implication}
The formula $(\boldu \leq \boldp) \wedge (F^*(\boldu) \to F)$
is satisfied by all one-sorted interpretations
and for any one-sorted formula~$F$.
\end{lemma}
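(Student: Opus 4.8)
The plan is to prove the validity of this formula by structural induction on the one-sorted formula~$F$, relying on the recursive definitions of the operator~$F^*$ and of the abbreviation ${\boldu \leq \boldp = \bigwedge_i \forall \mathbf{x}\,(u_i(\mathbf{x}) \to p_i(\mathbf{x}))}$ from \cite{feleli11a}. I read the claim as the validity of $(\boldu \leq \boldp) \wedge F^*(\boldu) \to F$, equivalently $(\boldu \leq \boldp) \to (F^*(\boldu) \to F)$; that is, under every one-sorted interpretation and every assignment to the free object variables, if $\boldu \leq \boldp$ holds then so does $F^*(\boldu) \to F$. A point worth flagging at the outset is that $\boldu \leq \boldp$ is a sentence with respect to object variables, which will be needed to commute it with quantifiers in~$F$.

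For the base cases I would first dispose of $F = \bot$, where $F^* = \bot$ and $F^* \to F$ is trivially valid. For an atom $F = p(\mathbf{t})$ whose predicate $p$ is among the intensional symbols~$\boldp$, the operator replaces $p$ by its predicate variable, so $F^* = u(\mathbf{t})$; here the matching conjunct $\forall \mathbf{x}\,(u(\mathbf{x}) \to p(\mathbf{x}))$ of $\boldu \leq \boldp$ delivers exactly $u(\mathbf{t}) \to p(\mathbf{t})$. For every other atom---an equality, or an atom whose predicate is extensional, i.e.\ in~$\boldq$---we have $F^* = F$, so $F^* \to F$ is immediate and the hypothesis $\boldu \leq \boldp$ is not even needed.

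For the inductive step, the conjunction and disjunction cases follow by propagating the two induction hypotheses for the immediate subformulas $G$ and $H$ through $(G \wedge H)^* = G^* \wedge H^*$ and $(G \vee H)^* = G^* \vee H^*$. The essential case is implication: since $(G \to H)^* = (G^* \to H^*) \wedge (G \to H)$, the formula $F^*$ carries $F = G \to H$ itself as a conjunct, so $F^* \to F$ holds with no appeal to the induction hypothesis. For the quantifier cases I would use $(\forall x\,G)^* = \forall x\,G^*$ and $(\exists x\,G)^* = \exists x\,G^*$, fix an interpretation and assignment satisfying $\boldu \leq \boldp$ together with $F^*(\boldu)$, and push the induction hypothesis for $G$ under the quantifier---instantiating the witness in the existential case and an arbitrary element in the universal case. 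This step is legitimate precisely because $\boldu \leq \boldp$ has no free object variables and so remains satisfied under every reassignment of~$x$.

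The only place that requires genuine care---and the ``obstacle'' is really just recognizing the design of the $*$ operator---is the implication clause: the definition is engineered so that $(G \to H)^*$ retains the unstarred $G \to H$ as a conjunct, which is exactly what makes $F^* \to F$ provable uniformly across the induction. Everything else is routine propagation of the induction hypothesis, and the argument transfers without change to the two-sorted setting and to the presence of extensional predicates, since neither alters the shape of the recursion on~$F$.
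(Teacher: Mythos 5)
The paper offers no proof of this lemma: it is imported verbatim as Lemma~5 of \citeNP{feleli11a}, so there is no internal argument to compare against. Your induction is correct and is the standard argument for this fact. The base cases are handled properly (the intensional-atom case is the only one that actually uses the hypothesis $\boldu \leq \boldp$, via instantiation of $\forall \mathbf{x}\,(u(\mathbf{x}) \to p(\mathbf{x}))$ at $\mathbf{t}$), you correctly isolate the implication clause $(G \to H)^* = (G^* \to H^*) \wedge (G \to H)$ as the design feature that makes the claim go through without the induction hypothesis, and your justification for commuting $\boldu \leq \boldp$ past the quantifiers---that it has no free object variables---is exactly the point that needs saying. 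You also read the statement charitably and correctly: as written, $(\boldu \leq \boldp) \wedge (F^*(\boldu) \to F)$ cannot be valid (it would make $\boldu \leq \boldp$ valid), and the intended claim is the validity of $(\boldu \leq \boldp) \wedge F^*(\boldu) \to F$, which is what you prove. Your closing remark that the argument transfers to the two-sorted setting is consistent with how the paper uses the lemma inside Lemma~3, where it is applied to the formula $\isint(N)$, whose $*$-translation is itself.
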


\begin{lemma}\label{lem:restricting.quantifiers.fo.sm}
Let $F$~be a two-sorted sentence,
$I$~be a two-sorted interpretation,
and $\boldp$~be a list of predicate symbols.
Then, $I \models \SM_\boldp[F]$ iff $\unsorted{I} \models \SM_\boldp[\unsorted{F}]$.
\end{lemma}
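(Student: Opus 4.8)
The plan is to reduce everything to Lemma~\ref{lem:restricting.quantifiers.fo} and the properties of the operator $(\cdot)^*$. Recall that $\SM_\boldp[F]$ abbreviates $F \wedge \neg \exists \boldu((\boldu < \boldp) \wedge F^*(\boldu))$, and likewise $\SM_\boldp[\unsorted F]$ abbreviates $\unsorted F \wedge \neg \exists \boldu((\boldu < \boldp) \wedge (\unsorted F)^*(\boldu))$. By Lemma~\ref{lem:restricting.quantifiers.fo}, $I \models F$ iff $\unsorted I \models \unsorted F$, so the first conjuncts correspond and it remains only to match the second-order conjuncts. The predicate symbols in~$\boldp$ take arguments of the general (first) sort, whose universe is~$\T$ both in~$I$ and in~$\unsorted I$; hence the second-order variables~$\boldu$ range over the same relations on~$\T$ in both cases, and I would exhibit a witness-preserving correspondence.

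Fix a tuple~$\boldu$ of relations on~$\T$ and regard its components as fresh predicate constants, extending~$I$ and~$\unsorted I$ to interpret them by~$\boldu$. Since $\unsorted{\cdot}$ interprets every predicate symbol exactly as~$I$ does, the extended one-sorted interpretation is $\unsorted{(I,\boldu)} = (\unsorted I,\boldu)$, so Lemma~\ref{lem:restricting.quantifiers.fo} applies over the signature expanded with~$\boldu$ (its proof is insensitive to this addition). Because $\boldu < \boldp$ and $\boldu \leq \boldp$ are built only from implications and general-sort quantifiers over atoms of~$u_i$ and~$p_i$, whose interpretations coincide in~$I$ and~$\unsorted I$, their truth is literally the same in both, so $(I,\boldu) \models \boldu < \boldp$ iff $(\unsorted I,\boldu) \models \boldu < \boldp$. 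It therefore suffices, for each such~$\boldu$ satisfying $\boldu < \boldp$, to show that $(I,\boldu) \models F^*(\boldu)$ iff $(\unsorted I,\boldu) \models (\unsorted F)^*(\boldu)$.

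The crux is a syntactic sublemma: \emph{under the hypothesis $\boldu \leq \boldp$, the one-sorted formulas $(\unsorted F)^*$ and $\unsorted{(F^*)}$ are equivalent}. I would prove this by structural induction on~$F$. The base cases and the connective and general-sort-quantifier cases are immediate, since $\unsorted{\cdot}$ leaves atoms and such quantifiers untouched and commutes with $\wedge$, $\vee$, $\to$, while $(\cdot)^*$ commutes with the same constructs up to the duplication of implications, which matches on both sides by the induction hypothesis. The only genuinely new cases are the integer quantifiers: applying $\unsorted{\cdot}$ to $\forall N\, G$ (resp.\ $\exists N\, G$) inserts a guard $\isint(N)$, and applying $(\cdot)^*$ to the resulting guarded implication produces an extra conjunct $\isint(N) \to \unsorted G$ (here $\isint(N)^*$ is equivalent to $\isint(N)$, as $\isint(N)$ contains no symbol of~$\boldp$). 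This extra conjunct is exactly the obstacle, and it is discharged using Lemma~\ref{lem:star.implication}: under $\boldu \leq \boldp$ we have $(\unsorted G)^* \to \unsorted G$, so the conjunct $\isint(N)\to\unsorted G$ is absorbed, leaving $\isint(N) \to (\unsorted G)^*$, which the induction hypothesis rewrites as $\isint(N) \to \unsorted{G^*}$, that is, $\unsorted{(F^*)}$.

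Finally I would assemble the pieces. Given a witness~$\boldu$ with $\boldu < \boldp$, and hence $\boldu \leq \boldp$, Lemma~\ref{lem:restricting.quantifiers.fo} applied to the two-sorted sentence $F^*(\boldu)$ yields $(I,\boldu) \models F^*(\boldu)$ iff $(\unsorted I,\boldu) \models \unsorted{(F^*(\boldu))}$, and the sublemma rewrites the right-hand side as $(\unsorted I,\boldu) \models (\unsorted F)^*(\boldu)$. Combined with the correspondence for $\boldu < \boldp$, this shows that~$I$ and~$\unsorted I$ admit corresponding witnesses, so $I \models \exists \boldu((\boldu < \boldp)\wedge F^*(\boldu))$ iff $\unsorted I \models \exists \boldu((\boldu<\boldp)\wedge(\unsorted F)^*(\boldu))$; negating matches the second conjuncts, and together with the first conjuncts this proves $I \models \SM_\boldp[F]$ iff $\unsorted I \models \SM_\boldp[\unsorted F]$. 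I expect the main obstacle to be the integer-quantifier case of the sublemma, where the $\isint$ guards introduced by $\unsorted{\cdot}$ interact with the implication-duplication in $(\cdot)^*$; Lemma~\ref{lem:star.implication} is precisely what lets the spurious guard be eliminated.
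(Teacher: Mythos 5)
Your proposal is correct and follows essentially the same route as the paper's proof: reduce to Lemma~\ref{lem:restricting.quantifiers.fo}, and establish by structural induction that, under $\boldu \leq \boldp$, the formulas $\unsorted{(F^*(\boldu))}$ and $(\unsorted{F})^*(\boldu)$ are equivalent, with the integer-quantifier cases handled exactly as in the paper by using Lemma~\ref{lem:star.implication} to absorb the extra $\isint$-guarded conjunct produced by the implication-duplication of $(\cdot)^*$. Your treatment of the second-order quantifier over $\boldu$ via an explicit witness correspondence is a slightly more careful spelling-out of the paper's terser appeal to Lemma~\ref{lem:restricting.quantifiers.fo}, but the substance is the same.
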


\begin{proof*}
From Lemma~\ref{lem:restricting.quantifiers.fo},
we get that
$I \models \SM[F]$ iff $\unsorted{I} \models \unsorted{(\SM[F])}$.
We show below that formula ${(\boldu \leq \boldp) \wedge \unsorted{(F^*(\boldu))}}$
is equivalent to ${(\boldu \leq \boldp) \wedge(\unsorted{F}(\boldu))^*}$.
This immediately implies that $\unsorted{(\SM[F])}$
and $\SM[\unsorted{F}]$ are also equivalent.
The proof follows by structural induction,
and the only relevant cases are, again, quantifiers over integer variables.
We show the case of a universal quantifier here.
Let $N$~be an integer variable
and assume ${(\boldu \leq \boldp)}$.
Then,
\begin{align*}
&(\unsorted{(\forall N \, G(N,\boldu))})^*\\
={} &(\forall N \, (\isint(N) \to \unsorted{G(N,\boldu)}))^*\\
={} &\forall N \, (\isint(N) \to \unsorted{G(N,\boldu)})^*\\
={} &\forall N \, ((\isint(N) \to \unsorted{G(N)}) \wedge (\isint(N)^* \to (\unsorted{G(N,\boldu)})^*))\\
={} &\forall N \, ((\isint(N) \to \unsorted{G(N)}) \wedge (\isint(N) \to (\unsorted{G(N,\boldu)})^*))\\
\Leftrightarrow{}
&\forall N \, (\isint(N) \to (\unsorted{G(N)} \wedge (\unsorted{G(N,\boldu)})^*))\\
\Leftrightarrow{}
&\forall N \, (\isint(N) \to (\unsorted{G(N,\boldu)})^*)
\tag{Lemma~\ref{lem:star.implication}}\\
\Leftrightarrow{}
&\forall N \, (\isint(N) \to \unsorted{(G(N,\boldu)^*)})
\tag{induction hypothesis}\\
={} &\unsorted{(\forall N \, G (N,\boldu)^*)}.
\tag*{\proofbox}
\end{align*}
\end{proof*}

This correspondence can also be established in terms of groundings as follows.
The expression~$\Gamma \equivs \Gamma'$,
where $\Gamma$ and~$\Gamma'$ are two infinitary propositional theories, stands for \emph{strong equivalence} in the sense of \cite[Section~3.1]{halipeva17a}.

\begin{lemma}\label{lem:restricting.quantifiers.inf}
Let~$\Gamma$ be a finite two-sorted theory, $I$ be a two-sorted interpretation,
and~$\boldp$ be a list of predicate symbols.
Then,
$\gr{I}{\Gamma} \equivs \gr{\unsorted{I}}{\unsorted{\Gamma}}$.
\end{lemma}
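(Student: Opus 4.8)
The plan is to prove the formula-level statement that for every two-sorted sentence $F$ we have $\gr{I}{F} \equivs \gr{\unsorted{I}}{\unsorted{F}}$, and then lift it to $\Gamma$: since $\gr{I}{\Gamma}$ and $\gr{\unsorted{I}}{\unsorted{\Gamma}}$ are the sets of groundings of the individual members of $\Gamma$ and $\unsorted{\Gamma}$, and since $\Gamma$ is finite and $\equivs$ is a congruence with respect to infinitary conjunction, the componentwise strong equivalences combine into $\gr{I}{\Gamma} \equivs \gr{\unsorted{I}}{\unsorted{\Gamma}}$. The formula-level claim I would establish by structural induction on $F$. The cases of $\bot$, the propositional connectives, and the program-variable quantifiers are routine: $\unsorted{\cdot}$ leaves these constructs unchanged apart from recursing, both groundings act homomorphically, the instantiation domain $\T$ is the same on both sides, and so the induction hypothesis together with the congruence of $\equivs$ closes them at once.

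The first point needing care is the atomic case, for which I would isolate a \emph{value-preservation} observation: for every well-sorted ground two-sorted term $t$, the value $(t^I)^*$ used in $\gr{I}{\cdot}$ coincides with the generalized value of $t$ under $\unsorted{I}$. This holds because well-sorted ground terms always have values, so the ``generalized'' qualifier is vacuous, and $\unsorted{I}$ agrees with $I$ on the constants and arithmetic operations. Consequently the two groundings of any atom $p(t_1,\dots,t_k)$ and of any equality $t_1 = t_2$ are literally the same propositional expression (using that $\unsorted{I}$ interprets predicate symbols exactly as $I$ does, for both $\boldp$ and $\boldq$), so these cases give equality, not merely strong equivalence.

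The heart of the argument is the integer-quantifier case, where the two groundings genuinely differ before simplification. The key fact I would establish is that $\unsorted{I} \models \isint(u)$ iff $u \in \Num$: the formula $\isint(u)$ is $u + \numeral{1} \neq u + \numeral{2}$, and while a numeral $u$ yields two distinct numerals, for any non-numeral precomputed term the terms $u + \numeral{1}$ and $u + \numeral{2}$ have no arithmetic value and therefore share the fixed generalized value, making them equal and $\isint(u)$ false. Grounding $\unsorted{(\forall N\, G(N))} = \forall N\,(\isint(N) \to \unsorted{G(N)})$ over all $u \in \T$ then yields, using this fact,
\[
	\{\, \gr{\unsorted{I}}{\isint(u)} \to \gr{\unsorted{I}}{\unsorted{G(u)}} \mid u \in \T \,\}^{\wedge},
\]
in which the conjuncts for $u \in \Num$ have the form $\top \to \gr{\unsorted{I}}{\unsorted{G(u)}}$ and those for $u \notin \Num$ the form $\bot \to (\cdots)$. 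Using the strong equivalences $\top \to X \equivs X$ and $\bot \to X \equivs \top$ and the fact that a $\top$-conjunct may be deleted from an infinitary conjunction, this is strongly equivalent to $\{\, \gr{\unsorted{I}}{\unsorted{G(u)}} \mid u \in \Num \,\}^{\wedge}$; the induction hypothesis (with substitution of a numeral for $N$ commuting with $\unsorted{\cdot}$, so that $\unsorted{(G(u))} = (\unsorted{G})(u)$) rewrites each conjunct to $\gr{I}{G(u)}$, giving exactly $\gr{I}{\forall N\, G(N)}$. The existential case is dual, using $\top \wedge X \equivs X$, $\bot \wedge X \equivs \bot$, and deletion of a $\bot$-disjunct from an infinitary disjunction.

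I expect the main obstacle to be bookkeeping at the boundary between the two logics rather than any single deep step: one must verify that the elementary rewrites $\top \to X \equivs X$, $\bot \to X \equivs \top$, and the deletion of trivial members from infinitary conjunctions and disjunctions are valid strong equivalences in the infinitary propositional setting of \cite{halipeva17a} \emph{and} are congruences, so that they may be applied underneath the infinitary connectives produced by grounding. One must also confirm that substituting a numeral for an integer variable commutes with the guard-insertion operation $\unsorted{\cdot}$, which is what legitimizes applying the induction hypothesis to the instances $\unsorted{G(u)}$.
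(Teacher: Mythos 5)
Your proposal is correct and follows essentially the same route as the paper: structural induction on formulas with the integer-quantifier case as the only nontrivial one, where $\gr{\unsorted{I}}{\isint(u)}$ evaluates to $\top$ for numerals and $\bot$ otherwise, and the resulting trivial conjuncts/disjuncts are discharged up to strong equivalence. The paper's proof is just a terser version of the same argument (it omits the atomic-case and congruence bookkeeping you spell out).
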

\begin{proof}
We show $\gr{I}{F} \equivs \gr{\unsorted{I}}{\unsorted{F}}$ for a formula~$F$,
which implies $\gr{I}{\Gamma} \equivs \gr{\unsorted{I}}{\unsorted{\Gamma}}$.
We proceed by structural induction.
The only relevant cases are quantifiers over integer variables.
We show the case of a universal quantifier here.
Let~$N$ be an integer variable
and~$G'(u)$ stand for~$\gr{\unsorted{I}}{\unsorted{G}(u)}$.
Then,
\begin{align*}
\gr{I}{\forall N \, G(N)}
&=
\{ \gr{I}{G(u)} \mid u \in \Num \}^{\wedge}
\\
&\equivs \{ G'(u) \mid u \in \Num \}^{\wedge}	\tag{induction hypothesis}
\\
&\equivs (\{ \top \to G'(u) \mid u \in \Num \} \cup \{ \bot \to G'(u) \mid u \in \T \setminus \Num \})^{\wedge}
\\
&=\gr{\unsorted{I}}{\forall N \, (\isint(N) \to \unsorted{G}(N))}
\\
&=\gr{\unsorted{I}}{\unsorted{(\forall N \, G(N))}}.
\end{align*}
The case for the existential quantifier is analogous.
\end{proof}

Next, we combine these results to establish a correspondence between the stable models of a first-order theory and the stable models of its infinitary grounding.

\begin{lemma}\label{lem:aux:thm:SM-INF}
Let~$\Gamma$ be a finite two-sorted theory,
$I$~be a two-sorted interpretation,
and~$\boldp$ be the list of all predicate symbols in the signature.
Then, the \SMstable{} and the \INFstable{} models of~$\Gamma$ coincide.
\end{lemma}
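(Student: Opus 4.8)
The plan is to reduce the two-sorted claim to the one-sorted Proposition~\ref{prop:SM-INF.one-sorted} by passing through the translation $\unsorted{\cdot}$ and the transport lemmas already proved. Since $\boldp$ lists \emph{all} predicate symbols of the signature, the complementary set $\boldq$ of extensional symbols is empty, so Proposition~\ref{prop:SM-INF.one-sorted} applies verbatim to a one-sorted counterpart of $\Gamma$. As $\Gamma$ is finite, I would let $F$ be the conjunction of its members, so that the \SMstable{} models of $\Gamma$ are exactly the models of $\SM_\boldp[F]$, while the grounding $\gr{I}{\Gamma}$ is handled formula by formula.

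For the SM side, Lemma~\ref{lem:restricting.quantifiers.fo.sm} gives that $I \models \SM_\boldp[F]$ iff $\unsorted{I} \models \SM_\boldp[\unsorted{F}]$; that is, $I$ is a two-sorted \SMstable{} model of $\Gamma$ iff $\unsorted{I}$ is a one-sorted \SMstable{} model of $\unsorted{\Gamma}$. For the INF side I would combine two facts. First, because $\unsorted{I}$ interprets every predicate symbol exactly as $I$ does and interprets each precomputed term as itself, the set $\At{I}$ coincides with the set of precomputed $\boldp$-atoms satisfied by $\unsorted{I}$. Second, Lemma~\ref{lem:restricting.quantifiers.inf} yields $\gr{I}{\Gamma} \equivs \gr{\unsorted{I}}{\unsorted{\Gamma}}$, and strong equivalence of infinitary propositional theories entails sameness of stable models. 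Together these show that $\At{I}$ is a stable model of $\gr{I}{\Gamma}$ iff the atoms satisfied by $\unsorted{I}$ form a stable model of $\gr{\unsorted{I}}{\unsorted{\Gamma}}$, i.e., $I$ is a two-sorted \INFstable{} model of $\Gamma$ iff $\unsorted{I}$ is a one-sorted \INFstable{} model of $\unsorted{\Gamma}$.

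It then remains to close the loop with Proposition~\ref{prop:SM-INF.one-sorted}, applied to the finite one-sorted theory $\unsorted{\Gamma}$: its one-sorted \SMstable{} and \INFstable{} models coincide. Chaining the SM equivalence, this proposition, and the INF equivalence gives that $I$ is a two-sorted \SMstable{} model of $\Gamma$ iff it is a two-sorted \INFstable{} model of $\Gamma$, as required.

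I expect no conceptual obstacle here, since all the substantive work is carried by the transport lemmas and by the one-sorted proposition; the only points demanding care are bookkeeping ones. The main one is verifying that $\unsorted{\Gamma}$ introduces no predicate symbols beyond $\boldp$ (the auxiliary $\isint$ uses only $+$ and $=$), so that Proposition~\ref{prop:SM-INF.one-sorted}, which presupposes $\boldp$ to be \emph{all} predicate symbols of the signature, is genuinely applicable to $\unsorted{\Gamma}$. The second is to be explicit that $\equivs$ gives the two groundings the same stable models rather than merely the same ordinary models.
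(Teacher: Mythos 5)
Your proposal is correct and follows essentially the same route as the paper's proof: both chain Lemma~\ref{lem:restricting.quantifiers.fo.sm} (to transport the \SMstable{} side to the one-sorted setting), Proposition~\ref{prop:SM-INF.one-sorted} (the one-sorted coincidence), and Lemma~\ref{lem:restricting.quantifiers.inf} together with the observation $\At{(\unsorted{I})} = \At{I}$ (to transport the \INFstable{} side back). The bookkeeping points you flag, including that strong equivalence of the groundings yields sameness of stable models, are exactly what the paper relies on implicitly.
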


\begin{proof}
We have
\begin{align*}
& I \textrm{ is a $\SMstable{}$ model of } \Gamma\\
\textrm{iff }
&\unsorted{I} \textrm{ is a $\SMstable{}$ model of } \unsorted{\Gamma} \tag*{(Lemma~\ref{lem:restricting.quantifiers.fo.sm})}\\
\textrm{iff }
&\unsorted{I} \textrm{ is an \INFstable{} model of } \unsorted{\Gamma} \tag*{(Proposition~\ref{prop:SM-INF.one-sorted})}\\
\textrm{iff }
&\At{(\unsorted{I})} \textrm{ is a stable\ model of } \gr{\unsorted{I}}{\unsorted{\Gamma}}\\
\textrm{iff }
&\At{(\unsorted{I})} \textrm{ is a stable\ model of } \gr{I}{\Gamma} \tag*{(Lemma~\ref{lem:restricting.quantifiers.inf})}\\
\textrm{iff }
&\At{I} \textrm{ is a stable\ model of } \gr{I}{\Gamma}\\
\textrm{iff }
&I \textrm{ is an \INFstable{} model of } \Gamma.
\end{align*}
For the next-to-last equivalence,
just note that ${\At{(\unsorted{I})} = \At{I}}$.
\end{proof}

\begin{proposition}\label{prop:SM-INF}
For any finite two-sorted theory~$\Gamma$ and list of predicate symbols~$\boldp$, its \SMstable{} models and its \INFstable{} models coincide.
\end{proposition}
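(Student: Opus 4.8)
The plan is to reduce this statement, in which only the predicates of~$\boldp$ are intensional while the remaining predicates~$\boldq$ of the signature are extensional, to the all-intensional case already established in Lemma~\ref{lem:aux:thm:SM-INF}. To this end I would introduce, for every predicate symbol~$q/n$ in~$\boldq$, the choice sentence $\forall V_1 \dots V_n\,(q(V_1,\dots,V_n) \vee \neg q(V_1,\dots,V_n))$, collect these sentences into a set~$\Choice{\boldq}$, and write~$\boldp\boldq$ for the list of all predicate symbols of the signature. The two facts I would aim for are: \emph{(A)} $I \models \SM_\boldp[\Gamma]$ iff $I \models \SM_{\boldp\boldq}[\Gamma \cup \Choice{\boldq}]$, and \emph{(B)} $I$~is an $\INFstable$ model of~$\Gamma$ iff $I$~is an ${\rm INF}$-$(\boldp\boldq)$-stable model of~$\Gamma \cup \Choice{\boldq}$. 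Chaining~(A), Lemma~\ref{lem:aux:thm:SM-INF} applied to the finite theory~$\Gamma \cup \Choice{\boldq}$ with the list~$\boldp\boldq$, and~(B) then yields the proposition: (A)~moves from $\boldp$-stability to $(\boldp\boldq)$-stability on the first-order side, the Lemma exchanges~$\SM$ for~${\rm INF}$ in the all-intensional setting, and~(B)~returns to the intended partition on the infinitary side.

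Fact~(A) is a standard property of the~$\SM$ operator and is the first-order expression of the idea that augmenting a theory with choice sentences for~$\boldq$ and then treating~$\boldq$ as intensional is the same as treating~$\boldq$ as extensional~\cite{feleli11a}. I would verify it by a direct computation of the star transform: under the assumption $\boldu \le \boldp\boldq$, the conjunct obtained from $(\Choice{\boldq})^*$ forces the $\boldq$-component of~$\boldu$ to coincide with~$\boldq$ itself, so the second-order quantifier of $\SM_{\boldp\boldq}[\Gamma \cup \Choice{\boldq}]$ effectively ranges only over the $\boldp$-component of~$\boldu$ and reproduces $\SM_\boldp[\Gamma]$.

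Fact~(B) is where the real work lies, and I expect it to be the main obstacle. After unwinding the definitions, the task is to compare the infinitary grounding $\gr{I}{\Gamma}$, in which every $\boldq$-atom has already been replaced by~$\top$ or~$\bot$ according to~$I$, with the all-intensional grounding $\grp{\boldp\boldq}{I}{\Gamma \cup \Choice{\boldq}}$, in which those atoms survive as genuine propositional atoms and are accompanied by the grounded choice formulas. The decisive point is the behaviour of the grounded choice formulas under the infinitary reduct of~\cite{halipeva17a}: for a $\boldq$-atom true in~$I$ the reduct forces the atom, whereas for a $\boldq$-atom false in~$I$ the reduct is trivial, so every model of the reduct of $\grp{\boldp\boldq}{I}{\Gamma \cup \Choice{\boldq}}$ with respect to the full atom set~$I^{\boldp\boldq}$ must contain all of~$I^{\boldq}$. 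Restricted to such models the surviving $\boldq$-atoms evaluate exactly as the constants~$\top$/$\bot$ substituted in $\gr{I}{\Gamma}$, so the corresponding reducts agree on the $\boldp$-atoms, and minimality of~$I^{\boldp\boldq}$ for the former reduct becomes equivalent to minimality of~$\At{I}$ for the latter. The delicate step is to make ``evaluate exactly as the substituted constants'' precise at the level of maximal unsatisfied subformulas in the infinitary reduct; I would phrase this comparison through the strong-equivalence relation~$\equivs$, reusing the bookkeeping already developed for Lemma~\ref{lem:restricting.quantifiers.inf}.
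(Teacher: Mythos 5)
Your proposal follows essentially the same route as the paper: both reduce the extensional case to Lemma~\ref{lem:aux:thm:SM-INF} by adjoining the choice sentences $\Choice{\boldq}$ for the extensional symbols~$\boldq$, invoking the equivalence of $\SM_\boldp[\Gamma]$ with $\SM_{\boldp\boldq}[\Gamma \cup \Choice{\boldq}]$ (Theorem~2 of \citeNP{feleli11a}, which you propose to re-derive by computing the star transform), and then comparing the two infinitary groundings under the reduct, using the fact that the grounded choice formulas force any candidate countermodel to contain~$I^{\boldq}$. The only divergence is that the paper additionally routes the argument through the theory~$\Gamma'$ in which every extensional atom is replaced by its double negation, so that all $\boldq$-occurrences are evaluated classically by the reduct and vanish from the grounding before the minimality comparison; your direct argument on the restricted class of models $J$ with $J^{\boldq} \supseteq I^{\boldq}$ achieves the same effect and is sound.
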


\begin{proof}
Let $\boldq$~be the list of all extensional predicate symbols in~$\Gamma$,
that is, all predicate symbols in the signature that do not belong to~$\boldp$,
and let $\Choice{\boldq}$ be the set containing a choice sentence ${\forall\boldU \, (p(\boldU) \vee \neg p(\boldU))}$ for every predicate ${p \in \boldq}$
and where $\boldU$~is a list of distinct program variables.
Let $\Gamma'$~be the theory obtained by replacing each occurrence of~$p(\boldt)$ in~$\Gamma$ with~$p \in \boldq$ by~$\neg\neg p(\boldt)$.
Let $\Gamma_1 = \Gamma \cup \Choice{\boldq}$ and $\Gamma_1' = \Gamma' \cup \Choice{\boldq}$.
Given the choice sentences in~$\Gamma_1$ and~$\Gamma_1'$ for the predicate symbols in~$\boldq$,
the $\boldp\boldq$-stable models of~$\Gamma_1$ and~$\Gamma_1'$ coincide.
Then,
\begin{align*}
&I \textrm{ is a $\boldp$-stable model of } \Gamma\\
\textrm{iff }
&I \textrm{ is a $\boldp\boldq$-stable model of } \Gamma_1 \tag*{(Theorem~2 in \citeNP{feleli11a})}\\
\textrm{iff }
&I \textrm{ is a $\boldp\boldq$-stable model of } \Gamma_1'\\
\textrm{iff }
&I \textrm{ is an INF-$\boldp\boldq$-stable model of } \Gamma_1' \tag*{(Lemma~\ref{lem:aux:thm:SM-INF})}\\
\textrm{iff }
&I \textrm{ is an \INFstable{} model of } \Gamma. \tag*{(see below)}
\end{align*}
It remains to be shown that the
INF-$\boldp\boldq$-stable models of~$\Gamma_1'$
coincide with the
\INFstable{} models of~$\Gamma$.
For this, note that
\begin{align*}
[\grp{\boldp\boldq}{I}{\Gamma_1'}]^{I^{\boldp\boldq}}
&=
[\grp{\boldp\boldq}{I}{\Gamma' \cup \Choice{\boldq}}]^{I^{\boldp\boldq}}
\\
&= [\grp{\boldp\boldq}{I}{\Gamma'}]^{I^{\boldp\boldq}} \cup [\grp{\boldp\boldq}{I}{\Choice{\boldq}}]^{I^{\boldp\boldq}}
\\
&\equiv [\gr{I}{\Gamma'}]^{I^{\boldp\boldq}} \cup [\grp{\boldp\boldq}{I}{\Choice{\boldq}}]^{I^{\boldp\boldq}}
\\
&= [\gr{I}{\Gamma'}]^{I^{\boldp}} \cup [\grp{\boldp\boldq}{I}{\Choice{\boldq}}]^{I^{\boldq}}
\\
&\equiv [\gr{I}{\Gamma'}]^{I^{\boldp}} \cup I^{\boldq}.
\end{align*}
The first two equalities hold by definition.
The third step holds because all predicate symbols in~$\boldq$ occur in~$\Gamma'$ under the scope of negation.
Note that, for $q \in \boldq$,
it follows that
\begin{align*}
[\grp{\boldp\boldq}{I}{\neg q(\boldt)}]^{I^{\boldp\boldq}}
&=
[\neg q((\boldt^I)^*)]^{I^{\boldp\boldq}}
= I^{\boldp\boldq}(\neg q((\boldt^I)^*))
\\
&\equiv \neg I(q((\boldt^I)^*))
\\
&= [\neg I(q((\boldt^I)^*)) ]^{I^{\boldp\boldq}}
= [\neg \grp{\boldp}{I}{ q(\boldt)}]^{I^{\boldp\boldq}}
= [\grp{\boldp}{I}{\neg q(\boldt)}]^{I^{\boldp\boldq}},
\end{align*}
where
\begin{align*}
I(q((\boldt^I)^*)) &= \begin{cases}
\top &\text{if $q((\boldt^I)^*) \in I^\boldq$;}
\\
\bot &\text{otherwise}
\end{cases}
\\
I^{\boldp\boldq}(\neg q((\boldt^I)^*)) &= \begin{cases}
\phantom{\neg}\bot \equiv \neg I(q((\boldt^I)^*))  &\text{if $q((\boldt^I)^*) \in I^\boldq$;}
\\
\neg \bot = \neg I(q((\boldt^I)^*)) &\text{otherwise.}
\end{cases}
\end{align*}
The fourth case is because
no predicate in~$\boldq$ occurs in~$\gr{I}{\Gamma'}$.
Recall that extensional predicate symbols are removed by grounding.
Similarly,
${\Choice{\boldq}}$
only contains predicate symbols from~$\boldq$.

We now prove the following equivalence:
\begin{gather}
[\grp{\boldp\boldq}{I}{\Gamma_1'}]^{I^{\boldp\boldq}}
\equiv
[\gr{I}{\Gamma}]^{I^{\boldp}} \cup I^{\boldq}
	\label{eq:1:prop:SM-INF}
\end{gather}
For this, note that
\begin{gather*}
\gr{I}{\Gamma'}
\equiv \gr{I}{\Gamma}
\end{gather*}
holds because $\gr{I}{\Gamma'}$ is the result of replacing each occurrence of~$p(\boldt)$ in $\gr{I}{\Gamma}$ by $\neg\neg p(\boldt)$
with ${p \in \boldq}$.
As a result,
$\gr{I}{\Gamma'}$
is the outcome of replacing each occurrence of~$X$ in $\gr{I}{\Gamma}$
(with ${X \in \{\top,\bot\}}$)
by~$\neg\neg X$.
Consequently,
equivalence~\eqref{eq:1:prop:SM-INF} is proven,
and we get that
any interpretation~$J$ satisfies that
$J^{\boldp\boldq} \models [\grp{\boldp\boldq}{I}{\Gamma_1'}]^{I^{\boldp\boldq}}$
iff
$J^{\boldp} \models [\gr{I}{\Gamma}]^{I^\boldp}$
and
${J^{\boldq} \models I^\boldq}$.
Note that
${J^{\boldq} \models I^\boldq}$
iff
${J^{\boldq} \supseteq I^\boldq}$.
Then,
\begin{align*}
&I \textrm{ is an INF-$\boldp\boldq$-stable model of } \Gamma_1'\\
\textrm{iff } &I^{\boldp\boldq} \textrm{ is a stable model of } \grp{\boldp\boldq}{I}{\Gamma_1'}\\
\textrm{iff } &I^{\boldp\boldq} \textrm{ is a model of } [\grp{\boldp\boldq}{I}{\Gamma_1'}]^{I^{\boldp\boldq}} \textrm{ and there is no model } J \subset I^{\boldp\boldq} \textrm{ of } [\grp{\boldp\boldq}{I}{\Gamma_1'}]^{I^{\boldp\boldq}}\\
\textrm{iff } &I^\boldp \textrm{ is a model of } [\gr{I}{\Gamma}]^{I^\boldp} \textrm{ and there is no model } J \subset I^{\boldp\boldq} \textrm{ of } [\gr{I}{\Gamma}]^{I^\boldp} \cup I^\boldq\\
\textrm{iff } &I^\boldp \textrm{ is a model of } [\gr{I}{\Gamma}]^{I^\boldp} \textrm{ and there is no model } J' \subset I^\boldp \textrm{ of } [\gr{I}{\Gamma}]^{I^\boldp}\\
\textrm{iff } &I^\boldp \textrm{ is a stable model of } [\gr{I}{\Gamma}]\\
\textrm{iff } &I \textrm{ is an INF-$\boldp$-stable model of } \Gamma. \tag*{\qedhere}
\end{align*}
\end{proof}

The following adaptation of Proposition 3 from \cite{lilusc19a}
to our notation is useful to prove the Main Lemma.

\begin{proposition}\label{prop:propositionalitation}
Any rule~$R$ and interpretation~$I$ satisfy $\gr{I}{\tau^*R} \equiv_s \tau R$.
\end{proposition}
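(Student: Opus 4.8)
The plan is to reduce the claim to Proposition~3 of \cite{lilusc19a}, which establishes the analogous correspondence in the purely propositional setting, by checking that the grounding operator $\gr{I}{\cdot}$ reproduces the propositional grounding used there and that ordinary equivalence of infinitary formulas may be replaced throughout by strong equivalence $\equivs$. The argument is compositional. Both $\tau$ and $\tau^*$ are built up from the value formulas $\val{t}{Z}$, from the body translation $\tau^B$, and from the head formula $H$, and $\gr{I}{\cdot}$ distributes over $\land$, $\lor$, and $\rar$ by definition. Moreover, grounding unfolds a quantifier over a program variable into an infinitary conjunction or disjunction indexed by $u \in \T$, and a quantifier over an integer variable into one indexed by $u \in \Num$. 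Applying this to the universal closure $\tau^* R$ turns it into an infinitary conjunction whose conjuncts are indexed by the instantiations of the global variables of $R$ by precomputed terms (for program variables) and numerals (for integer variables); these are exactly the instantiations over which $\tau R$, viewed as a set of ground formulas, ranges, so that the single formula $\gr{I}{\tau^* R}$ is strongly equivalent to the theory $\tau R$ provided the conjuncts match the members of $\tau R$ one by one.

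The technical core is the following value lemma, proved by structural induction on the program term $t$: for every precomputed term $r$, the variable-free infinitary formula $\gr{I}{\val{t}{r}}$ is strongly equivalent to $\top$ if $r$ is one of the values of $t$ and to $\bot$ otherwise. In the base cases---$t$ a numeral, a symbolic constant, an already instantiated variable, $\myinf$, or $\mysup$---the formula $\val{t}{r}$ is $r = t$, so $\gr{I}{\val{t}{r}}$ is $\top$ exactly when $r^I = t^I$, as required. In the inductive cases for $+$, $-$, $\times$, $/$, $\backslash$, and $\twodots$, the fresh integer quantifiers of $\val{t}{Z}$ ground to infinitary connectives over $\Num$; together with the standard reading of arithmetic, these reproduce the recursive definition of the value set of $t$ used by $\tau$. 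In particular, when an arithmetic subterm has no integer value (for instance, $X + \numeral{1}$ with $X$ instantiated by a non-numeral), the corresponding existential grounds to $\bot$, matching the fact that $t$ then has no value.

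With the value lemma in hand, the next step checks the pieces separately. For a body atom $p(t_1,\dots,t_k)$, the existentials in $\tau^B(p(t_1,\dots,t_k))$ ground to an infinitary disjunction over $\T^k$; by the value lemma only the tuples $\langle r_1,\dots,r_k\rangle$ with each $r_i$ a value of $t_i$ contribute, each leaving the atom $p(r_1,\dots,r_k)$, which is exactly what $\tau$ produces. The cases of a negated atom, a doubly negated atom, and a comparison are analogous, and the three head formulas $H$ (basic, choice, empty) are handled the same way, using that the leading universal quantifiers over the $Z_i$ ground to conjunctions. Combining these with the compositional remarks above and the fact that $\equivs$ is a congruence with respect to the infinitary connectives \cite[Section~3.1]{halipeva17a} yields $\gr{I}{\tau^* R} \equivs \tau R$.

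The main obstacle is the value lemma, and within it the arithmetic cases for division, modulo, and the interval operator $\twodots$, where the first-order encoding uses auxiliary quantified integer variables constrained by equations such as $I = J \times Q + R$ and must be shown to ground, over $\Num$ and with standard arithmetic, to precisely the value set prescribed by the semantics of $\tau$. A secondary point requiring care is the bookkeeping of variable scopes---distinguishing the global variables that become universally quantified in the closure, and range over $\T$ or $\Num$, from the fresh existential variables $Z_i$ introduced by $\tau^B$ and by $H$---so that the conjuncts of $\gr{I}{\tau^* R}$ line up one to one with the ground formulas of $\tau R$. These computations are essentially those carried out in the proof of Proposition~3 of \cite{lilusc19a}; the only changes are to replace its propositional grounding by $\gr{I}{\cdot}$ and its appeals to classical equivalence of infinitary formulas by strong equivalence.
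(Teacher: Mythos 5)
Your proposal is correct and takes essentially the same route as the paper: the paper's proof simply observes that, after identifying precomputed terms with their names in $I$, the grounding $\gr{I}{\tau^* R}$ coincides with the propositionalization $\prop{\tau^* R}$ of Lifschitz et al.\ (2019) and then invokes their Proposition~3. All the details you work out explicitly (the value lemma for $\val{t}{r}$, the case analysis on bodies and heads, the matching of conjuncts with ground instances) are exactly the content delegated to that reference.
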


\begin{proof}
By identifying the precomputed terms in $\prop{\tau^* \Pi}$ with their names in~$I$,
we get $\gr{I}{\tau^* \Pi} = \prop{\tau^* \Pi}$,
where~$\prop{\tau^* \Pi}$ is defined as in \cite[Section~5]{lilusc19a}.
\end{proof}

\begin{proof}[Proof of the Main Lemma]
Let $\Pi$~be a program,
let $\boldp$~be the list of all predicate symbols occurring in~$\Pi$ other than the comparison symbols,
and let $\I$~be a set of precomputed atoms.
By the choice of~$\boldp$,
we get that all predicate symbols in~$\Pi$ and none of the relations belong to~$\boldp$
and, therefore, $\I = \At{(\Iu)}$.
Then,
from Proposition~\ref{prop:SM-INF},
it follows that
$\Iu$~is a \SMstable{} model of~$\tau^* \Pi$
iff $\Iu$ is an \INFstable{} model of~$\tau^* \Pi$
iff $\I$ is a $\subseteq$-minimal model of $[\gr{\Iu}{\tau^* \Pi}]^{\I}$
iff $\I$ is a stable model of $[\gr{\Iu}{\tau^* \Pi}]$
iff $\I$ is a stable model of~$\tau \Pi$
(Proposition~\ref{prop:propositionalitation})
iff $\I$ is a stable model of~$\Pi$ (by definition).
\end{proof}

\subsection{Main Lemma for IO-Programs}
\label{sec:main-lemma.io-programs}

We need the following terminology to extend the Main Lemma to io-programs.
The models of formula $\exists \boldH\, (\SM_\boldp[F])^\boldh_\boldH$ are called the
\emph{{$\boldp$-stable} models with private symbols~$\boldh$}
of~$F$,
where $\boldH$~is a tuple of predicate variables of the same length as~$\boldh$
and $F^\boldh_\boldH$~is the result of replacing all occurrences of constants from~$\boldh$ by the corresponding variables from~$\boldH$.
For a set~$\Gamma$ of first-order sentences, the
\emph{{$\boldp$-stable} models with private symbols~$\boldh$} of~$\Gamma$ are the {$\boldp$-stable} models with private symbols~$\boldh$
of the conjunction of all formulas in~$\Gamma$ \cite{cafali20a}.
We usually omit parentheses and write just $\exists \boldH\, \SM_\boldp[F]^\boldh_\boldH$ instead of $\exists \boldH\, (\SM_\boldp[F])^\boldh_\boldH$.

\begin{mainlemmaio}\label{thm:main-lemma2}
Let $\Omega=(\Pi,\PH, \In, \Out)$ be an io-program,
let $\boldp$~be the list of all predicate
symbols occurring in~$\Pi$ other than the comparison and input symbols,
and let $\boldh$~be the list of all its private symbols.
A set~$\I$ of precomputed public atoms is an io-model of~$\Omega$ for an
input~$(\boldv,\boldi)$
iff $\I^\boldv$~is a \SMstable{} model with private symbols~$\boldh$ of~$\tau^*\Pi$ and $\I^{\it in} = \boldi$.
\end{mainlemmaio}

The following is a reformulation of the Splitting Theorem in \cite{felelipa09a} adapted to our notation, and it will be useful in proving the above result.
We adopt the following terminology.
An occurrence of a predicate symbol in a formula is called \emph{negated} if it belongs to a subformula of the form $F \to \bot$ and \emph{nonnegated} otherwise.
An occurrence of a predicate symbol in a formula is called \emph{positive} if the number of implications containing that occurrence in the antecedent is even.
It is called \emph{strictly positive} if that number is~$0$.
A \emph{rule} of a first-order formula~$F$ is a strictly positive occurrence of an implication in~$F$.
The \emph{dependency graph} of a formula is a directed graph that
\begin{itemize}
\item has all intensional predicate symbols as vertices and

\item has an edge from $p$ to~$q$ if, for some rule~$G \to H$ of~$F$,
formula~$G$ has a positive nonnegated occurrence of~$q$
and
$H$~has a strictly positive occurrence of~$p$.
\end{itemize}

A formula~$F$ is said to be negative on a tuple~$\boldp$ of predicate constants if members of~$\boldp$ have no strictly positive
occurrences in~$F$.

\begin{proposition}\label{prop:splitting.one-sorted}
Let $F$ and~$G$ be one-sorted first-order sentences and let $\boldp$ and~$\boldq$ be two disjoint tuples of distinct predicate symbols
such that
\begin{itemize}
\item each strongly connected component of the of the dependency graph of $F \wedge G$ is a subset either of $\boldp$ or~$\boldq$,
\item all occurrences in~$F$ of symbols from~$\boldq$ are negative, and
\item all occurrences in~$G$ of symbols from~$\boldp$ are negative.
\end{itemize}
Then,
$\SM_{\boldp\boldq}[F \wedge G]$ is equivalent to
$\SM_{\boldp}[F] \wedge \SM_{\boldq}[G]$.
\end{proposition}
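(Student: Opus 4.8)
The plan is to obtain the proposition as a direct instance of the Symmetric Splitting Theorem of \cite{felelipa09a}; the definitions stated just above the proposition were introduced precisely so that the proposition becomes a notational variant of that theorem. First I would recall their theorem in full: given disjoint tuples $\boldp$ and $\boldq$ of intensional predicate symbols and one-sorted sentences $F$ and $G$ such that every strongly connected component of the dependency graph of $F \wedge G$ is contained in $\boldp$ or in $\boldq$, with $F$ negative on $\boldq$ and $G$ negative on $\boldp$, the second-order sentence $\SM_{\boldp\boldq}[F \wedge G]$ is equivalent to $\SM_{\boldp}[F] \wedge \SM_{\boldq}[G]$. The whole content of the argument is then to exhibit that the hypotheses of our statement are exactly theirs under a translation of vocabulary.

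The substance of the proof is thus a dictionary rather than a fresh calculation. I would match, one by one, the primitives introduced just above the proposition with those of \cite{felelipa09a}, checking three correspondences. The notions of \emph{negated}, \emph{positive}, and \emph{strictly positive} occurrence, together with that of a \emph{rule} of a formula (a strictly positive occurrence of an implication), must coincide with theirs. The edge relation of the dependency graph---an edge from $p$ to $q$ whenever some rule $G \to H$ has a positive nonnegated occurrence of $q$ in $G$ and a strictly positive occurrence of $p$ in $H$---must be literally their edge relation. Finally, the polarity hypotheses of the proposition, that all occurrences of the $\boldq$-symbols in $F$ and of the $\boldp$-symbols in $G$ are negated, must be reconciled with their condition of being \emph{negative on a tuple}, i.e.\ having no strictly positive occurrences; here one checks that a negated occurrence is never strictly positive, so the stated hypotheses deliver exactly what their theorem requires. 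Once these are aligned, the three displayed bullets become the hypotheses of the cited theorem, and the conclusion follows at once.

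The main obstacle I anticipate is confirming the dependency-graph correspondence exactly, since any discrepancy there---for instance, if \cite{felelipa09a} state their theorem for a formula already presented in a particular normal form, or phrase the edge relation in terms slightly different from the antecedent/consequent split used here---would force an intermediate step rewriting $F \wedge G$ into the form their theorem expects, together with a check that this rewriting preserves the $\SM$ operators and all the polarity and dependency conditions. I would also confirm that their theorem is genuinely symmetric in the two lists, so that the second and third conditions of the proposition map onto the two negativity requirements. No adjustment for sorts is needed, as the proposition is deliberately restricted to one-sorted sentences; the passage to the two-sorted setting is handled separately elsewhere in the development.
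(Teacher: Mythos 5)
Your proposal matches the paper exactly: the paper offers no proof of this proposition, presenting it simply as ``a reformulation of the Splitting Theorem in \cite{felelipa09a} adapted to our notation,'' which is precisely the justification-by-dictionary you describe. The vocabulary-matching you outline (negated/positive/strictly positive occurrences, the edge relation, and ``negative on a tuple'') is exactly the correspondence the paper relies on implicitly, so nothing further is needed.
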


This result can be straightforwardly lifted to the two-sorted language as follows.

\begin{proposition}\label{prop:splitting}
Let $F$ and~$G$ be two-sorted first-order sentences and let $\boldp$ and~$\boldq$ be two disjoint tuples of distinct predicate symbols
such that
\begin{itemize}
\item each strongly connected component of the dependency graph of~$F \wedge G$ is a subset either of $\boldp$ or~$\boldq$,
\item all occurrences in~$F$ of symbols from~$\boldq$ are negative, and
\item all occurrences in~$G$ of symbols from~$\boldp$ are negative.
\end{itemize}
Then,
$\SM_{\boldp\boldq}[F \wedge G]$ is equivalent to
$\SM_{\boldp}[F] \wedge \SM_{\boldq}[G]$.
\end{proposition}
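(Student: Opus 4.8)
The plan is to derive the two-sorted splitting result directly from its one-sorted counterpart (Proposition~\ref{prop:splitting.one-sorted}) by passing through the unsorting translation $\unsorted{\cdot}$ and Lemma~\ref{lem:restricting.quantifiers.fo.sm}. Fix an arbitrary two-sorted interpretation~$I$. By Lemma~\ref{lem:restricting.quantifiers.fo.sm} applied to $F\wedge G$ with the list $\boldp\boldq$, we have $I\models\SM_{\boldp\boldq}[F\wedge G]$ iff $\unsorted{I}\models\SM_{\boldp\boldq}[\unsorted{(F\wedge G)}]$, and since unsorting distributes over the connectives, $\unsorted{(F\wedge G)}=\unsorted{F}\wedge\unsorted{G}$. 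The same lemma applied to $F$ with $\boldp$ and to $G$ with $\boldq$ reduces $I\models\SM_{\boldp}[F]\wedge\SM_{\boldq}[G]$ to $\unsorted{I}\models\SM_{\boldp}[\unsorted{F}]\wedge\SM_{\boldq}[\unsorted{G}]$. It therefore suffices to connect these one-sorted statements, which is exactly what Proposition~\ref{prop:splitting.one-sorted} does for $\unsorted{F}$ and $\unsorted{G}$, provided its three hypotheses are inherited from the hypotheses on $F$ and $G$.

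Verifying this inheritance is the crux of the argument, and I would establish it by a short structural induction. The key observation is that $\unsorted{\cdot}$ alters a sentence only by relativizing integer quantifiers, replacing $\forall N\, H$ by $\forall N\,(\isint(N)\to\unsorted{H})$ and $\exists N\, H$ by $\exists N\,(\isint(N)\wedge\unsorted{H})$, where $\isint(N)$ abbreviates $N+\numeral{1}\neq N+\numeral{2}$. Since $\isint(N)$ contains no predicate symbols from $\boldp\cup\boldq$, the translation neither adds nor removes occurrences of intensional symbols, and placing $\unsorted{H}$ in the consequent of the new implication (or in a new conjunct) leaves unchanged, for every such occurrence, the number of implications having it in their antecedent. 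Hence the negated/nonnegated, positive, and strictly positive classification of each occurrence of a symbol from $\boldp\cup\boldq$ is preserved under unsorting.

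From this preservation the three hypotheses transfer. The negativity conditions are immediate, since the occurrences of $\boldq$ in $\unsorted{F}$ (respectively of $\boldp$ in $\unsorted{G}$) are precisely the corresponding occurrences in $F$ (respectively $G$) and retain their negative status. For the dependency graph, the new rules $\isint(N)\to\unsorted{H}$ have an antecedent free of intensional symbols and therefore contribute no edges; every other edge is induced by a positive nonnegated occurrence in a rule antecedent together with a strictly positive occurrence in the corresponding consequent, and both are preserved. Thus the dependency graph of $\unsorted{F}\wedge\unsorted{G}$ agrees with that of $F\wedge G$ on $\boldp\cup\boldq$, and its strongly connected components are again each contained in $\boldp$ or in $\boldq$.

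With the hypotheses in place, Proposition~\ref{prop:splitting.one-sorted} yields that $\SM_{\boldp\boldq}[\unsorted{F}\wedge\unsorted{G}]$ and $\SM_{\boldp}[\unsorted{F}]\wedge\SM_{\boldq}[\unsorted{G}]$ are equivalent, hence they agree on $\unsorted{I}$. Chaining this with the three instances of Lemma~\ref{lem:restricting.quantifiers.fo.sm} above gives $I\models\SM_{\boldp\boldq}[F\wedge G]$ iff $I\models\SM_{\boldp}[F]\wedge\SM_{\boldq}[G]$, and since $I$ was arbitrary, the two second-order sentences are equivalent. I expect the main obstacle to be precisely the bookkeeping of the second paragraph: confirming that relativizing integer quantifiers preserves all three polarity notions used in the definitions of negativity and of the dependency graph, so that the structural hypotheses genuinely survive the passage to the one-sorted language.
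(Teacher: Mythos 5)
Your proof is correct and follows essentially the same route as the paper: reduce to the one-sorted case via $\unsorted{\cdot}$ and Lemma~\ref{lem:restricting.quantifiers.fo.sm}, then apply Proposition~\ref{prop:splitting.one-sorted}. The paper dispatches the transfer of hypotheses with the single remark that the dependency graphs of $F \wedge G$ and $\unsorted{F} \wedge \unsorted{G}$ coincide, whereas you spell out why relativizing integer quantifiers by $\isint(N)$ preserves the polarity notions --- a detail the paper leaves implicit.
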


\begin{proof}
Let~$I$ be any interpretation.
Then,
\begin{align*}
&I\models\SM_{\boldp\boldq}[F \wedge G]\\
\textrm{iff }
&\unsorted{I}\models\SM_{\boldp\boldq}[\unsorted{(F \wedge G)}]
	\tag*{(Lemma~\ref{lem:restricting.quantifiers.fo.sm})}\\
\textrm{iff }
&\unsorted{I}\models\SM_{\boldp\boldq}[\unsorted{F} \wedge \unsorted{G}]\\
\textrm{iff }
&\unsorted{I}\models\SM_{\boldp}[\unsorted{F}] \wedge \SM_{\boldq}[\unsorted{G}]
	\tag*{(Proposition~\ref{prop:splitting.one-sorted})}\\
\textrm{iff }
&I \models \SM_{\boldp}[F] \wedge \SM_{\boldq}[G].
	\tag*{(Lemma~\ref{lem:restricting.quantifiers.fo.sm})}
\end{align*}
Note that the dependency graphs of $F \wedge G$ and $\unsorted{F} \wedge \unsorted{G}$ are the same.
\end{proof}

\begin{lemma}\label{thm:main-lemma2.aux}
Let $\Omega=(\Pi,\PH, \In, \Out)$ be an io-program and
let $\boldp$~be the list of all predicate
symbols occurring in~$\Pi$ other than the comparison and input symbols.
A set~$\I$ of precomputed public atoms is a stable model of $\Omega(\boldv,\boldi)$
iff $\I^\boldv$~is a model of $\SM_{\boldp}[\tau^*\Pi]$ and $\I^{\mathit{in}} = \boldi$.
\end{lemma}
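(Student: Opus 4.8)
The plan is to deduce the statement from the (already established) Main Lemma for ordinary programs applied to the concrete program $\Omega(\boldv,\boldi)$, and then to peel off the contribution of the input facts from that of the rules of~$\Pi$ by means of the splitting result, Proposition~\ref{prop:splitting}. Write $\boldq$ for the list of input symbols occurring in $\Omega(\boldv,\boldi)$, so that $\boldp\boldq$ is the list of all predicate symbols of $\Omega(\boldv,\boldi)$ other than the comparison symbols. By the Main Lemma applied to $\Omega(\boldv,\boldi)$, the set~$\I$ is a stable model of $\Omega(\boldv,\boldi)$ iff $\Iu$ is a $\boldp\boldq$-stable model of $\tau^*(\Omega(\boldv,\boldi))$.

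First I would record the syntactic decomposition $\tau^*(\Omega(\boldv,\boldi)) = F \wedge G$, where $F$ is the conjunction of the formulas $\tau^*R$ for the rules $R$ of~$\Pi$ with every placeholder $t$ replaced by $\boldv(t)$, and $G$ is the conjunction of the translations of the input facts~$\boldi$. Since $\tau^*$ reaches the program terms only through $\valop$, and $\valop$ commutes with the replacement of a placeholder by a precomputed term (for a placeholder $t$ the formula $\val{t}{Z}$ is $Z=t$, which becomes $Z=\boldv(t)$, i.e.\ $\val{\boldv(t)}{Z}$), the conjunction $F$ is exactly the substitution instance $(\tau^*\Pi)[\boldv]$. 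Moreover, every atom of~$\boldi$ is a precomputed atom without placeholders, so its translation is logically equivalent to the atom itself; hence $G$ is equivalent to the conjunction of the atoms in~$\boldi$.

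The core step is the application of Proposition~\ref{prop:splitting} to the two parts $F,G$ and the two tuples $\boldp,\boldq$. Its hypotheses hold because the input symbols never occur in the heads of rules of~$\Pi$ (by the definition of an io-program): every occurrence of a symbol of~$\boldq$ in~$F$ lies inside a rule body, hence in the antecedent of the corresponding rule implication, so the symbols of~$\boldq$ have no strictly positive occurrences in~$F$; and the symbols of~$\boldp$ do not occur in~$G$ at all. The same observation shows that in the dependency graph of $F\wedge G$ the vertices in~$\boldq$ are sinks, so each strongly connected component is contained in~$\boldp$ or is a singleton from~$\boldq$. Proposition~\ref{prop:splitting} then yields that $\SM_{\boldp\boldq}[F\wedge G]$ is equivalent to $\SM_\boldp[F]\wedge\SM_\boldq[G]$.

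It remains to evaluate the two conjuncts under~$\Iu$. For the input part, $G$ is equivalent to the conjunction of the atoms of~$\boldi$ and mentions only symbols of~$\boldq$, so its $\boldq$-stable models are precisely the interpretations whose $\boldq$-atoms form the set~$\boldi$; thus $\Iu\models\SM_\boldq[G]$ iff $\I^{\mathit{in}}=\boldi$. For the rule part, I would invoke the standard lemma relating syntactic substitution of placeholders to reinterpretation in the model: since $[\boldv]$ affects only object constants and commutes with forming $\SM_\boldp$ (which manipulates only the predicate symbols in~$\boldp$), we have $\SM_\boldp[F]=(\SM_\boldp[\tau^*\Pi])[\boldv]$, and therefore $\Iu\models\SM_\boldp[F]$ iff $\I^\boldv\models\SM_\boldp[\tau^*\Pi]$. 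Chaining these equivalences with the Main Lemma and the splitting gives the claim. The step I expect to be the main obstacle is the verification of the hypotheses of Proposition~\ref{prop:splitting}: one must argue carefully, from the polarity conventions for occurrences, that the input symbols occur only negatively in the translated rules and that the dependency graph decomposes as required, so that the symmetric splitting is legitimately applicable; the substitution lemma for the second-order $\SM$ operator and the fact-minimization argument for $\SM_\boldq[G]$ are routine but should be stated explicitly for formulas over~$\sigma_\Pi$ and for both sorts of variables.
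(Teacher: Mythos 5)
Your proposal is correct and follows essentially the same route as the paper: apply the Main Lemma to $\Omega(\boldv,\boldi)$, split $\tau^*(\Omega(\boldv,\boldi))$ into the translated rules and the input facts via Proposition~\ref{prop:splitting} (justified exactly as you do, by the absence of input symbols in rule heads), and then convert $\Iu\models\SM_\boldp[(\tau^*\Pi)[\boldv]]$ into $\I^\boldv\models\SM_\boldp[\tau^*\Pi]$ by the placeholder-reinterpretation observation. The only cosmetic difference is that the paper phrases the replacement of $\tau^*(\boldi)$ by the atoms $\boldi$ as a strong equivalence $\equiv_s$ rather than a plain logical equivalence, which is the form actually needed under the $\SM$ operator.
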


\begin{proof}
Recall that, from the Main Lemma stated in Section~\ref{sec:main.lemma},
we get that
$\I$~is a stable model of $\Omega(\boldv,\boldi)$
iff $\Iu$~is a $\boldp\boldq$-stable model of $\tau^*(\Omega(\boldv,\boldi))$
iff
$\Iu$~is a model of $\SM_{\boldp\boldq}[\tau^*(\Omega(\boldv,\boldi))]$,
where $\boldq$~is the list of all input symbols.
Let us denote by~$\Omega(\boldv)$ the set of rules obtained from the rules of~$\Omega$
by substituting the precomputed terms~$v(c)$ for all occurrences of all placeholders~$c$.
Then,
$\tau^*(\Omega(\boldv,\boldi)) = \tau^*(\Omega(\boldv) \cup \boldi) = \tau^*(\Omega(\boldv)) \cup \tau^*(\boldi) \equiv_s \tau^*(\Omega(\boldv)) \cup \boldi$.
Furthermore, since there are no occurrences of predicate symbols in~$\boldq$ in the heads of the rules of~$\Omega(\boldv)$
nor of any predicate symbol in~$\boldp$ in the head of the rules in~$\boldi$,
we get that each strongly connected component is a subset either of $\boldp$ or~$\boldq$.
From Proposition~\ref{prop:splitting},
this implies that
\begin{align*}
&\Iu \textrm{ is a model of } \SM_{\boldp\boldq}[\tau^*(\Omega(\boldv,\boldi))]\\
\textrm{iff }
&\Iu \textrm{ is a model of } \SM_{\boldp}[\tau^*(\Omega(\boldv))]
\textrm{ and }
\Iu \textrm{ is a model of } \SM_{\boldq}[\boldi]\\
\textrm{iff }
&\I^\boldv \textrm{ is a model of } \SM_{\boldp}[\tau^*\Pi]
\textrm{ and }
\I^{\mathit{in}} = \boldi.
\end{align*}
For the second equivalence,
note that $\I^\boldv$~is identical to~$\Ju$, except that it interprets each placeholder~$c$ as~$\boldv(c)$ and that~$\Omega(\boldv)$ is the result of replacing each placeholder~$c$ by~$\boldv(c)$.
\end{proof}

\begin{proof}[Proof of the Main Lemma for IO-Programs]
From left to right.
Assume that
$\I$~is an io-model of~$\Omega$ for input~$(\boldv,\boldi)$.
Let us show that $\I^\boldv$~is a \SMstable{} model with private symbols~$\boldh$ of $\tau^*\Pi$ and $\I^{\mathit{in}} = \boldi$.
By definition,
the assumption implies that
there is some stable model~$\J$
of~$\Omega(\boldv,\boldi)$
such that $\I$~is the set of all public atoms of~$\J$.
From Lemma~\ref{thm:main-lemma2.aux},
this implies that
$\J^\boldv$~is a model of $\SM_{\boldp}[\tau^*\Pi]$ and $\J^{\mathit{in}} = \boldi$
and, thus, that
$\I^\boldv$~is a model of $\exists\boldH\,\SM_{\boldp}[(\tau^*\Pi)]^\boldh_\boldH$
and $\I^{\mathit{in}} = \boldi$.
For this last step, recall that $\I$ and~$\J$ agree on all public predicates.
By definition,
this means that $\I^\boldv$~is a \SMstable{} model with private symbols~$\boldh$ of~$\tau^*\Pi$ and $\I^{\mathit{in}} = \boldi$.

From right to left.
Assume that $\I^\boldv$~is a \SMstable{} model with private symbols~$\boldh$ of~$\tau^*\Pi$.
Let us show that $\I$~is an io-model of~$\Omega$ for an
input $(\boldv,\boldi)$.
By definition,
the assumption implies that
$\I^\boldv$~is a model of $\exists\boldH\,\SM_{\boldp}[(\tau^*\Pi)]^\boldh_\boldH$.
This implies that there is some model~$J$ of $\SM_{\boldp}[(\tau^*\Pi)]$
such that $\I^\boldv$ and~$J$ agree on the interpretation of all public predicates.
Let $\J$~be the set of precomputed atoms satisfied by~$J$.
Then, there are no occurrences of placeholders in~$\J$ and, thus,
we get that $\J^\boldv = \Ju = J$
and, thus, also that
$\J^\boldv$~is a stable model of $\SM_{\boldp}[(\tau^*\Pi)]$.
Recall that we also have $\I^{\mathit{in}} = \boldi$
and, since $\I$ and~$\J$ contain the same public atoms,
we get that $\J^{\mathit{in}} = \boldi$.
From Lemma~\ref{thm:main-lemma2.aux},
these two facts together imply that
$\J$~is a stable model of $\Omega(\boldv,\boldi)$
and, therefore, that
$\I$~is an io-model of~$\Omega$ for an
input~$(\boldv,\boldi)$.
\end{proof}

\subsection{Theorem~\ref{th1}}
\label{sec:proof:thm:completion.correspondence}

In order to prove Theorem~\ref{th1},
we need the notions of Clark normal form,
completion, and tight theories~\cite[Section~6]{feleli11a}.
We adapt these notions to a two-sorted language here.
A theory---one-sorted or two-sorted---is in \emph{Clark normal form} relative to a list~$\boldp$ of intensional predicates if it contains exactly one sentence of the form
\begin{gather}
\forall V_1 \ldots V_n \, ( G \to p(V_1, \ldots, V_n) )
	\label{eq:clark}
\end{gather}
for each intensional predicate symbol~$p/n$ in~$\boldp$,
where $G$~is a formula
and $V_1,\ldots,V_n$ are distinct program variables.
The \emph{completion} of a theory~$\Gamma$ in Clark normal form,
denoted by $\COMP_\boldp[\Gamma]$, is obtained by replacing each implication~$\to$ by an equivalence~$\leftrightarrow$ in all sentences of form~\eqref{eq:clark}.
The following is a special case of Theorem~10 in~\cite{feleli11a} adapted to our notation.

\begin{proposition}\label{prop:completion.one-sorted.if}
For any one-sorted sentence~$F$ in Clark normal form and list of predicates~$\boldp$, the implication
\begin{gather*}
\SM_\boldp[F] \to \COMP_\boldp[F]
\end{gather*}
is satisfied by all one-sorted interpretations.
\end{proposition}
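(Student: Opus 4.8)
The plan is to prove the contrapositive of the only missing half of the completion. Since $\SM_\boldp[F] \to F$ holds trivially, and $F$ already supplies, for each intensional $p/n \in \boldp$, the implication $\forall V_1 \ldots V_n\,(G \to p(V_1,\ldots,V_n))$ of its defining sentence~\eqref{eq:clark}, the passage from $F$ to $\COMP_\boldp[F]$ adds only the converse implications $\forall V_1\ldots V_n\,(p(V_1,\ldots,V_n) \to G)$. So I would fix an arbitrary one-sorted interpretation $I$ with $I \models \SM_\boldp[F]$ and show that each such converse implication holds in $I$; this is what remains.

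Suppose one fails. Then there is an intensional predicate $p_0 \in \boldp$ with defining body $G_0$ and a tuple $\bar{a}$ of domain elements such that $I \models p_0(\bar{a})$ but $I \not\models G_0(\bar{a})$. I would then exhibit a proper shrinking $\boldu < \boldp$ that still satisfies $F^*(\boldu)$, contradicting the second conjunct of $\SM_\boldp[F]$. Concretely, interpret the fresh predicate $u_{p_0}$ as $p_0^I \setminus \{\bar{a}\}$ and every other $u_q$ as $q^I$. Since $\bar{a} \in p_0^I$, this yields $\boldu \le \boldp$ with $\boldu \neq \boldp$, hence $\boldu < \boldp$.

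It then remains to verify $I \models F^*(\boldu)$. Writing $F$ in Clark normal form as $\bigwedge_{q \in \boldp} \forall \bar{V}_q\,(G_q \to q(\bar{V}_q))$, the definition of the $*$ operator gives $F^* = \bigwedge_{q}\forall \bar{V}_q\,\big((G_q^* \to u_q(\bar{V}_q)) \wedge (G_q \to q(\bar{V}_q))\big)$. The conjuncts $G_q \to q(\bar{V}_q)$ hold because $I \models F$. For the conjuncts $G_q^*(\boldu) \to u_q(\bar{V}_q)$, the decisive tool is Lemma~\ref{lem:star.implication}, applied to the body $G_q$: from $\boldu \le \boldp$ it gives $G_q^*(\boldu,\bar{c}) \to G_q(\bar{c})$ for every tuple $\bar{c}$, and $I \models F$ then gives $G_q(\bar{c}) \to q(\bar{c})$, so $G_q^*(\boldu,\bar{c})$ forces $\bar{c} \in q^I$. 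For $q \neq p_0$ this already yields $\bar{c} \in q^I = u_q$; for $q = p_0$ and $\bar{c} \neq \bar{a}$ it yields $\bar{c} \in p_0^I$, hence $\bar{c} \in u_{p_0}$; and for $q = p_0$, $\bar{c} = \bar{a}$ the same lemma turns $I \not\models G_0(\bar{a})$ into $I \not\models G_0^*(\boldu,\bar{a})$, so the implication holds vacuously. Thus $I \models F^*(\boldu)$ with $\boldu < \boldp$, the desired contradiction.

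The main obstacle is precisely the single case $q = p_0$, $\bar{c} = \bar{a}$: one must certify that dropping $\bar{a}$ from $p_0$ does not re-trigger the starred body $G_0^*$, and this is where Lemma~\ref{lem:star.implication} (Lemma~5 of~\cite{feleli11a}) does the real work, converting failure of $G_0$ under $I$ into failure of $G_0^*$ under the shrunken $\boldu$; the other cases are routine monotonicity bookkeeping that the same lemma handles uniformly. I would finally remark that, because the statement is literally a one-sorted special case of Theorem~10 of~\cite{feleli11a}, one may alternatively just invoke that theorem, the argument above being the adaptation of its proof.
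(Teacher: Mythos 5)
Your proof is correct, but note that the paper itself does not prove this proposition at all: it is imported verbatim as a special case of Theorem~10 of \cite{feleli11a}, which is precisely the alternative you mention in your closing sentence. What you have reconstructed is the underlying minimality argument: the forward implications of $\COMP_\boldp[F]$ come for free from $\SM_\boldp[F]\to F$, and a failed converse implication at a witness tuple $\bar a$ for $p_0$ yields the candidate $\boldu$ obtained by deleting $\bar a$ from $p_0^I$, which Lemma~\ref{lem:star.implication} certifies to satisfy $F^*(\boldu)$; you correctly isolate the one non-routine case $q=p_0$, $\bar c=\bar a$, where the same lemma converts $I\not\models G_0(\bar a)$ into $I\not\models G_0^*(\boldu,\bar a)$ so that the starred implication holds vacuously. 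This buys a self-contained argument where the paper offers only a citation, at the cost of redoing work already available in the literature. One small caution: as the paper actually uses the notion, a theory in Clark normal form may contain conjuncts besides the defining implications --- the formula representations of constraints are part of $\Clark{\Pi}$, to which the two-sorted analogue of this proposition is later applied --- whereas your displayed form of $F$ consists of the defining implications only. Constraint conjuncts $\neg H$ are covered by the same appeal to Lemma~\ref{lem:star.implication}, since $(\neg H)^*(\boldu)$ is $\neg H^*(\boldu)\wedge\neg H$ and $H^*(\boldu)\to H$ gives $\neg H\to\neg H^*(\boldu)$; an arbitrary additional conjunct would not be covered, so this case split deserves an explicit sentence in your write-up.
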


Then,
we can easily extend this result to two-sorted interpretations as follows.

\begin{proposition}\label{prop:completion.if}
For any two-sorted sentence~$F$ in Clark normal form, list of predicates~$\boldp$,
and two-sorted interpretation~$\I$,
if $\I$~satisfies $\SM_\boldp[F]$,
then it also satisfies $\COMP_\boldp[F]$.
\end{proposition}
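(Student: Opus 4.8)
The plan is to reduce the two-sorted claim to the one-sorted Proposition~\ref{prop:completion.one-sorted.if} by passing through the translation $F \mapsto \unsorted{F}$, mirroring the structure of the proof of Proposition~\ref{prop:splitting}. Suppose $\I$ satisfies $\SM_\boldp[F]$. By Lemma~\ref{lem:restricting.quantifiers.fo.sm}, this is equivalent to $\unsorted{\I} \models \SM_\boldp[\unsorted{F}]$, so that the one-sorted machinery becomes applicable to $\unsorted{F}$ and the one-sorted interpretation $\unsorted{\I}$. The whole argument then rests on two purely syntactic facts, both of which exploit that the unsorted translation rewrites only the quantifiers binding integer variables, leaving program-variable quantifiers and atomic formulas untouched.

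The first fact is that $\unsorted{F}$ is again in Clark normal form relative to~$\boldp$. Writing $F$ as the conjunction of its Clark normal form conjuncts $\forall V_1 \ldots V_n \, (G \to p(V_1,\ldots,V_n))$---one for each $p/n$ in~$\boldp$, with the $V_i$ distinct program variables---the head quantifier prefix and the atom $p(V_1,\ldots,V_n)$ are unchanged, and only the body becomes $\unsorted{G}$. Hence $\unsorted{F} = \bigwedge_p \forall V_1 \ldots V_n \, (\unsorted{G} \to p(V_1,\ldots,V_n))$ is in Clark normal form. The second fact is that the translation commutes with completion, $\COMP_\boldp[\unsorted{F}] = \unsorted{\COMP_\boldp[F]}$. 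Since $\lrar$ abbreviates a conjunction of two implications, the same computation gives $\unsorted{(G \lrar p(V_1,\ldots,V_n))} = \unsorted{G} \lrar p(V_1,\ldots,V_n)$, so replacing $\to$ by $\lrar$ in the Clark normal form commutes with $F \mapsto \unsorted{F}$.

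With these in hand the argument closes in three steps. From $\unsorted{\I} \models \SM_\boldp[\unsorted{F}]$ together with the first fact, Proposition~\ref{prop:completion.one-sorted.if} yields $\unsorted{\I} \models \COMP_\boldp[\unsorted{F}]$; by the second fact this reads $\unsorted{\I} \models \unsorted{\COMP_\boldp[F]}$; and since $\COMP_\boldp[F]$ is an ordinary first-order sentence, Lemma~\ref{lem:restricting.quantifiers.fo} (not the $\SM$-variant) converts this back to $\I \models \COMP_\boldp[F]$, as required.

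I expect the only genuine subtlety to be the bookkeeping behind the two syntactic facts, namely confirming that Clark normal form is stated with \emph{program} variables (not integer variables) in the heads, so that the unsorted translation genuinely leaves the head quantifiers and the defined atom alone. Were integer variables permitted in heads, the added $\isint$ guards would attach to the defining body and the equality $\COMP_\boldp[\unsorted{F}] = \unsorted{\COMP_\boldp[F]}$ could fail, since the guard would end up on the wrong side of the equivalence. Because the definition here uses distinct program variables $V_1,\ldots,V_n$, this difficulty does not arise and the reduction is clean.
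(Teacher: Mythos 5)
Your proof is correct and follows essentially the same route as the paper's: both reduce to the one-sorted Proposition~\ref{prop:completion.one-sorted.if} via Lemmas~\ref{lem:restricting.quantifiers.fo.sm} and~\ref{lem:restricting.quantifiers.fo}, using the identity $\unsorted{(\COMP_\boldp[F])}=\COMP_\boldp[\unsorted{F}]$. Your explicit verification that $\unsorted{F}$ remains in Clark normal form and that the commutation with completion hinges on the head variables being program variables is a point the paper passes over with ``we can see that,'' so your write-up is, if anything, slightly more careful.
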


\begin{proof}
Let $I$~be any two-sorted interpretation.
From Lemma~\ref{lem:restricting.quantifiers.fo},
we get that
${I \models \COMP_\boldp[F]}$ iff $\unsorted{I} \models \unsorted{(\COMP_\boldp[F])}$.
Furthermore,
we can see that
$\unsorted{(\COMP_\boldp[F])}=\COMP_\boldp[\unsorted{F}]$,
and thus,
we get
\begin{gather*}
I \models \COMP_\boldp[F] \textrm{ iff } \unsorted{I} \models \COMP_\boldp[\unsorted{F}].
\end{gather*}
Similarly, from Lemma~\ref{lem:restricting.quantifiers.fo.sm},
we get
\begin{gather*}
I \models \SM_\boldp[\Gamma] \textrm{ iff } \unsorted{I} \models \SM_\boldp[\unsorted{\Gamma}].
\end{gather*}
Finally, from Proposition~\ref{prop:completion.one-sorted.if},
we get
\begin{gather*}
\unsorted{I} \models \SM_\boldp[\unsorted{\Gamma}]
\textrm{ implies }
\unsorted{I} \models \COMP_\boldp[\unsorted{\Gamma}].
\end{gather*}
Consequently, the result holds.
\end{proof}

Let us introduce the \emph{Clark form} of a program without input and output.
The \emph{Clark definition} of~$p/n$ in~$\Pi$ is a formula of the form
\begin{gather}
\forall V_1 \ldots V_n \,
	\left(
	\bigvee_{i=1}^k \exists \boldU_i \, F_i
	\rightarrow
	p(V_1, \ldots ,V_n)
	\right),
	\label{eq:definition.p}
\end{gather}
where each $F_i$~is the formula representation of rule~$R_i$ and rules~$R_1,\ldots,R_k$ constitute the definition of~$p/n$ in~$\Pi$.
By~$\Cdf{\Pi}$, we denote the theory containing the Clark definitions of all predicate symbols.
We also define $\Clark{\Pi} \eqdef \Cdf{\Pi} \cup \Pi_C$,
where $\Pi_C$~is the set containing the formula representation of all constraints in~$\Pi$.

About first-order formulas $F$ and~$G$, it is said that $F$~is \emph{strongly equivalent} to~$G$ if, for any formula~$H$, any occurrence of~$F$ in~$H$, and any list~$\boldp$ of distinct predicate symbols, $\SM_\boldp[H]$ is equivalent to $\SM_\boldp[H']$, where $H'$~is obtained from~$H$ by replacing the occurrence of~$F$ by~$G$.
About finite first-order theories $\Gamma$ and~$\Gamma'$, we say that $\Gamma$~is \emph{strongly equivalent} to~$\Gamma'$ when the conjunction of all sentences in~$\Gamma$ is strongly equivalent to the conjunction of all sentences in~$\Gamma'$.
First-order theory~$\Gamma$ is strongly equivalent to~$\Gamma'$ iff $\Gamma$~is equivalent to~$\Gamma'$ in quantified equilibrium logic \cite[Theorem~8]{feleli11a}.
Therefore, $\Gamma$~is strongly equivalent to~$\Gamma'$ if $\Gamma$ is equivalent to~$\Gamma'$ in intuitionistic logic.

\begin{lemma}\label{lem:df.equivalence}
Let $\Pi$~be a program without constraints.
Then, $\tau^*\Pi$ is strongly equivalent to $\Cdf{\Pi}$.
\end{lemma}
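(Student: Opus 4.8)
The plan is to establish the strong equivalence by purely syntactic rewriting, leaning on the criterion recorded just before the lemma: two theories are strongly equivalent whenever they are intuitionistically equivalent, and, more generally, whenever they agree in quantified equilibrium logic. Both sides decompose along the predicate symbols of~$\Pi$: the conjuncts of~$\tau^*\Pi$ are the sentences $\tau^* R$ for the individual rules~$R$, while $\Cdf{\Pi}$ contains one Clark definition per predicate symbol. I would therefore group the rules of~$\Pi$ by the symbol in their heads and prove, for each $p/n$ with defining rules $R_1,\dots,R_k$, that $\tau^* R_1 \land \cdots \land \tau^* R_k$ is strongly equivalent to the single Clark definition~\eqref{eq:definition.p} of~$p/n$; for a symbol with no defining rules the Clark definition is $\forall V_1\dots V_n(\bot \rar p(V_1,\dots,V_n))$, which is strongly equivalent to~$\top$, matching the empty contribution on the $\tau^*$ side. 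Since strong equivalence is preserved under replacement of subformulas, assembling these per-predicate equivalences yields $\tau^*\Pi \equivs \Cdf{\Pi}$ (constraints are absent by hypothesis, so $\Pi_C$ plays no role).

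For a basic rule $R_i$ with head $p(t_1,\dots,t_n)$, I would rewrite $\tau^* R_i$ through a chain of intuitionistic equivalences. Starting from the universal closure of $\tau^B(\body) \rar \forall V_1\dots V_n(\val{t_1}{V_1}\land\cdots\land\val{t_n}{V_n}\rar p(V_1,\dots,V_n))$, the steps are: (i) pull $\forall V_1\dots V_n$ to the front, sound because the $V_j$ are fresh and absent from $\tau^B(\body)$, using $A\rar\forall x\,B \equivs \forall x(A\rar B)$; (ii) contract the nested implication by $A\rar(B\rar C)\equivs(A\land B)\rar C$, so that the antecedent becomes exactly the formula representation~$F_i$; and (iii) replace $\forall\boldU_i$ by an existential inside the antecedent via $\forall\boldU_i(F_i\rar p(V_1,\dots,V_n))\equivs(\exists\boldU_i F_i\rar p(V_1,\dots,V_n))$, valid since $\boldU_i$ does not occur in $p(V_1,\dots,V_n)$. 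This gives $\tau^* R_i \equivs \forall V_1\dots V_n(\exists\boldU_i F_i\rar p(V_1,\dots,V_n))$. Conjoining over~$i$ and applying $\bigwedge_i(D_i\rar C)\equivs(\bigvee_i D_i\rar C)$ together with distribution of $\forall$ over conjunction produces precisely~\eqref{eq:definition.p}. All steps are intuitionistic, hence certify strong equivalence.

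The case I expect to be the main obstacle is a choice rule $R_i$ with head $\{p(t_1,\dots,t_n)\}$. The same quantifier manipulations reduce $\tau^* R_i$ to $\forall V_1\dots V_n(\exists\boldU_i C_i\rar(p(V_1,\dots,V_n)\lor\neg p(V_1,\dots,V_n)))$, where $C_i$ abbreviates $\tau^B(\body)\land\val{t_1}{V_1}\land\cdots\land\val{t_n}{V_n}$. What makes choice rules go through is the equivalence $p\lor\neg p\equivs(\neg\neg p\rar p)$, which is \emph{not} intuitionistically valid and must be checked in the logic of here-and-there; rewriting with it and currying turns the contribution into $\forall V_1\dots V_n(\exists\boldU_i(C_i\land\neg\neg p(V_1,\dots,V_n))\rar p(V_1,\dots,V_n))$. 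Its antecedent differs from the formula representation~\eqref{fc} only in carrying $\neg\neg p$ where~$F_i$ carries~$p$, and this double negation is genuinely significant for strong equivalence (dropping it collapses the choice, as one sees already for the program consisting of the single rule $\{p\}\ar$). The crux of the proof is therefore the bookkeeping around this~$\neg\neg$: I would carry the $\neg\neg p$ form throughout the strong-equivalence argument, and only pass to the~$p$ form of~\eqref{fc} at the point where the Clark definition is completed into an equivalence (Proposition~\ref{prop:completion.if}), where the ambient classical reading of $\COMP_\boldp$ makes $\neg\neg p$ and $p$ interchangeable. Getting this interface right—so that the double negation is retained exactly where $\SM_\boldp$ is sensitive to it and discarded exactly where completion renders it classical—is where the real work of the lemma lies.
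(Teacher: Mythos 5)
Your overall route is the same as the paper's: its proof simply observes that $\tau^*\Pi$ consists of sentences of the form~\eqref{eq:1:lem:df.equivalence}, one per rule, regroups them by head predicate, and factors the antecedents into a disjunction to obtain~\eqref{eq:definition.p}; your steps (i)--(iii) are precisely the intuitionistic equivalences that justify this for rules of form~\eqref{def1}, and your remark about predicate symbols with no defining rules (empty disjunction $\bot$) covers a case the paper leaves implicit.

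The substantive difference is your treatment of choice rules, and your concern there is legitimate: it points at a real gap in the paper's own argument rather than in yours. The printed proof treats $\tau^* R_i$ for a choice rule as being ``of the form'' $\forall V_1\ldots V_n\boldU_i\,(F_i\rar p(V_1,\ldots,V_n))$ with $F_i$ as in~\eqref{fc}; but since \eqref{fc} carries $p(V_1,\ldots,V_n)$ rather than $\neg\neg p(V_1,\ldots,V_n)$, that implication is an intuitionistic tautology, whereas $\tau^* R_i$ is equivalent in the logic of here-and-there only to the $\neg\neg p$ variant. Your counterexample is decisive: for the program consisting of the single rule $\{p\}\ar$, $\tau^*\Pi$ is $\top\rar(p\lor\neg p)$ while $\Cdf{\Pi}$ is the tautology $\top\land p\rar p$, and these have different $\boldp$-stable models, so they are not strongly equivalent. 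Your repair---carrying $\neg\neg p$ through the strong-equivalence argument and discharging the double negation only where the definitions are turned into the classical equivalences of $\COMP_\boldp$ (Proposition~\ref{prop:completion.if} and Proposition~\ref{prop:completion})---is the right one, and it is all that the downstream uses of the lemma in Theorems~\ref{th1} and~\ref{th2} actually need, since there only $\COMP_\boldp[\Clark{\Pi}]$ is evaluated classically. The one point to make explicit if you write this up is that you are then proving a corrected statement (strong equivalence of $\tau^*\Pi$ to the $\neg\neg$-variant of $\Cdf{\Pi}$, plus classical equivalence of the two resulting completions), not the lemma exactly as printed.
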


\begin{proof}
By definition, $\tau^*\Pi$~contains a formula of the form
\begin{gather}
\forall V_1 \ldots V_n \boldU_i \,
	(
	F_i
		\rightarrow
	p(V_1,\ldots,V_n)
	)
		\label{eq:1:lem:df.equivalence}
\end{gather}
for each rule~$R_i$ in~$\Pi$.
Note that~\eqref{eq:1:lem:df.equivalence}
is strongly equivalent to
\begin{gather*}
\forall V_1 \ldots V_n \,
	(
	\exists \boldU_i \, F_i
		\rightarrow
	p(V_1,\ldots,V_n)
	).
\end{gather*}
Furthermore,
since $\Pi$~is finite,
it follows that $\tau^*\Pi$~is finite too and, therefore,
we get that $\tau^*\Pi$~is strongly equivalent to~$\Gamma$,
where~$\Gamma$ is the theory containing a formula of the form
\begin{gather}
\bigwedge_{i = 1}^k \forall V_1 \ldots V_n \,
	(
	\exists \boldU_i \, F_i
		\rightarrow
	p(V_1,\ldots,V_n)
	)
	\label{eq:2b:lem:df.equivalence}
\end{gather}
for each predicate symbol~$p/n$.
Finally, since \eqref{eq:definition.p} and~\eqref{eq:2b:lem:df.equivalence} are strongly equivalent, we get that~$\tau^*\Pi$ and~$\Cdf{\Pi}$ are also strongly equivalent.
\end{proof}

\begin{lemma}\label{lem:clark.equivalence}
For any program~$\Pi$,
$\tau^*\Pi$ is strongly equivalent to $\Clark{\Pi}$.
\end{lemma}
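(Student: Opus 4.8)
The plan is to split $\Pi$ into its constraints and its non-constraint rules and to handle the two parts separately, reducing the non-constraint part to Lemma~\ref{lem:df.equivalence}. Let $\Pi_d$ be the set of all basic and choice rules of~$\Pi$ (equivalently, $\Pi$ with all constraints deleted), and let $R_1,\dots,R_m$ be the constraints of~$\Pi$. Since $\tau^*$ is applied rule by rule, $\tau^*\Pi$ is the union of $\tau^*\Pi_d$ with the formulas $\tau^* R_1,\dots,\tau^* R_m$. Because strong equivalence of finite first-order theories is a congruence — replacing the conjunction of one subtheory by a strongly equivalent conjunction preserves $\SM_\boldp$ for every enclosing context, and in particular intuitionistic equivalence implies strong equivalence, as recorded just before the lemma — it will suffice to identify the constraint translations with~$\Pi_C$ and to show that $\tau^*\Pi_d$ is strongly equivalent to~$\Cdf{\Pi}$.

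For the constraints I would observe that the two translations coincide outright, so no appeal to strong equivalence is needed there. By the definition of $\tau^*$, for a constraint $\ar B_1\land\cdots\land B_n$ the head formula~$H$ is~$\bot$, so $\tau^* R_j$ is the universal closure of $\tau^B(B_1)\land\cdots\land\tau^B(B_n)\rar\bot$. Since $F\rar\bot$ is shorthand for~$\neg F$, this is exactly the universal closure of $\neg\tau^B(\body)$, which is by definition the formula representation of the constraint and hence a member of~$\Pi_C$. Thus the constraint part of $\tau^*\Pi$ is literally~$\Pi_C$.

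For the non-constraint part, $\Pi_d$ is a program without constraints, so Lemma~\ref{lem:df.equivalence} gives that $\tau^*\Pi_d$ is strongly equivalent to~$\Cdf{\Pi_d}$, and it remains to identify $\Cdf{\Pi_d}$ with~$\Cdf{\Pi}$. The key point is that the definition of any predicate symbol consists only of rules of the forms \eqref{def1} and~\eqref{def2}, and constraints have empty heads; hence deleting the constraints changes no predicate's definition. The only possible discrepancy concerns a predicate symbol occurring solely in the bodies of constraints: such a symbol has an empty definition, whose Clark definition~\eqref{eq:definition.p} has the empty disjunction~$\bot$ as its antecedent and is therefore a tautology, strongly equivalent to~$\top$. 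Adjoining or deleting such tautological conjuncts does not affect strong equivalence, so $\Cdf{\Pi_d}$ and $\Cdf{\Pi}$ are strongly equivalent.

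Combining the two parts through the congruence property then yields that $\tau^*\Pi = \tau^*\Pi_d \cup \Pi_C$ is strongly equivalent to $\Cdf{\Pi}\cup\Pi_C = \Clark{\Pi}$. I expect the only delicate point to be the bookkeeping just mentioned — fixing precisely which predicate symbols receive Clark definitions and verifying that the extra empty definitions are harmless — together with an explicit use of the congruence property, so that the strong equivalence established for the non-constraint part is preserved once $\Pi_C$ is adjoined.
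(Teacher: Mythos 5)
Your proof is correct and follows essentially the same route as the paper's: partition $\Pi$ into constraints and the remaining rules, observe that $\tau^*$ applied to the constraints yields exactly $\Pi_C$, and invoke Lemma~\ref{lem:df.equivalence} for the rest. The paper's version is a three-line sketch that silently glosses over the bookkeeping you make explicit (the congruence property of strong equivalence and the tautological Clark definitions of symbols occurring only in constraint bodies), so your additional care is welcome rather than a deviation.
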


\begin{proof}
Let $\Pi_1,\Pi_2$ be a partition of~$\Pi$ such that $\Pi_2$~contains all constraints and $\Pi_1$ all the remaining rules.
Then, $\tau^*\Pi= \tau^*(\Pi_1 \cup \Pi_2) = \tau^*\Pi_1 \cup \tau^*\Pi_2 = \tau^*\Pi_1 \cup \Pi_C$.
Now, the result follows directly from Lemma~\ref{lem:df.equivalence}.
\end{proof}


\begin{proof}[Proof of Theorem~\ref{th1}]
From the Main Lemma for IO-Programs,
it follows that $\I$~is an io-model of~$\Omega$ for an
input~$(\boldv,\boldi)$
iff $\I^\boldv$~is a \SMstable{} model with private symbols~$\boldh$ of $\tau^*\Pi$ and $\I^{\it in} = \boldi$.
Furthermore,
\begin{align*}
&\I^\boldv \textrm{ is a $\SMstable$ model with private symbols } \boldh \textrm{ of } \tau^*\Pi\\
\textrm{iff } &\I^\boldv \textrm{ is a model of } \exists \boldH \, \SM_\boldp[\tau^*\Pi]^\boldh_\boldH\\
\textrm{iff } &\I^\boldv \textrm{ is a model of } \exists \boldH \, \SM_\boldp[\Clark{\Pi}]^\boldh_\boldH
	\tag*{(Lemma~\ref{lem:clark.equivalence})}\\
\textrm{ implies that } &\I^\boldv \textrm{ is a model of } \exists \boldH \, \COMP_\boldp[\Clark{\Pi}]^\boldh_\boldH
	\tag*{(Proposition~\ref{prop:completion.if})}\\
\textrm{iff } &\I^\boldv \textrm{ is a model of } \exists \boldH \, \COMP_\boldp[\tau^*\Pi]^\boldh_\boldH\\
\textrm{iff } &\I^\boldv \textrm{ is a model of } \COMP[\Omega]. \tag*{\qedhere}
\end{align*}
\end{proof}

\subsection{Theorem~\ref{th2}}
\label{sec:proof:thm:completion.correspondence2}

To prove Theorem~\ref{th2}, we need the following terminology.
An occurrence of a predicate symbol in a formula is called \emph{negated} if it belongs to a subformula of the form $F \to \bot$ and \emph{nonnegated} otherwise.
An occurrence of a predicate symbol in a formula is called \emph{positive} if the number of implications containing that occurrence in the antecedent is even.
The \emph{dependency graph} of a theory in Clark normal form is a directed graph that
\begin{itemize}
\item has all intensional predicate symbols as vertices and
\item has an edge from $p$ to~$q$ if $q$ has a positive nonnegated occurrence in~$G$
for some sentence of form~\eqref{eq:clark}.
\end{itemize}
A theory is \emph{tight} if its predicate dependency graph is acyclic.
The following is a reformulation of Theorem~11 in~\cite{feleli11a} adapted to our notation.

\begin{proposition}\label{prop:completion.one-sorted}
For any finite, tight, one-sorted theory~$\Gamma$ in Clark normal form,~$\SM_\boldp[\Gamma]$ is equivalent to~$\COMP_\boldp[\Gamma]$.
\end{proposition}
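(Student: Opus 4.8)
The plan is to read this off from Theorem~11 of \cite{feleli11a}, of which it is a notational restatement, so the real work splits into the converse implication together with a check that our auxiliary notions coincide with theirs. One direction, $\SM_\boldp[\Gamma] \to \COMP_\boldp[\Gamma]$, needs no tightness and is already available as Proposition~\ref{prop:completion.one-sorted.if}; hence the substance lies entirely in the converse $\COMP_\boldp[\Gamma] \to \SM_\boldp[\Gamma]$, which is where acyclicity of the dependency graph is used.

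For the converse I would argue by contradiction. Fix a one-sorted interpretation $I$ with $I \models \COMP_\boldp[\Gamma]$; since each equivalence of $\COMP_\boldp[\Gamma]$ entails the corresponding implication of $\Gamma$, we have $I \models \Gamma$, and it remains to exclude any tuple $\boldu$ of relations with $\boldu < \boldp$ for which $(I,\boldu) \models \Gamma^*$. Tightness makes the finite dependency graph on $\boldp$ acyclic, so I can fix a rank function assigning to each symbol of $\boldp$ a natural number that strictly decreases along every edge. Supposing such a $\boldu$ exists, the strictness of $\boldu < \boldp$ produces an intensional atom $p(\bar a)$ that holds in $I$ but does not lie in the component $u_p$ of $\boldu$ corresponding to $p$; among all such uncovered atoms I would pick one whose predicate $p$ has the least rank.

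The crux is then to contradict the completed definition of $p$. Since $I \models p(\bar a)$, the equivalence of form~\eqref{eq:clark} forces some defining body $G$ to hold in $I$ at $\bar a$. Every intensional predicate with a positive nonnegated occurrence in $G$ is, by construction of the dependency graph, the target of an edge out of $p$ and therefore of strictly smaller rank; by minimality of the rank of $p$, each such atom that holds in $I$ lies in the matching component of $\boldu$, so the starred copy of $G$ is satisfied at those positions, while at negated positions Lemma~\ref{lem:star.implication} gives $G^* \to G$ and hence the same conclusion. This yields $(I,\boldu) \models G^*(\bar a)$, and the corresponding conjunct $G^* \to u_p$ of $\Gamma^*$, instantiated at $\bar a$, forces $\bar a \in u_p$, the desired contradiction.

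The step I expect to be the main obstacle is the structural induction compressed into that last paragraph: establishing that $I \models G$ implies $(I,\boldu) \models G^*$ requires tracking how the star operator interacts with conjunction, disjunction, nested implications, and negation, and it is exactly here that the positive/nonnegated classification of occurrences and Lemma~\ref{lem:star.implication} must be combined correctly. Since our definitions of Clark normal form, completion, dependency graph, and tightness are transcriptions of those in \cite{feleli11a}, once this bookkeeping is matched to theirs the statement is precisely their Theorem~11 and no genuinely new argument is required.
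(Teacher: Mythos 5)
Your proposal is correct and matches the paper, which offers no proof of this proposition at all: it is stated verbatim as ``a reformulation of Theorem~11 in \cite{feleli11a} adapted to our notation,'' which is exactly the reduction you identify. Your additional rank-function sketch of the nontrivial direction is a sound reconstruction of the standard argument behind that theorem (the structural induction you flag does go through, with the positive/nonnegated bookkeeping confined to the atom case and Lemma~\ref{lem:star.implication} used in the implication case to pass from the starred antecedent back to the unstarred one), but none of it is needed to match the paper.
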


The following lifts this result to the case of two sorts.

\begin{proposition}\label{prop:completion}
For any finite, tight, two-sorted theory~$\Gamma$ in Clark normal form, $\SM_\boldp[\Gamma]$ is equivalent to $\COMP_\boldp[\Gamma]$.
\end{proposition}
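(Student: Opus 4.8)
The plan is to mirror the proof of Proposition~\ref{prop:completion.if}, the only change being that the one-directional Proposition~\ref{prop:completion.one-sorted.if} is replaced by the full equivalence of Proposition~\ref{prop:completion.one-sorted}. Concretely, for an arbitrary two-sorted interpretation~$I$ I would assemble three links into one chain. From Lemma~\ref{lem:restricting.quantifiers.fo.sm}, $I \models \SM_\boldp[\Gamma]$ iff $\unsorted{I} \models \SM_\boldp[\unsorted{\Gamma}]$. From Lemma~\ref{lem:restricting.quantifiers.fo} together with the syntactic identity $\unsorted{(\COMP_\boldp[\Gamma])} = \COMP_\boldp[\unsorted{\Gamma}]$ already used in the proof of Proposition~\ref{prop:completion.if} (which extends from single sentences to finite theories since $\unsorted{\cdot}$ distributes over $\land$ and fixes heads and $\lrar$), we get $I \models \COMP_\boldp[\Gamma]$ iff $\unsorted{I} \models \COMP_\boldp[\unsorted{\Gamma}]$. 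The middle link $\unsorted{I} \models \SM_\boldp[\unsorted{\Gamma}]$ iff $\unsorted{I} \models \COMP_\boldp[\unsorted{\Gamma}]$ is then supplied by Proposition~\ref{prop:completion.one-sorted}, so composing the three yields $I \models \SM_\boldp[\Gamma]$ iff $I \models \COMP_\boldp[\Gamma]$, which is the assertion.

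Before invoking Proposition~\ref{prop:completion.one-sorted} on $\unsorted{\Gamma}$ I must verify that its hypotheses transfer, i.e.\ that $\unsorted{\Gamma}$ is again finite, in Clark normal form, and tight. Finiteness is immediate because $\unsorted{\cdot}$ acts formula by formula on the finite theory~$\Gamma$. For Clark normal form I would observe that each defining sentence $\forall V_1 \ldots V_n\,(G \rar p(V_1,\dots,V_n))$ of~$\Gamma$ carries only program variables $V_1,\dots,V_n$ in its outer quantifier block, so $\unsorted{\cdot}$ leaves that block untouched, fixes the atomic head, and merely rewrites~$G$ into $\unsorted{G}$; the outcome $\forall V_1 \ldots V_n\,(\unsorted{G} \rar p(V_1,\dots,V_n))$ is still the unique sentence for~$p/n$ and still of the required form.

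The one genuinely load-bearing point, and the step I expect to require the most care, is the preservation of tightness. The transformation $\unsorted{\cdot}$ only inserts guards $\isint(N)$ at integer quantifiers, and $\isint(t)$ abbreviates the comparison $t + \numeral{1} \neq t + \numeral{2}$, whose only predicate symbol is the comparison symbol~$\neq$. Since comparison symbols are extensional and hence never belong to the list~$\boldp$ of intensional predicates, these guards introduce no vertices and no edges into the dependency graph. I would therefore argue that the positive nonnegated occurrences of an intensional predicate~$q$ in $\unsorted{G}$ are in exact correspondence with those in~$G$, so the dependency graphs of $\Gamma$ and $\unsorted{\Gamma}$ share the same vertices and edges; one is acyclic iff the other is, and tightness of~$\Gamma$ transfers to~$\unsorted{\Gamma}$. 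With finiteness, Clark normal form, and tightness all confirmed for $\unsorted{\Gamma}$, Proposition~\ref{prop:completion.one-sorted} applies and the equivalence chain closes, establishing the proposition.
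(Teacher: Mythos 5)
Your proposal is correct and follows essentially the same route as the paper's proof: the same three-link chain via Lemma~\ref{lem:restricting.quantifiers.fo}, Lemma~\ref{lem:restricting.quantifiers.fo.sm}, the identity $\unsorted{(\COMP_\boldp[\Gamma])}=\COMP_\boldp[\unsorted{\Gamma}]$, and Proposition~\ref{prop:completion.one-sorted}. Your explicit check that $\unsorted{\Gamma}$ remains finite, in Clark normal form, and tight is a detail the paper leaves implicit, and your argument for it (the $\isint$ guards involve only the extensional comparison symbol $\neq$, so the dependency graph is unchanged) is sound.
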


\begin{proof}
Let $I$~be any two-sorted interpretation.
From Lemma~\ref{lem:restricting.quantifiers.fo},
we get that
$I \models \COMP_\boldp[\Gamma]$ iff $\unsorted{I} \models \unsorted{(\COMP_\boldp[\Gamma])}$.
Furthermore,
it is easy to see that
$\unsorted{(\COMP_\boldp[\Gamma])}=\COMP_\boldp[\unsorted{\Gamma}]$,
and thus,
we get
\begin{gather*}
I \models \COMP_\boldp[\Gamma] \textrm{ iff } \unsorted{I} \models \COMP_\boldp[\unsorted{\Gamma}].
\end{gather*}
Similarly, from Lemma~\ref{lem:restricting.quantifiers.fo.sm},
we get
\begin{gather*}
I \models \SM_\boldp[\Gamma] \textrm{ iff } \unsorted{I} \models \SM_\boldp[\unsorted{\Gamma}].
\end{gather*}
Finally, from Proposition~\ref{prop:completion.one-sorted},
we get
\begin{gather*}
\unsorted{I} \models \COMP_\boldp[\unsorted{\Gamma}] \textrm{ iff } \unsorted{I} \models \SM_\boldp[\unsorted{\Gamma}].
\end{gather*}
Consequently, the result holds.
\end{proof}

\begin{proof}[Proof of Theorem~\ref{th2}]
From the Main Lemma for IO-Programs, 
it follows that
$\I$~is an io-model of~$\Omega$ for an
input~$(\boldv,\boldi)$
iff $\I^\boldv$~is a \SMstable{} model with private symbols~$\boldh$ of $\tau^*\Pi$ and $\I^{\it in} = \boldi$.
Furthermore,
\begin{align*}
&\I^\boldv \textrm{ is a \SMstable{} model with private symbols } \boldh \text{ of}~\tau^*\Pi\\
\textrm{iff } &\I^\boldv \textrm{ is a model of } \exists \boldH\, \SM_\boldp[\tau^*\Pi]^\boldh_\boldH\\
\textrm{iff } &\I^\boldv \textrm{ is a model of } \exists \boldH\, \SM_\boldp[\Clark{\Pi}]^\boldh_\boldH
	\tag*{(Lemma~\ref{lem:clark.equivalence})}\\
\textrm{iff } &\I^\boldv \textrm{ is a model of } \exists \boldH\, \COMP_\boldp[\Clark{\Pi}]^\boldh_\boldH
	\tag*{(Proposition~\ref{prop:completion})}\\
\textrm{iff } &\I^\boldv \textrm{ is a model of } \exists \boldH\, \COMP_\boldp[\tau^*\Pi]^\boldh_\boldH\\
\textrm{iff } &\I^\boldv \textrm{ is a model of } \COMP[\Omega].
\end{align*}
Recall that $\Omega$~is tight,
and this implies that $\Clark{\tau^*\Pi}$ is also tight.
Note that $\Clark{\tau^*\Pi}$ 
contains a formula of form~\eqref{eq:definition.p} for every predicate symbol~$p/n$ and that the antecedent of this formula is a disjunction of the formula representations of the bodies of all rules defining~$p/n$.
Therefore,
the dependency graph of~$\Clark{\tau^*\Pi}$ is identical to the dependency graph of~$\Omega$ with the exception of the addition of nonpositive edges corresponding to choice rules.
\end{proof}

\subsection{Theorem~\ref{th3}}

A \emph{predicate expression} is a lambda expression of the form
\begin{gather}
\lambda X_1 \ldots X_n \, F(X_1,\ldots,X_n),
\label{eq:pred.exp}
\end{gather}
where $F(X_1,\ldots,X_n)$ is a formula and $X_1, \ldots, X_n$ are object variables.
This formula may have free variables other
than $X_1, \ldots, X_n$, called the \emph{parameters} of~\eqref{eq:pred.exp}.
If $E$ is~\eqref{eq:pred.exp} and $t_1,\ldots,t_n$ are terms,
then $E(t_1,\ldots,t_n)$ stands for the formula~$F(t_1,\ldots,t_n)$.
If $G(P)$~is a formula containing a predicate constant or variable~$P$
and $E$~is a predicate expression of the same arity as $P$,
then $G(E)$~stands for the result of replacing
each atom $P(t_1,\ldots,t_n)$ in~$G(P)$ by $E(t_1,\ldots,t_n)$.
For any predicate
expression~$E$, the formulas
\begin{gather*}
\forall P\, G(P) \to A(E) \quad\textrm{and}\quad G(E) \to \exists P\, A(P)
\end{gather*}
are theorems of second-order logic.

\begin{lemma}\label{lem:induction:thm3}
Let $\boldP = P_1,\ldots,P_l$ be a list of predicate variables and
let $\boldP_i = P_1,\ldots,P_i$ be a prefix of~$\boldP$.
Let $F_1(\boldP_1),\ldots,F_l(\boldP_l)$ be formulas such that $\boldP_i$~contains all free predicate variables occurring in~$F_i$.
Let $F$ and~$G$, respectively, be the following two formulas:
\begin{gather}
\exists \boldP \, (F_1(\boldP_1) \wedge \cdots \wedge F_l(\boldP_l) \wedge F'(\boldP)),
	\label{eq:1:lem:induction:thm3}
\\
\forall \boldP \, (F_1(\boldP_1) \wedge \cdots \wedge F_l(\boldP_l) \to F'(\boldP)).
	\label{eq:2:lem:induction:thm3}
\end{gather}
Then, $F \equiv G$.
\end{lemma}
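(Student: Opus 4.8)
The plan is to eliminate the second-order quantifiers from both $F$ and $G$ by \emph{solving} the completed definitions for $P_1,\ldots,P_l$ and showing that each of $F$ and $G$ collapses to one and the same $\boldP$-free sentence. The feature I would exploit is that, in the intended application, every $F_i$ is a completed definition, i.e.\ a biconditional $\forall\mathbf{V}_i\,(P_i(\mathbf{V}_i)\leftrightarrow D_i)$ in which the body $D_i$ contains no predicate variable other than $P_1,\ldots,P_{i-1}$; this functional form (biconditional, with $D_i$ free of $P_i,\ldots,P_l$) is exactly what the absence of private recursion secures, and it is what makes the triangular dependency hypothesis bite.

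First I would define predicate expressions $E_1,\ldots,E_l$ by recursion on $i$: put $E_i \eqdef \lambda\mathbf{V}_i\,D_i$ with each $P_j$ ($j<i$) replaced by the already-constructed $E_j$. Because $D_i$ mentions only $P_1,\ldots,P_{i-1}$, this makes every $E_i$ a closed predicate expression (containing no member of $\boldP$). Writing $P_i=E_i$ for $\forall\mathbf{V}_i\,(P_i(\mathbf{V}_i)\leftrightarrow E_i(\mathbf{V}_i))$, the heart of the argument is the equivalence
\[
  F_1(\boldP_1)\wedge\cdots\wedge F_l(\boldP_l)\ \equiv\ (P_1=E_1)\wedge\cdots\wedge(P_l=E_l),
\]
proved by induction on $i$: under $P_j=E_j$ for $j<i$ the body $D_i$ coincides with $E_i$, so $F_i$ becomes precisely $P_i=E_i$; conversely, substituting the $E_j$ turns each $F_i$ into the valid sentence $\forall\mathbf{V}_i\,(E_i\leftrightarrow E_i)$. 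This induction---and, within it, the isolation of the functional biconditional form---is the step I expect to be the main obstacle, since it is the only place where both the triangular dependency and the no-private-recursion assumption are used, and it is precisely what fails for an arbitrary conjunction of formulas.

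Finally I would discharge the quantifiers using the two second-order theorems recalled just before the lemma, $\forall P\,G(P)\to A(E)$ and $G(E)\to\exists P\,A(P)$, applied along $\boldP$. Substituting the displayed equivalence into $F$ and $G$ yields $\exists\boldP\,((P_1=E_1)\wedge\cdots\wedge(P_l=E_l)\wedge F')$ and $\forall\boldP\,((P_1=E_1)\wedge\cdots\wedge(P_l=E_l)\to F')$, respectively. Taking the witness $\boldP:=E_1,\ldots,E_l$ in the existential case, and instantiating $\boldP:=E_1,\ldots,E_l$ (whose defining antecedent $\bigwedge_i(E_i=E_i)$ is valid) in the universal case, shows that each of these is equivalent to the single sentence $F'(E_1,\ldots,E_l)$, the converse directions following from the one-point rule for second-order equality. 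Hence $F$ and $G$ are both equivalent to $F'(E_1,\ldots,E_l)$, and therefore $F\equiv G$, as required.
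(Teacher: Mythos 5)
Your proof is correct, and it rests on the same essential idea as the paper's: the second-order quantifiers can be eliminated because each $F_i$ is an explicit definition of $P_i$ whose body mentions only $P_1,\ldots,P_{i-1}$. You are also right to flag that this definitional form is indispensable rather than cosmetic: as literally stated (arbitrary $F_i$ with free predicate variables among $\boldP_i$), the lemma is false --- take $l=1$ and $F_1=\bot$, so that \eqref{eq:1:lem:induction:thm3} is $\bot$ while \eqref{eq:2:lem:induction:thm3} is $\top$. The paper's own proof silently imports the same hypothesis when it chooses $E$ so that $F_1(P_1)$ \emph{is} $\forall V_1\ldots V_n\,(P_1(V_1,\ldots,V_n)\lrar E(V_1,\ldots,V_n))$, and the hypothesis is supplied in the proof of Theorem~\ref{th3} by the absence of private recursion. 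Where you differ is in the organization. The paper inducts on $l$ over the equivalence itself: it peels off $P_1$, applies the induction hypothesis to convert the inner $\forall\boldP_l'(\cdots\to\cdots)$ into $\exists\boldP_l'(\cdots\wedge\cdots)$, and then handles the single remaining quantifier over $P_1$ by substituting its definiens $E$; the composite predicate expressions for $P_2,\ldots,P_l$ are never built explicitly. You instead compose the definientia up front into closed predicate expressions $E_1,\ldots,E_l$, prove $F_1(\boldP_1)\wedge\cdots\wedge F_l(\boldP_l)\equiv (P_1=E_1)\wedge\cdots\wedge(P_l=E_l)$ by induction, and collapse both \eqref{eq:1:lem:induction:thm3} and \eqref{eq:2:lem:induction:thm3} to the single $\boldP$-free sentence $F'(E_1,\ldots,E_l)$. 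Your route is slightly more informative (it exhibits the common reduct of $F$ and $G$, which is essentially the first-order sentence \anthem{} ultimately works with) at the cost of carrying the nested substitutions explicitly; the paper's route avoids that bookkeeping by delegating it to the induction hypothesis. Both are sound.
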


\begin{proof}
If $l=0$, then $\boldP$~is the empty tuple, and thus,
both \eqref{eq:1:lem:induction:thm3} and~\eqref{eq:2:lem:induction:thm3} stand just for the formula~$F'(\boldP)$, so the result holds.
Otherwise,
we proceed by induction.
Note that \eqref{eq:1:lem:induction:thm3} and~\eqref{eq:2:lem:induction:thm3}
are, respectively, equivalent to
\begin{gather}
\exists P_1 \, (F_1(P_1) \wedge \exists \boldP_l' \, (F_2(P_1,\boldP_2') \wedge \cdots \wedge F_l(P_1,\boldP_l') \wedge F'(P_1,\boldP_l') )),
	\label{eq:thm:private.recursion.2}
\\
\forall P_1 \, (F_1(P_1) \to \forall \boldP_l' \, (F_2(P_1,\boldP_2') \wedge \cdots \wedge F_l(P_1,\boldP_l') \to F'(P_1,\boldP_l'))).
	\label{eq:thm:private.recursion.2b}
\end{gather}
where $\boldP_i' = P_2,\ldots,P_i$.
That is, $\boldP_i = P_1,\boldP_i'$.
Then, by induction hypothesis, we get that the following two formulas are equivalent:
\begin{gather*}
\exists \boldP_l' \, (F_2(P_1,\boldP_2') \wedge \cdots \wedge F_l(P_1,\boldP_l') \wedge F'(P_1,\boldP_l') ),
\\
\forall \boldP_l' \, (F_2(P_1,\boldP_2') \wedge \cdots \wedge F_l(P_1,\boldP_l') \to F'(P_1,\boldP_l')).
\end{gather*}
Therefore, \eqref{eq:thm:private.recursion.2b}
is equivalent to
\begin{gather}
\forall P_1 \, (F_1(P_1) \to \exists \boldP_l' \, (F_2(P_1,\boldP_2') \wedge \cdots \wedge F_l(P_1,\boldP_l') \wedge F'(P_1,\boldP_l') )).
	\label{eq:thm:private.recursion.2c}
\end{gather}
Hence,
it only remains to be shown that \eqref{eq:thm:private.recursion.2}
and~\eqref{eq:thm:private.recursion.2c}
are equivalent.
Let $E$~be the predicate expression
$\lambda X_1 \ldots X_n\, G(X_1, \ldots, X_n)$
such that
$H(E) = F_1(P_1)$, with $H(Q)$~being the following formula:
\begin{gather*}
\forall V_1 \ldots V_n \, ( P_1(V_1,\ldots,V_n) \leftrightarrow Q(V_1,\ldots,V_n) ).
\end{gather*}
Then,
\begin{align*}
\eqref{eq:thm:private.recursion.2}
&\Leftrightarrow
\exists P_1 \, (F_1(P_1) \wedge \exists \boldP_2 \, (F_2(E,\boldP_2') \wedge \cdots \wedge F_l(E,\boldP_l') \wedge F'(E,\boldP_l') ))
\\
&\Leftrightarrow
\exists P_1 \, F_1(P_1) \wedge \exists \boldP_2 \, (F_2(E,\boldP_2') \wedge \cdots \wedge F_l(E,\boldP_l') \wedge F'(E,\boldP_l') )
\\
&\Leftrightarrow
\top \wedge \exists \boldP_2 \, (F_2(E,\boldP_2') \wedge \cdots \wedge F_l(E,\boldP_l') \wedge F'(E,\boldP_l') )
\\
&\Leftrightarrow \exists \boldP_2 \, (F_2(E,\boldP_2') \wedge \cdots \wedge F_l(E,\boldP_l') \wedge F'(E,\boldP_l' ))
\\
&\Leftrightarrow \forall P_1\, (F_1(P_1) \to \exists \boldP_2 \, (F_2(E,\boldP_2') \wedge \cdots \wedge F_l(E,\boldP_l') \wedge F'(E,\boldP_l')))
\\
&\Leftrightarrow
\eqref{eq:thm:private.recursion.2c},
\end{align*}
and the result holds.
For the second-to-last equivalence, note that $F_1(P_1)$ is satisfiable and that $P_1$~does not occur on the right-hand side of the implication.
\end{proof}

\begin{proof}[Proof of Theorem~\ref{th3}]
Recall that io-program~$\Omega$ uses private recursion if
\begin{itemize}
	\item its predicate dependency graph has a cycle such that every vertex in it is a private symbol or
	\item it includes a choice rule with a private symbol in the head.
\end{itemize}
This implies that, for a program that does not use private recursion, there is a private predicate symbol that does not depend on any other private predicate symbol.
Let us assume without loss of generality that this is the predicate symbol $p_1/n_1$.
Then,
there is a predicate symbol that does not depend on any other private predicate symbol other than $p_1/n_1$, which we assume to be the predicate symbol $p_2/n_2$,
and so on.
Therefore,
we have an order on the private symbols $p_1/n_1,\ldots,p_l/n_l$
such that each predicate symbol $p_i/n_i$ only depends on other predicate symbols that precede them in this order.
Then, the completed definition $F_i(\boldP_i)$ of any private predicate symbol $p_i/n_i$ can be written as
\begin{align*}
	\forall V_1 \ldots V_{n_i}  \, (P_i(V_1, \dots, V_{n_i}) \lrar G_i(\boldP_{i-1})),
\end{align*}
where $\boldP_{i-1} = P_1,\ldots,P_{i-1}$ contains all free predicate variables in~$G_i$
and where we assume that ${G_1(\boldP_0)}$ is a first-order formula.
Then, \eqref{eq:private.recursion.existencial} and~\eqref{eq:private.recursion.universal} can be, respectively, rewritten as \eqref{eq:1:lem:induction:thm3} and~\eqref{eq:2:lem:induction:thm3}.
The result follows then directly from Lemma~\ref{lem:induction:thm3}.
\end{proof}

\end{document}